\global\long\def\mem{\mathtt{mem}}
\global\long\def\mate{\mathtt{mate}}
\global\long\def\match{\mathtt{match}}
\global\long\def\eval{\mathtt{eval}}
\global\long\def\emd{\mathrm{EMD}}
\global\long\def\tv{\mathrm{TV}}
\global\long\def\polylog{\mathrm{polylog}}
\global\long\def\sizeIn{\delta_{\mathrm{in}}}
\global\long\def\sizeOut{\delta_{\mathrm{out}}}
\global\long\def\qmsndap{QMSNDAP }
\global\long\def\qmf{QMF}
\newtheorem{theorem}{Theorem}
\newtheorem*{theorem*}{Theorem}
\newtheorem{lemma}{Lemma}[section]
\newtheorem{definition}[lemma]{Definition}
\theoremstyle{definition}
\newtheorem{observation}[theorem]{Observation}
\theoremstyle{plain}
\newcommand{\ld}{\left}
\newcommand{\rd}{\right}
\newcommand{\cond}{\;|\;}
\newcommand{\inp}{\texttt{in}}
\newcommand{\out}{\texttt{out}}
\newcommand{\augment}{\texttt{Augment}}
\newcommand{\augmenteligible}{\texttt{AugmentEligible}}
\newcommand{\largematch} {\texttt{LargeMatching}}
\newcommand{\largematchforward}{\texttt{LargeMatchingForward}}
\newcommand{\approxmatch} {\texttt{ApproximateMatching}}
\newcommand{\opt}{\texttt{OPT}}
\newcommand{\alg}{\texttt{ALG}}
\newcommand{\cE}{\mathcal E}
\newcommand{\bbZ}{\mathbb Z}
\newcommand{\bbR}{\mathbb R}
\newcommand{\cP}{\mathcal P}
\newcommand{\cQ}{\mathcal Q}
\newcommand{\cF}{\mathcal F}
\newcommand{\cV}{\mathcal V}
\newcommand{\cA}{\mathcal A}
\newcommand{\cC}{\mathcal C}
\newcommand{\cR}{\mathcal R}
\newcommand{\cT}{\mathcal T}
\newcommand{\cM}{\mathcal M}
\newcommand{\one}{\mathbbm 1}
\newcommand{\wbar}{\bar w}
\newcommand{\Mbar}{\widebar M}
\newcommand{\Gbar}{\widebar G}
\newcommand{\Vbar}{\widebar V}
\newcommand{\Ebar}{\widebar E}
\newcommand{\cbar}{\bar c}
\newcommand{\Otil}{\tilde{O}}
\newcommand{\E}{\textnormal{E}}
\newcommand{\Var}{\textnormal{Var}}
\title{Approximate Earth Mover's Distance in Truly-Subquadratic Time}
\author[1]{Lorenzo Beretta}
\author[2]{Aviad Rubinstein}
\affil[1]{BARC, University of Copenhagen, \texttt{lorenzo2beretta@gmail.com}}
\affil[2]{Stanford University,
\texttt{aviad@cs.stanford.edu}}
\date{}
\begin{document}
\maketitle

\begin{abstract}
We design an additive approximation scheme for estimating the cost of the min-weight bipartite matching problem: given a  bipartite graph with non-negative edge costs and $\varepsilon > 0$, our algorithm estimates the cost of matching all but $O(\varepsilon)$-fraction of the vertices in truly subquadratic time $O(n^{2-\delta(\varepsilon)})$. 

\begin{itemize}
    \item Our algorithm  has a natural interpretation for computing the Earth Mover's Distance (EMD), up to a $\varepsilon$-additive approximation. Notably, we make no assumptions about the underlying metric (more generally, the costs do not have to satisfy triangle inequality). Note that compared to the size of the instance (an arbitrary $n \times n$ cost matrix), our algorithm runs in {\em sublinear} time.
    \item Our algorithm can approximate a slightly more general problem: max-cardinality bipartite matching with a knapsack constraint, where the goal is to maximize the number of vertices that can be matched up to a total cost $B$.
\end{itemize}
\end{abstract}
\section{Introduction}

{\em Earth Mover's Distance} (EMD - sometimes also Optimal Transport, Wasserstein-$1$ Distance or Kantorovich–Rubinstein Distance) is perhaps the most important and natural measure of similarity between probability distributions over elements of a metric space \cite{villani2009optimal, santambrogio2015optimal, peyre2019computational}. Formally, given two probability distributions $\mu$ and $\nu$ over a metric space $(\cM, d)$ their EMD is defined as

\begin{align}
\label{eq:emd objective}
    \emd(\mu, \nu) = \min \ld\{ \E_{(x, y) \sim \zeta}[d(x, y)] \;\Big|\; \zeta \text{ is a coupling\footnotemark of $\mu$ and $\nu$} \rd\}.
\end{align}
\footnotetext{A distribution $\zeta$ over $\cM^2$ is a coupling of $\mu$ and $\nu$ if 
$\mu(x) = \int_\cM \zeta(x, y) \,dy$ and $\nu(y) = \int_\cM \zeta(x, y) \,dx$.}

When $\mu$ and $\nu$ are discrete distributions with 
support size $n$ (perhaps after a discretization preprocessing), a straightforward algorithm for estimating their EMD is to sample $\Theta(n)$ elements from each, compute all $\Theta(n^2)$ pairwise distances, and then compute a bipartite min-weight perfect matching. 
This algorithm clearly takes at least $\Omega(n^2)$ time (even ignoring the computation of the matching), and incurs a small additive error due to the sampling.

Our main result is an asymptotically faster algorithm for estimating the EMD:

\begin{restatable}[Main Theorem]{theorem}{MainThmIntroEMD}
\label{thm:main theorem intro EMD}
Suppose we have sample access to two distributions $\mu,\nu$ over metric space $(\cM, d)$ satisfying $d(\cdot, \cdot) \in [0,1]$ and query access to $d$. Suppose further that $\mu,\nu$ have support size at most $n$. 

For each constant $\gamma > 0$ there exists a constant $\varepsilon > 0$ and an algorithm running in time $O(n^{2 - \varepsilon})$ that outputs $\widehat{\emd}$ such that
\[
 \widehat \emd \in [\emd(\mu, \nu) \pm \gamma].
\]
Moreover, such algorithm takes $\tilde O(n)$ samples from $\mu$ and $\nu$.
\end{restatable}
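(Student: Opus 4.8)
The plan is to reduce the continuous/metric EMD problem to an additive approximation of min-weight bipartite matching on an explicit cost matrix, and then invoke the subquadratic matching-estimation algorithm that (one expects) forms the technical core of the paper. Concretely: sample $m = \tilde O(n)$ points $X = \{x_1,\dots,x_m\}$ from $\mu$ and $m$ points $Y = \{y_1,\dots,y_m\}$ from $\nu$. Form the bipartite graph on $X \cup Y$ with edge cost $c(x_i,y_j) = d(x_i,y_j) \in [0,1]$, obtained by the assumed query access to $d$. Let $\widehat{\emd}$ be (a scaling of) the min-weight perfect matching cost on this sampled graph, except that we only need to match a $(1-O(\varepsilon))$-fraction of the vertices, which is exactly the regime handled by the main matching theorem in time $O(m^{2-\delta}) = O(n^{2-\delta})$.

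**The key steps, in order.** First, a \emph{discretization/sampling lemma}: show that with $m = \tilde O(n/\gamma^2)$ i.i.d.\ samples from each distribution, the empirical optimal transport cost between the uniform distributions on $X$ and $Y$ is within $\gamma/3$ of $\emd(\mu,\nu)$ with high probability. For the ``$\le$'' direction this is immediate by plugging the true optimal coupling $\zeta$ into the empirical problem (the empirical matching cost is an unbiased-up-to-sampling estimate of a feasible transport plan, and concentrates since $d \le 1$); for the ``$\ge$'' direction one uses that $\mu$ and $\nu$ have support size $\le n$, so with $\tilde O(n)$ samples the empirical distributions are, with high probability, close to $\mu,\nu$ in a sense strong enough (e.g.\ total-variation on the support, or a direct coupling argument) that any empirical transport plan can be ``lifted'' to a transport plan between $\mu$ and $\nu$ of essentially the same cost. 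Second, invoke \Cref{thm:main theorem intro EMD}'s underlying matching routine: given the $m \times m$ cost matrix with entries in $[0,1]$ and the slack parameter $\varepsilon$, it returns in $O(m^{2-\varepsilon})$ time an estimate of the cost of matching all but an $O(\varepsilon)$-fraction of vertices, up to additive $O(\varepsilon)$. Third, \emph{choose parameters}: set $\varepsilon$ small enough (as a function of $\gamma$) that the $O(\varepsilon)$-fraction of unmatched vertices contributes at most $\gamma/3$ to the cost discrepancy (again using $d \le 1$, so dropping an $\alpha$-fraction of mass changes the transport cost by at most $\alpha$) and the additive matching error is at most $\gamma/3$; then $\delta(\varepsilon) > 0$ is the exponent saving, and the total error telescopes to $\gamma$.

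**The main obstacle** is the ``$\ge$'' direction of the sampling lemma together with the partial-matching slack: we must argue that throwing away an $O(\varepsilon)$-fraction of the sampled points (as the subquadratic matching algorithm is allowed to do) does not let the estimate drop artificially far below the true EMD. This needs the bound $d \le 1$ in an essential way — it caps the ``repair cost'' of re-including the discarded mass at $O(\varepsilon) = O(\gamma)$ — and it needs the support-size-$n$ hypothesis so that $\tilde O(n)$ samples actually pin down the empirical marginals. One subtlety worth flagging: the matching algorithm estimates the cost of \emph{some} near-maximum partial matching, not necessarily one that is ``balanced'' across the two sides, so the lifting argument in the sampling lemma should be stated for arbitrary partial transport plans (match $\ge (1-O(\varepsilon))m$ units of mass), which is exactly the formulation of the knapsack/partial-matching variant advertised in the abstract; modulo that, the composition of the three steps is routine and yields the claimed $O(n^{2-\varepsilon})$ runtime and $\tilde O(n)$ sample complexity.
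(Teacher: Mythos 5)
Your proposal is correct and follows essentially the same route as the paper: sample $\tilde O(n)$ points from each distribution, use the bounded support to argue the empirical distributions are TV-close (hence EMD-close, since $d\le 1$) to $\mu,\nu$, run the subquadratic outlier-tolerant matching estimator on the sampled bipartite cost matrix, and absorb the $O(\varepsilon)$-fraction of unmatched vertices into an $O(\varepsilon)$ additive cost error via $d\le 1$. The paper packages your two-directional ``sampling lemma'' slightly more cleanly as a single bound $\emd(\mu,\hat\mu)\le\gamma$ combined with the triangle inequality for EMD, but the content is the same.
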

Notably, our algorithm makes no assumption about the structure of the underlying metric. In fact, it can be an arbitrary non-negative cost function, i.e.~we do not even assume triangle inequality. 

\paragraph{Beyond bounded support size.}
Support size is a brittle matter; indeed two distributions that are arbitrarily close in total variation (TV) distance (or EMD) can have completely different support size. Moreover, for continuous distributions, the notion of support size is clearly inappropriate and yet we would like to compute their EMD through sampling.
To obviate this issue, \Cref{thm:main theorem intro EMD generalized} generalize
\Cref{thm:main theorem intro EMD} to distributions that are \emph{close} in EMD to some distributions with support size $n$.

\begin{restatable}{corollary}{MainThmIntroEMDGen}
\label{thm:main theorem intro EMD generalized}
Suppose we have sample access to two distributions $\mu,\nu$ over metric space $(\cM, d_\cM)$ satisfying $d(\cdot, \cdot) \in [0,1]$ and query access to $d$. Suppose further that there exist $\mu',\nu'$ with support size $n$ such that $\emd(\mu, \mu'), \emd(\nu, \nu') \leq \xi$, for some $\xi > 0$. 

For each constant $\gamma > 0$ there exists a constant $\varepsilon > 0$ and an algorithm running in time $O(n^{2 - \varepsilon})$ that outputs $\widehat{\emd}$ such that
\[
 \widehat \emd \in [\emd(\mu, \nu) \pm (4\xi + \gamma)].
\]
Moreover, such algorithm takes $\tilde O(n)$ samples from $\mu$ and $\nu$.
\end{restatable}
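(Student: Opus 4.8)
The plan is to reduce to \Cref{thm:main theorem intro EMD} by running it on \emph{empirical} distributions. Draw $x_1,\dots,x_m \sim \mu$ and $y_1,\dots,y_m \sim \nu$ i.i.d.\ for $m = \tilde O(n)$ (with the hidden polylogarithmic factor and the dependence on the constant $\gamma$ chosen large enough, see below), set $\hat\mu = \tfrac1m\sum_{i} \delta_{x_i}$ and $\hat\nu = \tfrac1m\sum_{i}\delta_{y_i}$, run the algorithm of \Cref{thm:main theorem intro EMD} on $(\hat\mu,\hat\nu)$ with accuracy parameter $\gamma/5$, and return its output $\widehat\emd$. This is legitimate: $\hat\mu,\hat\nu$ have support size at most $m = \tilde O(n)$, sampling from them merely re-uses the stored $x_i,y_j$, and query access to $d$ on $\cM$ answers every distance query among them; hence the call runs in time $O(m^{2-\varepsilon}) = O(n^{2-\varepsilon/2})$ and the only fresh samples drawn from $\mu,\nu$ are the $2m = \tilde O(n)$ used to build $\hat\mu,\hat\nu$.

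It remains to control the error. The key claim is that with high probability $\emd(\hat\mu,\mu) \le 2\xi + \tfrac{2\gamma}{5}$ and, symmetrically, $\emd(\hat\nu,\nu) \le 2\xi + \tfrac{2\gamma}{5}$. Granting this, the triangle inequality for $\emd$ — valid precisely because $d$ is a metric here, via gluing of couplings — gives
\[
\bigl|\emd(\hat\mu,\hat\nu) - \emd(\mu,\nu)\bigr| \le \emd(\hat\mu,\mu) + \emd(\hat\nu,\nu) \le 4\xi + \tfrac{4\gamma}{5},
\]
and combining with the $\tfrac{\gamma}{5}$ guarantee of \Cref{thm:main theorem intro EMD} on $(\hat\mu,\hat\nu)$ yields $\widehat\emd \in [\emd(\mu,\nu) \pm (4\xi+\gamma)]$, as required.

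To prove the claim I use a coupling argument. Fix an optimal coupling $\zeta$ of $\mu$ and $\mu'$, so $\E_{(x,x')\sim\zeta}[d(x,x')] \le \xi$, and realize each draw $x_i\sim\mu$ as the first coordinate of a pair $(x_i,x_i')\sim\zeta$; write $\hat\mu' = \tfrac1m\sum_i \delta_{x_i'}$, which is an empirical distribution of $\mu'$. Then: (i) $\tfrac1m\sum_i\delta_{(x_i,x_i')}$ is a coupling of $\hat\mu$ and $\hat\mu'$, so $\emd(\hat\mu,\hat\mu') \le \tfrac1m\sum_i d(x_i,x_i')$, an average of $m$ i.i.d.\ $[0,1]$-valued variables of mean $\le\xi$, hence $\le \xi + \tfrac{\gamma}{5}$ by Hoeffding; (ii) since $\mu'$ is supported on $n$ atoms and $d\in[0,1]$, we have $\emd(\hat\mu',\mu') \le \|\hat\mu'-\mu'\|_{\tv}$ (couple the shared mass $\min(\hat\mu',\mu')$ to itself at zero cost, the rest costs $\le 1$), and $\E\|\hat\mu'-\mu'\|_{\tv} \le \tfrac12\sum_{a}\sqrt{\mu'(a)/m} \le \tfrac12\sqrt{n/m}$ by Cauchy–Schwarz over the $n$ atoms $a$, so for $m = \tilde\Theta(n/\gamma^2)$ and using McDiarmid's bounded-differences inequality (moving one $x_i'$ changes $\|\hat\mu'-\mu'\|_{\tv}$ by at most $1/m$) we get $\|\hat\mu'-\mu'\|_{\tv} \le \tfrac{\gamma}{5}$ w.h.p.; and (iii) $\emd(\mu',\mu)\le\xi$ by hypothesis. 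Summing (i)–(iii) with the triangle inequality gives $\emd(\hat\mu,\mu)\le 2\xi+\tfrac{2\gamma}{5}$; the same argument applies to $\nu$, and a union bound over the $O(1)$ failure events closes the proof.

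The crux is the combination of steps (i) and (ii): one might worry that $\tilde O(n)$ samples cannot approximate $\mu$ in $\emd$ when $\mu$ has enormous or infinite support, and indeed in general they cannot. The resolution is that we never need $\hat\mu$ close to $\mu$ directly — only close to the \emph{bounded-support} surrogate $\mu'$, for which $\tilde O(n)$ samples suffice by ordinary empirical convergence in $\tv$ (and $\tv$ dominates $\emd$ since $d\le1$); the optimal coupling $\zeta$ then transfers this back to $\mu$ at an additional additive cost of $2\xi$, which is exactly the origin of the factor $4\xi$ in the statement. Everything else is routine bookkeeping: fixing the polylogarithmic/$\gamma$-dependence in $m$ so the Hoeffding and McDiarmid tails hold with probability $1-1/\poly(n)$, and absorbing the support-size blow-up from $n$ to $\tilde O(n)$ into the exponent via $(n\,\polylog n)^{2-\varepsilon} = O(n^{2-\varepsilon/2})$.
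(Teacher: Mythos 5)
Your proposal is correct and follows essentially the same route as the paper: sample $\tilde O(n)$ points, couple each sample of $\mu$ to a sample of the surrogate $\mu'$ via an optimal transport plan, bound $\emd(\hat\mu,\hat\mu')$ by concentration of the average coupled distance and $\emd(\hat\mu',\mu')$ by empirical TV convergence (using $\emd\le\tv$ since $d\le 1$), and finish with the triangle inequality for $\emd$ and the main matching algorithm on the empirical distributions. The only cosmetic differences are that you invoke \Cref{thm:main theorem intro EMD} as a black box on $(\hat\mu,\hat\nu)$ where the paper calls \Cref{thm:main theorem mwm} directly on the bipartite sample graph, and your concentration bounds are via Hoeffding/McDiarmid rather than the paper's variance/Chernoff arguments.
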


For continuous $\mu$, requiring that $\mu$ is close in EMD to a distribution with bounded support size is equivalent to saying that $\mu$ can be discretized effectively for $\emd$ computation. Thus, such assumption is natural while computing $\emd$ between continuous distribution through discretization.

We stress that the algorithm in \Cref{thm:main theorem intro EMD generalized} does not assume knowledge of $\mu'$ (nor $\nu'$) beyond its support size $n$. Indeed, the empirical distribution over $\tilde O(n)$ samples from $\mu$ (resp. $\nu$) makes a good approximation in $\emd$. 
Finally, the sample complexity in \Cref{thm:main theorem intro EMD} and \Cref{thm:main theorem intro EMD generalized} is optimal, up to $\polylog(n)$ factors. Indeed, Theorem 1 in \cite{valiant2010clt} implies a lower bound of $\tilde \Omega(n)$ on the sample complexity of testing EMD closeness\footnote{While deriving the lower bound from \cite{valiant2010clt} takes some work, Remark 5.13 in \cite{canonne2020survey} explicitly states a $\Omega(n / \log n)$ lower bound for TV closeness testing.}.

\paragraph{Matching with knapsack constraint.}
Applying our main algorithm to a graph-theory setting, we give an approximation scheme for a knapsack bipartite matching problem, where our goal is to estimate the number of vertices that can be matched subject to a total budget constraint. 

\begin{restatable}[Main theorem, graph interpretation]{theorem}{MainThmIntroGraph}
\label{thm:main theorem intro graph}
For each constant $\gamma > 0$, there exists a constant $\varepsilon > 0$, and an algorithm running  in time $O(n^{2 - \varepsilon})$ with the following guarantees. The algorithm takes as input a budget $B$, and query access to the edge-cost matrix of an  undirected, bipartite graph $G$ over $n$ vertices. The algorithm returns an estimate $\widehat{M}$ that is within $\pm \gamma n$ of the size of the maximum matching in $G$ with total cost at most $B$.
\end{restatable}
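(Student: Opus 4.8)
The plan is to derive \Cref{thm:main theorem intro graph} from \Cref{thm:main theorem intro EMD} by reducing the budgeted‑cardinality question to a short, convexity‑guided search that repeatedly calls the EMD estimator. Throughout I assume (as in the body) that costs lie in $[0,1]$ and that $G$ is the complete bipartite graph on sides $L,R$ with $|L|=|R|=N\le n$ (general costs and incomplete graphs are handled by a rescaling I suppress, and $B=0$ is the trivial special case of a zero‑cost maximum matching). Set $f(k):=\min\{\cost(M):|M|=k\}$ for $0\le k\le N$; a standard augmenting‑path exchange (from size‑$(k\mp1)$ matchings, take an augmenting path $P$ in their symmetric difference; the two size‑$k$ matchings $M_{k-1}\triangle P$ and $M_{k+1}\triangle P$ have total cost $\cost(M_{k-1})+\cost(M_{k+1})$) shows $f$ is convex and nondecreasing with $f(0)=0$. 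The target $K(B)=\max\{k:f(k)\le B\}$ equals $\lfloor x^\star\rfloor$, where $x^\star:=\max\{x\in[0,N]:\bar f(x)\le B\}$ and $\bar f$ is the piecewise‑linear interpolation of $f$; so it suffices to estimate $x^\star$ to additive $O(\gamma n)$, for which it suffices to be able to evaluate $f$ approximately at a prescribed $k$.

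To evaluate $f(k)$ with one EMD call, pad each side with $N-k$ dummy vertices joined to everything by cost‑$0$ edges, obtaining a complete bipartite graph $G_k$ on $2N-k\le 2n$ vertices per side with costs in $[0,1]$; since at most $N-k$ real vertices per side can be absorbed by dummies, a min‑cost perfect matching of $G_k$ uses at least $k$ (hence, by monotonicity of $f$, effectively exactly $k$) real‑to‑real edges, so $\mathrm{MCM}(G_k)=f(k)$. Equip the vertex set of $G_k$ with the metric $d(x,y)=\tfrac12\bigl(1+\cost_{G_k}(x,y)\bigr)$ across sides and $d(x,y)=\tfrac12$ within a side (all nonzero values lie in $[\tfrac12,1]$, so $d$ is a metric, and $d\in[0,1]$); as the optimal transport between the two uniform distributions $\mu,\nu$ on the two sides is a perfect matching (Birkhoff), $\emd(\mu,\nu)=\tfrac12+\tfrac1{2(2N-k)}f(k)$. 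Sampling from $\mu,\nu$ and querying $d$ cost nothing beyond one cost‑matrix query each, so \Cref{thm:main theorem intro EMD} returns $\widehat f(k)$ with $|\widehat f(k)-f(k)|\le\gamma' n$ in time $O(n^{2-\varepsilon})$, for a constant $\varepsilon=\varepsilon(\gamma')$.

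With this oracle I would locate $x^\star$ by a convexity‑guided binary search. Given a probe $b$ with $\widehat f(b)\le B$ and a slope estimate $s$ of $f$ just left of $b$ (from the previous probe; initially the chord from $(0,0)$), convexity gives $f(x)\ge f(b)+s(x-b)$, hence $x^\star$ lies in the explicitly computable interval $\bigl[b,\ b+(B-\widehat f(b)+O(\gamma'n))/s\bigr]$; bisecting and re‑querying $f$ at the midpoint roughly halves its length, so after $O(\log(1/\gamma))$ probes the interval has length $O(\gamma n)$ — provided the relevant slope $s$ is $\Omega(\gamma)$, which holds because $f(0)=0$ forces $f(t)/t$ to be nondecreasing, so the left slope of $f$ at $x^\star$ is at least $f(x^\star)/x^\star\ge B/N$, and this is $\Omega(\gamma)$ once $B=\Omega(\gamma n)$. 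Outputting the resulting estimate $\widehat M$ of $x^\star$ and using $K(B)=\lfloor x^\star\rfloor$, the errors — the $\pm1$ from the floor, the $\pm\gamma'n$ per probe, and the $O(\gamma n)$ search residual — compose to $\pm\gamma n$ with $\gamma'$ a suitable constant (e.g.\ $\Theta(\gamma^2)$) multiple of $\gamma$; since only $O_\gamma(1)$ calls are made, each at constant accuracy, the total running time is $O(n^{2-\varepsilon})$.

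The step I expect to be the real obstacle is making this search robust to the $\pm\gamma'n$ oracle error while using only a \emph{constant} accuracy parameter $\gamma'$: because the EMD running‑time exponent degrades as $\gamma'\to0$, one cannot afford $\omega(1)$ probes nor a $\gamma'$ that shrinks with $n$. The lever is exactly the convexity of $f$ together with $f(0)=0$, which confines the ``ambiguous zone'' $\{x:|f(x)-B|\le\gamma'n\}$ to width $O(\gamma n)$ as long as $B=\Omega(\gamma n)$; the complementary regime $B=O(\gamma n)$ (where the answer is genuinely delicate — it need not be close to any maximum matching of a subgraph) requires a separate argument and is where the bookkeeping should be heaviest. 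Beyond this the reduction is routine: all of the depth of \Cref{thm:main theorem intro graph} sits in the subquadratic EMD estimator of \Cref{thm:main theorem intro EMD}.
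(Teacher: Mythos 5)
Your route is genuinely different from the paper's. The paper proves this theorem in essentially one line: binary search over the cardinality fraction $\xi$ using \Cref{thm:main theorem mwm} as the subroutine, whose guarantee $c(M^\alpha)\le\hat c\le c(M^\beta)$ carries \emph{no additive cost error} --- all of the slack is in the cardinality direction --- so the test ``$\hat c\le B$?'' certifies either $K(B)\ge\alpha n$ or $K(B)<\beta n$ exactly, for \emph{every} budget $B$. You instead reduce to the EMD estimator of \Cref{thm:main theorem intro EMD}, which after unnormalizing only yields $f(k)\pm\gamma' n$ for a \emph{constant} $\gamma'$ (the exponent in that theorem degrades as $\gamma'\to 0$, so you cannot let $\gamma'$ shrink with $n$). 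Your padding/metric gadget and the convexity of $f$ are both correct, and the ambiguous-zone analysis is fine when $B=\Omega(\gamma n)$.

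The gap is the small-budget regime that you defer to ``a separate argument,'' and it is not bookkeeping: with only a $\pm\gamma'n$-additive oracle for $f$, the statement is information-theoretically unreachable for $B=O(\gamma'n)$. Take the complete bipartite graph with (a) all costs $0$, so $f_1\equiv 0$, versus (b) all costs $\gamma'$, so $f_2(k)=\gamma'k$. Then $\sup_k|f_1(k)-f_2(k)|\le\gamma'n$, so an adversarial oracle may answer $\hat f(k)=\gamma'k/2$ in both cases, consistent with its guarantee; yet for $B=\gamma'n/2$ we have $K_1(B)=n$ while $K_2(B)=\lfloor n/2\rfloor$, so no output can be within $\pm\gamma n$ of both once $\gamma<1/4$. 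Since your algorithm accesses the instance only through this oracle, the deferred case cannot be carried out inside your framework. The fix is exactly what the paper does: use the cost-exact, cardinality-relaxed guarantee of \Cref{thm:main theorem mwm} (or, equivalently, re-derive such a guarantee by rescaling the cost axis to the budget), under which plain binary search over $\xi$ handles arbitrary $B$.
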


\subsection{Relaed Work}

Computing EMD is an important problem in machine learning \cite{peyre2019computational} with some exemplary applications in computer vision \cite{rubner2000earth, arjovsky2017wasserstein, solomon2015convolutional} and natural language processing \cite{kusner2015word, yurochkin2019hierarchical}. See \cite{peyre2019computational} for a comprehensive overview. 

\paragraph{Exact solution.}
Computing EMD between two sets of $n$ points boils down to computing the minimum cost of a perfect matching on a bipartite graph, a problem with a $70$-years history \cite{kuhn1955hungarian}. Min-weight bipartite perfect matching can be cast as a min-cost flow (MCF) instance and to date we can solve it in $n^{2 + o(1)}$ time (namely, near-linear in the size of the distance matrix) \cite{chen2022maximum}.
Apparently, any exact algorithm requires inspecting the entire distance matrix, thus $\Theta(n^2)$ time is the best we can hope for. 
In addition, even in $d$-dimensional Euclidean space, where the input has size $d \cdot n \ll n^2$, no $O(n^{2-\varepsilon})$ algorithm exists\footnote{The lower bound in \cite{rohatgi2019conditional} holds in dimension $d = 2^{\Omega(\log^* n)}$.}, unless SETH is false \cite{rohatgi2019conditional}.

\paragraph{Multiplicative approximation.}
A significant body of work has investigated multiplicative approximation of EMD
\cite{charikar2002similarity, indyk2003fast, chen2022new, agarwal2022deterministic, agarwal2014approximation, andoni2023sub, andoni2008earth, andoni2009efficient, andoni2014parallel}, where the most commonly studied setting is the Euclidean space (or, more generally, $\ell_p$). 
If the dimension is constant we have near-linear time approximation schemes \cite{sharathkumar2012near, andoni2014parallel, agarwal2022deterministic, fox2022deterministic}, whereas the high-dimensional case is more challenging.
Only recently \cite{andoni2023sub} broke the $O(n^2)$ barrier for $(1+\varepsilon)$-approximation of EMD, building on \cite{har2013euclidean}.

The landscape is much less interesting for general metrics. Indeed, a straightforward counterexample from \cite{buadoiu2005facility} shows that any $O(1)$-approximation requires $\Omega(n^2)$ queries to the distance matrix. This suggests that for general metrics we should content ourselves with a additive approximation. 

\paragraph{Additive approximation.}
Additive approximation for EMD has been extensively studied by optimization and machine learning communities \cite{altschuler2017near, blanchet2018towards, dvurechensky2018computational, luo2023improved, cuturi2013sinkhorn, le2021robust, pham2020unbalanced}.

An extremely popular algorithm to solve optimal transport in practice is Sinkhorn algorithm \cite{cuturi2013sinkhorn} (see \cite{le2021robust, pham2020unbalanced} for recent work). Sinkhorn distance SNK is defined by adding an entropy regularization term $-\eta \cdot H(\zeta)$ to the EMD objective in \Cref{eq:emd objective}. Approximating SNK via Sinkhorn algorithm provably yields a $\varepsilon r$-additive approximation to EMD and takes $O_\varepsilon(n^2)$ time, where $r$ is the dataset diameter \cite{altschuler2017near}.

Graph-theoretic approaches also led to $\varepsilon r$-additive approximations \cite{lahn2019graph} in $O_\varepsilon(n^2)$ time. Notice that even though all previous approximation algorithms have roughly the same complexity as the MCF-based exact solution they are backed by experiments showing their practicality, whereas exact algorithms for EMD are still largely impractical for very large graphs.

\paragraph{Breaking the $O(n^2)$ barrier for general metrics.}
As mentioned above, \cite{andoni2023sub} was the first work to break the quadratic barrier for approximate EMD. Indeed, they show a $(1+\varepsilon)$-multiplicative approximation algorithm for EMD on Euclidean space running in $n^{2-\Omega_\varepsilon(1)}$ time. 
Matching such result on general metrics is impossible, since no $O(1)$-multiplicative approximation can be achieved in $o(n^2)$ time \cite{buadoiu2005facility}. 
A natural way to bypass the lower bound in \cite{buadoiu2005facility} is to consider additive approximation. 
However, no $\varepsilon$-additive approximation algorithm for EMD on general metrics faster than $O_\varepsilon(n^2)$ barrier was known prior to this work.
\Cref{thm:main theorem intro EMD} gives the first $\varepsilon$-additive approximation to EMD for general metrics running in $n^{2 - \Omega_\varepsilon(1)}$ time, thus breaking the quadratic barrier for general metrics. 

We stress that, despite \cite{andoni2023sub} and this work both prove similar results, they use a completely different set of techniques. Indeed, in \cite{andoni2023sub} they approximate the complete bipartite weighted graph induced by Euclidean distances with a $(1+\varepsilon)$-multiplicative spanner of size $n^{2-\Omega_\varepsilon(1)}$. 
Their spanner construction is based on LSH and so it hinges on the Euclidean structure. Then, they run a near-linear time MCF solver \cite{chen2022maximum} to solve the matching problem on the metric induced by the spanner. In this work, instead, we build on sublinear algorithms for max-cardinality matching \cite{behnezhad2022time, behnezhad2023beating, behnezhad2023sublinear, BKS23a, bhattacharya2023dynamic} and do not leverage any metric property, not even triangle inequality. \Cref{sec:tech overview} contains a detailed explanation of our techniques.

It is worth to notice that since \cite{andoni2023sub} operates over $d$-dimensional Euclidean space the input representation takes $d\cdot n$ space, and so it \emph{does not} run in sublinear time. On the contrary, our algorithm assumes query access to the distance matrix and runs in sublinear time.

\paragraph{Sublinear algorithms.}
Most previous work in sublinear models of computation focuses on streaming Euclidean EMD \cite{chen2022new, andoni2009efficient, andoni2008earth, backurs2020scalable, indyk2004algorithms, charikar2002similarity}, where the latest work \cite{chen2022new} achieves $\tilde O(\log n)$-approximation in polylogarithmic space.
Some other work \cite{ba2011sublinear} addresses the sample complexity of testing $\emd$ on low-dimensional $\ell_1$.

In this work we focus on a different access model: we do not make any assumption on the ground metric and we assume query access to the distance matrix. This model is natural whenever the underlying metric is expensive to evaluate. For example, in  \cite{abeywickrama2021optimizing} they consider EMD over a shortest-path ground metric and experiment with heuristics to avoid computing all-pair distances, which would be prohibitively expensive.

\paragraph{Comparison with MST.}
Minimum Spanning Tree (MST) and EMD are two of the most studied optimization problems in metric spaces.
It is interesting to observe a separation between the sublinear-time complexity of MST and EMD for general metrics. Indeed, \cite{czumaj2009estimating} shows a $\tilde O_\varepsilon(n)$ time algorithm approximating the \emph{cost} of MST up to a factor $1+\varepsilon$, whereas no $O(1)$-approximation for EMD can be computed in $o(n^2)$ time \cite{buadoiu2005facility}. Essentially, this is due to the fact that MST cost is a more \emph{robust} problem than EMD. Indeed, in EMD increasing a single entry in the distance matrix can increase the EMD arbitrarily, whereas for MST this does not happen because of triangle inequality.

A valuable take-home message from this work is that allowing additive approximation makes EMD more robust. A natural question is whether we can find a $\varepsilon$-additive approximation to EMD in $\tilde O_\varepsilon(n)$ time, thus matching the above result on MST cost. 
The $\Omega(n^{1.2})$ lower bound on max-cardinality matching from~\cite{behnezhad2023sublinear} suggests that this should not be possible%
\footnote{The lower bound of~\cite{behnezhad2023sublinear} is proven in a slightly different model of adjacency list.}
Indeed, we can reduce max-cardinality matching to EMD by embedding the bipartite graph into a $(1,2)$ metric space. 
\section{Technical Overview}
\label{sec:tech overview}

Computing Earth Mover's Distance between two sets of $n$ points in a metric space can be achieved by solving Min-Weight Perfect Matching (MWPM) on the complete bipartite graph where edge-costs are given by the metric $d(\cdot, \cdot)$. Here we seek a suitable notion of approximation for MWPM that recovers \Cref{thm:main theorem intro EMD}.

\paragraph{Min-weight perfect matching with outliers.}
Consider the following problem: given a constant $\gamma > 0$, find a matching $M$ of size $(1-\gamma) n$ in a bipartite graph such that the cost of $M$ is at most the minimum cost of a perfect matching. 
A natural interpretation of this problem is to label a $\gamma$ fraction of vertices as outliers and leave them unmatched; so we dub this problem MWPM \emph{with outliers}.

Assuming $d(\cdot, \cdot) \in [0, 1]$, solving MWPM with a $\gamma$ fraction of outliers immediately yields a $\gamma$ additive approximation to EMD, proving \Cref{thm:main theorem intro EMD}. 

The main technical contribution of this work is the following theorem, which introduces an algorithm that solves MWPM with outliers in sublinear time. For the sake of this overview, the reader should instantiate \Cref{thm:main theorem mwm} with $\beta =1$ and think of $\gamma = (1-\alpha)$ as the fraction of allowed outliers.

\begin{restatable}{theorem}{MainThmMWM} 
\label{thm:main theorem mwm}
For each constants $0 \leq \alpha < \beta \leq 1$ there exists a constant $\varepsilon > 0$ and an algorithm running in time $O(n^{2 - \varepsilon})$ with the following guarantees.

The algorithm has adjacency-matrix access to an undirected, bipartite graph $G=(V_0 \cup V_1, E)$ and random access to the edge-cost function $c:E \rightarrow \bbR^+$. The algorithm returns $\hat{c}$ such that, whp,
\[
c(M^\alpha) \leq \hat c \leq c(M^\beta)
\]
where $M^\alpha$ is a minimum-weight matching of size $\alpha n$ and $M^\beta$ is a minimum-weight matching of size $\beta n$.

Moreover, the algorithm returns a matching oracle data structure that, given a vertex $u$ returns, in $n^{1+f(\varepsilon)}$ time, an edge $(u, v) \in \hat M$ or $\bot$ if $u \not\in V(\hat M)$, where $f(\varepsilon) \rightarrow 0$ when $\varepsilon \rightarrow 0$.
The matching $\hat M$ satisfies $\alpha n \leq |\hat M| \leq \beta n$ and $c(M^\alpha) \leq c(\hat M) \leq c(M^\beta)$.
\end{restatable}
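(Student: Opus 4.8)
The plan is to reduce computing $c(M^k)$ (and producing a near-optimal matching of a flexible size in $[\alpha n,\beta n]$) to approximately solving a small number of \emph{max-weight} bipartite matching instances, and then to reduce each of those to approximate \emph{max-cardinality} matching, for which sublinear-time local algorithms are available from the works we build on \cite{behnezhad2022time,behnezhad2023beating,behnezhad2023sublinear,BKS23a,bhattacharya2023dynamic}. Write $\gamma := \beta-\alpha$ and fix an accuracy parameter $\gamma' = \gamma'(\gamma)$ to be chosen small.

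First I would pass to a Lagrangian reformulation. For a multiplier $\lambda \ge 0$, let $h(\lambda)$ be the value of the maximum-weight matching under the weights $w_e := (\lambda - c_e)^+$. Since $k \mapsto c(M^k)$ is the value of a min-cost flow of value $k$ on an integral network — in particular convex and integer-valued at integer $k$ — LP/Lagrangian duality yields $c(M^k) = \max_{\lambda \in [0,\max_e c_e]}\bigl(\lambda k - h(\lambda)\bigr)$. Moreover every matching $M_\lambda$ that attains $h(\lambda)$ is, among matchings of its own cardinality, of minimum cost, so $c(M_\lambda) = c(M^{|M_\lambda|})$; and $|M_\lambda|$ is non-decreasing in $\lambda$, equals $0$ at $\lambda = 0$ and equals $\nu(G) \ge \beta n$ once $\lambda \ge \max_e c_e$ (at a breakpoint, the cardinalities attainable by optimal $M_\lambda$ form an interval, via augmenting paths between two such matchings). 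Rounding each $c_e$ down to a multiple of $\gamma'$ perturbs $c(M)$ by at most $\gamma' n$ on every matching $M$ and leaves the weights $w_e$ with only $r = O(1/\gamma')$ distinct values; a $\gamma'$-net of multipliers $\lambda \in \{0, \gamma', 2\gamma', \dots, 1\}$ then suffices, since shifting $\lambda$ by $\gamma'$ moves $\lambda k - h(\lambda)$ by at most $2\gamma' n$ (the slope of $h$ is $|M_\lambda| \le n$).

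The hard part, I expect, is Step 2: a sublinear-time routine that, given weights taking $r = O(1/\gamma')$ distinct values $v_1 > \dots > v_r > 0$, estimates $h(\lambda)$ within $\pm\gamma' n$ and returns a local oracle for a matching of weight at least $h(\lambda) - \gamma' n$, all in $n^{2-\Omega_{\gamma'}(1)}$ time with $n^{1+f}$ per query. I would build this from the sublinear approximate-cardinality-matching oracles of \cite{behnezhad2022time,behnezhad2023beating,behnezhad2023sublinear,BKS23a,bhattacharya2023dynamic} applied to the threshold graphs $H_j := \{e : w_e \ge v_j\}$, combined across the $r$ levels. The subtlety is that one cannot simply compute $\nu(H_j)$ and output $\sum_j (v_j - v_{j+1})\nu(H_j)$: this overcounts, because raising an edge from level $j{+}1$ to level $j$ can displace already-matched lighter edges, and the naive layer-by-layer greedy loses $\Omega(n)$. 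The combination has to proceed along (short) weight-increasing augmenting paths so that displacements are charged correctly; since there are only $O(1/\gamma')$ levels and $(1-\gamma')$-type accuracy suffices, it is enough to invoke each cardinality oracle with approximation parameter $\Theta(\gamma'^2)$ and to compose $O(1/\gamma')$ local oracles, which keeps the per-query cost at $n^{1+f}$ (with $f$ small, traded off against how far below $n^2$ the total time is pushed). Getting this composition right — and, if the cited toolbox does not directly give the required cardinality approximation, extending it — is where most of the technical effort lies.

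Finally I would assemble the answer. For each $\lambda$ in the net, run Step 2 and estimate $|M_\lambda|$ and $w(M_\lambda)$ to within $\pm\gamma' n$ by querying the oracle at $\tilde O(1)$ uniformly random vertices (each query reads one edge cost). Setting $\widetilde c(k) := \max_\lambda\bigl(\lambda k - \widetilde h(\lambda)\bigr)$, Step 1 makes this an estimate of $c(M^k)$ accurate to $\pm O(\gamma' n)$ for every $k$. Since $c(M^k) - c(M^{k-1}) \le \max_e c_e \le 1$, the sequence $c(M^{\alpha n}), \dots, c(M^{\beta n})$ increases in unit steps, so whenever $c(M^{\beta n}) - c(M^{\alpha n})$ exceeds the $O(\gamma' n)$ error bar there is a $k^\dagger \in [\alpha n, \beta n]$ with $c(M^{k^\dagger})$ at distance $\Omega(\gamma' n)$ from both endpoints, and $\hat c := \widetilde c(k^\dagger)$ then lies in $[c(M^{\alpha n}), c(M^{\beta n})]$; the complementary regime, where the whole interval has length $O(\gamma' n)$ (equivalently $c(M^{\beta n}) = O(\gamma' n)$), I would handle separately by returning the cost of an explicit, slightly-truncated near-cheapest matching whose cardinality is in range and whose cost provably does not exceed $c(M^{\beta n})$. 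For the matching oracle, return the Step-2 oracle for a $\lambda$ whose estimated matching size sits safely inside $[\alpha n, \beta n]$, post-composed with a deterministic trimming rule to force $|\hat M| \le \beta n$; queries cost $n^{1+f}$. Choosing $\gamma'$ as a function of $\gamma=\beta-\alpha$ and then $\varepsilon$ small enough that $O_{\gamma'}(1)\cdot n^{2-\Omega_{\gamma'}(1)} = O(n^{2-\varepsilon})$, and taking a union bound over the $O_{\gamma'}(1)$ randomized calls and the $\tilde O(1)$ samples, gives the claimed high-probability guarantees.
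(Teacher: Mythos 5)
Your Lagrangian reformulation is a genuinely different route from the paper (which implements a single scale of the Gabow--Tarjan primal--dual algorithm in sublinear time), and the duality identity $c(M^k)=\max_\lambda\bigl(\lambda k - h(\lambda)\bigr)$ is correct by convexity of $k\mapsto c(M^k)$. The problem is that the proposal has a genuine gap exactly at its center: the entire difficulty of the theorem is concentrated in your ``Step 2'' --- a sublinear-time local oracle for max-weight matching with $r=O(1)$ weight levels, accurate to an additive $\gamma' n$ times the weight scale --- and you give no argument for it. The available sublinear toolbox (\Cref{thm:large matching}, \Cref{thm:augmenting paths}) only finds large \emph{cardinality} matchings and large collections of node-disjoint \emph{unweighted} augmenting paths of fixed length; combining $r$ threshold graphs ``along weight-increasing augmenting paths so that displacements are charged correctly'' is precisely the problem the paper solves with its $1$-feasible potentials, eligibility graphs, quasi-maximal sets of augmenting paths and quasi-maximal forests, and the error analysis (spurious and broken vertices, controlling how the paths missed by the approximate subroutines propagate across iterations, and the composition of oracles without blowing up query time) is the bulk of that work. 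As written, you have reduced the theorem to an unproven claim of essentially the same difficulty.

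Two further concrete problems. First, the theorem allows arbitrary costs in $\bbR^+$, but your discretization and the $\lambda$-net over $\{0,\gamma',\dots,1\}$ presuppose $\max_e c_e\le 1$; with unbounded costs an additive error of $\gamma' n$ is meaningless, and normalizing by the maximum cost fails when a few edges are enormously expensive. One must tie the error scale to the costs actually used by near-optimal matchings, which is what \Cref{lem:find w bar} does by locating the characteristic cost $\wbar$ with $\gamma\wbar\le\mu(M^\beta_{[\gamma]})$ via binary search over sampled costs. Second, the output matching: when the Lagrangian-optimal cardinality jumps over $[\alpha n,\beta n]$ at a breakpoint $\lambda^*$, every edge of the large optimal matching $M_{\lambda^*}$ has cost at most $\lambda^*$, so removing its $|M_{\lambda^*}|-\beta n$ most expensive edges removes total cost at most $\lambda^*\bigl(|M_{\lambda^*}|-\beta n\bigr)=c(M_{\lambda^*})-c(M^{\beta})$; your trimming rule therefore yields cost at least $c(M^\beta)$ rather than at most (with equality only when every trimmed edge costs exactly $\lambda^*$), and the extra $O(\gamma' n)$ slack from approximation is not absorbed anywhere. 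Both issues are fixable in principle --- the first by a scale-finding preprocessing step, the second by a more careful interpolation at breakpoints --- but neither is routine, and neither repairs the central missing subroutine.
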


Notice that the algorithm in \Cref{thm:main theorem mwm} does \textit{not} output the matching $\hat M$ explicitly. However, it returns a matching oracle data structure which implicitly stores $\hat M$.
The rest of this overview sketches the proof of \Cref{thm:main theorem mwm}.

\paragraph{Our algorithm, in a nutshell.}
A new set of powerful techniques was recently developed to approximate the size of a max-cardinality matching in sublinear time \cite{behnezhad2022time, behnezhad2023beating, behnezhad2023sublinear, BKS23a, bhattacharya2023dynamic}.
Our main contribution is a sublinear-time algorithm which leverages the techniques above to implement (a certain step of) the classic Gabow-Tarjan \cite{gabow1989faster} algorithm for MWPM. Since the techniques above return \emph{approximate} solutions, the obtained matching will be approximate as well, in the sense that we have to disregard a fraction of outliers when computing its cost to recover a meaningful guarantee.
Careful thought is required for relaxing the definitions of certain objects in the Gabow-Tarjan algorithm so as to accommodate their computation in sublinear time. The bulk of our analysis is devoted to proving that these relaxations combine well and lead to the guarantee in \Cref{thm:main theorem mwm}.

\paragraph{Roadmap.}
First, we will review (a certain step of) the Gabow-Tarjan algorithm that we will use as our template algorithm to be implemented in sublinear time. Then, we will review some recent sublinear algorithms for max-cardinality matching. Finally, we will sketch how to combine these tools to approximate the value of minimum-weight matching.

\subsection{A Template Algorithm}
The original Gabow-Tarjan algorithm operates on several scales and this makes it (slightly) more involved. We focus here on a simpler case where all our edge weights are integers in $[1, C]$, for $C=O(1)$. We will see in \Cref{sec:general costs and emd} that we can reduce to this case (incurring a small additive error). Here we describe our template algorithm, at a high level. 

\paragraph{A linear program for MPWM.}
First, recall the linear program for MWPM together with its dual. Here we consider a bipartite graph $G(V = V_0 \cup V_1, E)$ and cost function $c(\cdot, \cdot) \in [1, C]$.
We can interpret the following LP so that $x_{u, v} = 1$ iff $u$ and $v$ are matched, whereas primal constraints require every vertex to be matched.

\vspace{.4cm}

\begin{minipage}[t]{0.45\textwidth}
\textbf{Primal}
\begin{flalign*}
&\text{Minimize} &\sum_{u \in V_0, v \in V_0} x_{u, v} \cdot c(u, v) \\
&\text{subject to} &\sum_{v \in V_1} x_{u, v} \geq 1 \quad \forall u \in V_0 \\
& &\sum_{u \in V_0} x_{u, v} \geq 1 \quad \forall v \in V_1 \\
& & x_{u, v} \geq 0 \quad \forall u \in V_0 ,\; v \in V_1.
\end{flalign*}
\end{minipage}
\hfill\vline\hfill    
\begin{minipage}[t]{0.45\textwidth}
\textbf{Dual}
\begin{align*}
&\text{Maximize} &\sum_{u \in V} \varphi_u &\\
&\text{subject to} & \varphi_u + \varphi_v \leq c(u, v) \quad & \forall (u, v) \in E\\
& & \varphi_u \geq 0. \quad & \forall u \in V ,
\end{align*}
\end{minipage}

\vspace{.5cm}

\paragraph{A high-level description.}
Essentially, our template algorithm is a primal-dual algorithm which (implicitly) maintains a pair $(M, \varphi)$, where $M$ is a partial matching (so primal infeasible), and $\{\varphi(v)\}_{v \in V}$ is a vertex potential function, or an (approximately) feasible dual solution. Moreover, for each $e \in M$ the dual constraint corresponding to $e$ is tight. In other words, the pair $(M, \varphi)$ satisfies complementary slackness. The algorithm progressively grows the dual variables $\{\varphi(v)\}_{v \in V}$ and the size of $M$. When $M$ has size $\geq (1- \gamma )n$ then we are done. Indeed, throwing out $\gamma n$ vertices (as well as their associated primal constraints) we have that $(M, \varphi)$ is a (approximately) feasible primal-dual pair that satisfies complementary slackness, thus it is (approximately) optimal.

\paragraph{The primal-dual algorithm.}
We maintain an initially empty matching $M$. Inspired by the dual, we define a potential function $\varphi: V \rightarrow \bbZ$ and we enforce a relaxed version of the dual constraints: $\varphi(u) + \varphi(v) \leq c(u, v) + 1$ for each $(u, v) \in E$. Moreover, we maintain that $\varphi(u) + \varphi(v) = c(u, v)$ for each $(u, v) \in M$ (complementary slackness).
Let $T$ be the set of edges s.t.~the constraints above are tight. Orient the edges in $T$ so that all edges in $M \subseteq T$ are oriented from $V_0$ to $V_1$ and all edges in $T\setminus M$ are oriented from $V_1$ to $V_0$.
We denote the set of free (unmatched) vertices $F$ and let $F_0 = F \cap V_0$ $F_1 = F \cap V_1$.
We say that a path $P =(v_0 \rightarrow \cdots \rightarrow v_1)$ is an augmenting path if $v_0 \in F_0$, $v_1 \in F_1$ and $P$ alternates between edges in $T \setminus M$ and $M$. When we say that we augment $M$ wrt $P$ we mean that we set $M \leftarrow M \oplus P$.
We alternate between the following two steps:
\begin{enumerate}
    \item Find a a large set of node-disjoint augmenting paths $\{P_1 \dots P_\ell\}$. Augment $M$ wrt these paths. Decrement $\varphi(v) \texttt{ -= } 1$ for each $v \in \bigcup_i P_i \cap V_1$, to ensure the relaxed dual constraints are satisfied.
    \item Define $R$ as the set of vertices that are $T$-reachable\footnote{Recall that $T$ is oriented.} from $F_0$. Increment $\varphi(r_0) \texttt{ += } 1$ for each $r_0 \in R \cap V_0$, and decrement $\varphi(r_1) \texttt{ -= } 1$ for each $r_1 \in R \cap V_1$. This preserves the relaxed dual constraints and (eventually) adds some more edges to $T$.
\end{enumerate}
After $O_{\gamma, C}(1)$ iterations, we have $|F| \leq \gamma n$.

\paragraph{Analysis sketch.}
It is routine to verify that steps 1 and 2 preserve the relaxed dual constraints. At any point the pair $(M, \varphi)$ satisfies, $c(M) \leq \sum_{v \in V_0 \cup V_1} \varphi(v) \leq c(M') + n$ for any perfect matching $M'$. We can content ourselves with this additive approximation; indeed in \Cref{sec:general costs and emd} we will see how to charge it on the outliers.
To argue that we have few free vertices left after $O_\gamma(1)$ iterations, 
notice that at iteration $t$ we have $\varphi|_{F_0} \equiv t$ and $\varphi|_{F_1} \equiv 0$. Computing a certain function of potentials along $(M \oplus M')$-augmenting paths shows that $|F| \cdot t \leq O(n)$. Thus, $O_\gamma(1)$ iterations are sufficient to obtain $|F| \leq \gamma n$.
The arguments above are sufficient to show that our template algorithm finds an (almost) perfect matching with (almost) minimum weight. We will shove both \emph{almost} under the outlier carpet in \Cref{sec:general costs and emd}.

\subsection{Implementing the Template in Sublinear Time}

Our sublinear-time implementation of the template algorithm hinges on matching oracles.

\paragraph{Matching oracles.} Given a matching $M'$ we define a \emph{matching oracle} for $M'$ as a data structure that given $u \in V$ returns $v \in V$ if $(u, v) \in M'$ and $\bot$ otherwise. 
Note that given a matching oracle for $M'$, if we are promised that $|M'| = \Omega(n)$ then $O_\gamma(\log n)$ calls to such oracle are enough to estimate $|M'| \pm \gamma n$. 
We stress that all matching oracles that we use have sublinear query time.

\paragraph{Finding large matchings in sublinear time.}
An important ingredient in our algorithm is the $\largematch(G, A, \varepsilon, \delta)$ subroutine (\Cref{thm:large matching}), which is due to~\cite{bhattacharya2023dynamic}.  
Given $A \subseteq V$, $\largematch(G, A, \varepsilon, \delta)$ returns either $\bot$ or a matching oracle for some matching $M'$ in $G[A]$.
If there exists a matching in $G[A]$ of size $\delta n$, then $\largematch$ returns a matching oracle for some $M'$ in $G[A]$ with $|M'| = \Omega_\delta(n)$.
Else, if there are no matchings of size $\delta n$ in $G[A]$ $\largematch$ returns $\bot$. The parameter $\varepsilon$ controls the running time and essentially guarantees that $\largematch$ runs in $O(n^{2-\varepsilon})$ time while the matching oracle it outputs runs in $O(n^{1+\varepsilon})$.

We will use $\largematch$ to implement both step 1 and step 2 in the template algorithm. However, this requires us to relax our notions of maximal set of node-disjoint augmenting paths, as well as that of reachability. A major technical contribution of this work is to find the right relaxation of these notions so that:
\begin{enumerate}[1)]
    \item We can analyze a variant of the template algorithm working with these relaxed objects and still recover a a solution which is optimal if we neglect a $\gamma$ fraction of outliers.
    \item We can compute these relaxed objects in sublinear time using $\largematch$ as well as previously constructed matching oracles. 
\end{enumerate}
These relaxed notions are introduced in \Cref{sec:preliminaries}, point (1) is proven in \Cref{sec:template} and point (2) is proven in \Cref{sec:implementing in sublinear time}.

\paragraph{Implementing step 1 in sublinear time.}
 In \cite{bhattacharya2023dynamic} the authors implement McGregor's algorithm \cite{mcgregor2005finding} for streaming Max-Cardinality Matching (MCM) in a sublinear fashion using $\largematch$ (see \Cref{thm:augmenting paths} in this work). McGregor's algorithm finds a size-$\Omega(n)$ set of node-disjoint augmenting paths of fixed constant length, whenever there are at least $\Omega(n)$ of them. This notion is weaker than that of a maximal node-disjoint set of augmenting paths required in step 1 of our template algorithm in two regards: first, it only finds augmenting paths of fixed constant length; second, it finds only a constant fraction of such paths (as long as we have a linear number of them). 

In our template algorithm, the invariant $\varphi|_{F_1} \equiv 0$ is maintained (in step 2) because $R \cap F_1 = \emptyset$. In turn, $R \cap F_1 = \emptyset$ holds exactly because in step 1 we augment $M$ with a maximal node-disjoint set of augmenting paths.  
Since our sublinear implementation of step 1 misses some augmenting paths, the updates performed in step 2 will violate the invariant $\varphi(v)= 0$ for some $v \in F_1$. 

A careful implementation of step 2 (see next paragraph) guarantees that only missed augmenting paths that are short lead to a violation of $\varphi|_{F_1} \equiv 0$. Moreover, repeatedly running the sublinear implementation of McGregor's algorithm from \cite{bhattacharya2023dynamic}, we ensure that we miss at most $\gamma n$ short paths, for $\gamma$ arbitrary small. 
Thus, we can flag all vertices that belong to missed short augmenting paths as outliers since we have only a small fraction of them.

\paragraph{Implementing step 2 in sublinear time.}
We implement an approximate version of the reachability query in step 2 as follows. We initialize the set of reachable vertices $R$ as $R \leftarrow F_0$. Then, for a constant number of iterations: we compute a large matching $M' \subseteq T \setminus M$ between the vertices of $R \cap V_0$ and $V_1 \setminus R$; then we add to $R$ all matched vertices in $\bigcup M'$ as well as their $M$-mates, namely $\mate_M(u)$ for each $u \in \bigcup M'$.
Notice that if a $\Omega(n)$-size matching $\subseteq T \setminus M$ between $R \cap V_0$ and $V_1 \setminus R$ exists, then we find a matching $\subseteq T \setminus M$ between $R \cap V_0$ and $V_1 \setminus R$ of size at least $\Omega(n)$. This ensures that: (i) after a constant number of iterations $\largematch$ returns $\bot$; (ii) when $\largematch$ returns $\bot$ there exists a vertex cover $\cC$ of $((R \cap V_0) \times (V_1 \setminus R)) \cap T \setminus M$ of size $\gamma n$.
Only constraints corresponding to edges incident to $\cC$ might be violated during step 2. Furthermore, $|\cC| = \gamma n$ is small and so we can just label vertices in $\cC$ as outliers.

As we pointed out in the previous paragraph, the invariant $\varphi|_{F_1} \equiv 0$ might be violated in step 2 if $R \cap F_1 \neq \emptyset$. 
We already showed that whenever we the missed augmenting path causing the violation of $\varphi|_{F_1}  \equiv 0$ is short we can charge this violation on a small set of outliers.
To make sure that no long augmenting path leads to a violation of $\varphi|_{F_1} \equiv 0$ we set our parameters so that the depth of the reachability tree built in step 2 is smaller than the length of ``long'' paths. Thus, any long path escapes $R$ and cannot cause a violation.

\paragraph{Everything is an oracle.}
The implementation of both step 1 and step 2 operates on the graph $T$ of tight constraints. To evaluate $(u, v) \in T$, we need to compute $\varphi(u)$ and $\varphi(v)$. In turn, the potential values depend on previous iterations of the algorithm. None of these iterations outputs an explicit description of the objects described in the template (potentials, matchings, augmenting paths or sets of reachable vertices). Indeed, these objects are output as oracle data structures, which internals call (eventually multiple) matching oracles output by $\largematch$.
We prove that essentially all these oracles have query time $O(n^{1+\varepsilon})$ for some small $\varepsilon > 0$. A careful analysis is required to show that we can build the oracles at iteration $i + 1$ using the oracles at iteration $i$ without blowing up their complexity.

\paragraph{Paper organization.}
In \Cref{sec:preliminaries} we define some fundamental objects that we will use throughout the paper. In \Cref{sec:template} we present a template algorithm to be implemented in sublinear time, and prove its correctness. In \Cref{sec:implementing in sublinear time} we implement the template algorithm in sublinear time. In \Cref{sec:general costs and emd} we put everything together and prove the main theorems stated in the introduction.
\section{Preliminaries}
\label{sec:preliminaries}

We use the notation $[a, b] := \{a \dots b-1\}$, $[b] = [0, b]$, and $(a \pm b) := [a - b, a + b]$ meaning that $c \cdot (a\pm b) = (ac \pm bc)$. We denote our undirected bipartite graph with $G=(V, E)$, and the bipartition is given by $V = V_0 \cup V_1$. Our original graph is complete and for each $(u, v) \in V_0 \times V_1$ we denote with $c(u, v)$ the cost of the edge $(u, v)$.  We stress that none of our algorithms require $c(\cdot, \cdot)$ to be a metric. Given a matching $M$ we denote its combined cost with $c(M)$. For each $u \in V$ we say that $u = \mate_M(v)$ iff $(u, v) \in M$. When the matching $M$ is clear from the context we denote with $F$ the set of unmatched (or \emph{free}) vertices, and set $F_i := F \cap V_i$ for $i=0,1$. 

When we say that an algorithm runs in time $t$ we mean that both its computational complexity and the number of queries to the cost matrix $c(\cdot, \cdot)$ are bounded by $t$. The computational complexity of our algorithms is always (asymptotically) equivalent to their query complexity, so we only analyse the latter. All our guarantees in this work hold with high probability.

\begin{definition}[Augmenting paths]
Given a matching $M$ over $G=(V, E)$ we say that $P = (v_0, v_1 \dots v_{2\ell+1})$ is an augmenting path w.r.t. $M$ if $(v_{2i}, v_{2i+1}) \in E \setminus M$ for each $i = 0 \dots \ell$ and $(v_{2j+1}, v_{2j+2}) \in M$ for each $j = 0 \dots \ell - 1$.  
When we say that we augment $M$ w.r.t. $P$ we mean that we set $M \leftarrow M \oplus P$, where $\oplus$ is the exclusive or.
\end{definition}

We use the same notion of $1$-feasible potential as in  \cite{gabow1989faster}.
\begin{definition}[$1$-feasibility conditions]
Given a potential $\varphi: V \longrightarrow \bbZ$ we say that it satisfies $1$-feasibility conditions with respect to a matching $M$ if the following hold.
\begin{enumerate}[(i)]
    \item For each $u \in V_0, v \in V_1$ $\varphi(u) + \varphi(v) \leq c(u, v) + 1$.
    \item For each $(u, v) \in M$, $\varphi(u) + \varphi(v) = c(u, v)$.
\end{enumerate}    
\end{definition}

\begin{definition}[Eligibility Graph]
We say that an edge $(u, v)$ is eligible w.r.t. $M$ if: $(u, v) \not\in M$ and $\varphi(u) + \varphi(v) = c(u, v) + 1$ or; $(u, v) \in M$ and $\varphi(u) + \varphi(v) = c(u, v)$.
We define the eligibility graph as the directed graph $G_\cE = (V, E_\cE)$ that orients the eligible edges so that, for each eligible $(u, v) \in V_0 \times V_1$, we have $(u, v) \in E_\cE$ if  $(u, v) \not \in M$ and $(v, u) \in E_\cE$ if $(u, v) \in M$.
\end{definition}

Notice that, whenever a potential is $1$-feasible w.r.t. $M$, then all edges in $M$ are eligible.

\begin{definition}[Forward Graph]
We define the forward graph $G_F = (V, E_F)$ as the subgraph of the eligibility graph containing only edges from $V_0$ to $V_1$. That is, we remove all edges $(v, u)$ such that $(u, v) \in M$.
\end{definition}

Now, we introduce two quite technical definitions, which provide us with approximate versions of the notion of ``maximal set of node-disjoint augmenting paths'' and ``maximal forest''.

\begin{definition}[$(k, \xi)$-Quasi-Maximal Set of Node-Disjoint Augmenting Paths]
\label{def:qmsndap}
Given a graph $G=(V, E)$ and a matching $M \subseteq E$ we say that a set $\cP$ of augmenting paths of length at most $k$ is a $(k, \xi)$-\qmsndap if for any $\cQ$ such that $\cQ \cup \cP$ is a set of node-disjoint augmenting paths of length $\leq k$ we have $|\cQ| \leq \xi n$.
\end{definition}

Intuitively, $\cP$ is a $(k, \xi)$-\qmsndap if we can add only a few more node-disjoint augmenting paths of length $\leq k$ to $\cP$ before it becomes a maximal.
Next we introduce an approximate notion of ``maximal forest'' $\cF$ in the eligibility graph $G_\cE$ rooted in $F_0$. $\cF$ is obtained starting from the vertices in $F_0$ and adding edges (in a way that we will specify later) so as to preserve the $\cF$ has $|F_0|$ connected components and has no cycles.
This construction will ensure that the connected component of our forest have small diameter and small size.
We maintain that whenever $v \in V_1$ is added to $\cF$, then $\mate_M(v)$ is also added to $\cF$.
$\cF$ is approximately maximal in the sense that the cut $(\cF, V\setminus \cF)$ in $G_\cE$ admits a small vertex cover.

\begin{definition}[$(k, \delta)$-Quasi-Maximal Forest]
\label{def:qmf}
Given the eligibility graph $G_\cE=(V, E_\cE)$ w.r.t. the matching $M$, and the set of vertices $F_0 \subseteq V_0$ we say that $\cF$ is a $(k, \delta)$-\qmf  ~rooted in $F_0$ if:
\begin{enumerate}
    \item $F_0 \subseteq \cF$ \label{enum:forest def free included}
    \item For each $v \in V_1 \cap \cF$ we have $\mate_M(v) \in \cF$ \label{enum:forest def matched edges included}
    \item For each $u \in \cF$ there exists $v \in F_0$ at hop distance from $u$ at most $k$. \label{enum:forest def forest depth}
    \item Every connected component of $\cF$ has size at most $2^k$. \label{enum:forset def cc size}
    \item The edge set $E_\cE \cap (\cF \times V \setminus \cF)$ has a vertex cover of size at most $\delta n$. \label{enum:forest def vertex cover}
\end{enumerate}    
\end{definition}

Now, we introduce a few results from past work on sublinear-time maximum caridnality matching. The following theorem, which is the main technical contribution of \cite{bhattacharya2023dynamic}, states that we can compute a $\varepsilon n$-additive approximation of the size of a maximum-cardinality matching in strongly sublinear time.

\begin{theorem}[Theorem 1.3, \cite{bhattacharya2023dynamic}]
\label{thm:mcm in sublinear}
There is a randomized algorithm that, given the adjacency matrix of a graph $G$, in time $n^{2-\Omega_{\varepsilon}(1)}$ computes with high probability a $(1,\varepsilon n)$-approximation $\tilde{\mu}$ of $\mu(G)$. 
After that, given a vertex $v$, the algorithm returns in $n^{1+f(\varepsilon)}$ time an edge $(v,v')\in M$ or $\bot$ if $v\notin V(M)$ where $M$ is a fixed $(1,\varepsilon n)$-approximate matching, where $f$ is an increasing function such that $f(\varepsilon) \rightarrow 0$ when $\varepsilon\rightarrow 0$. 
\end{theorem}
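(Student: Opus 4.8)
The plan is to split the statement into three largely independent pieces: (a)~reduce estimating $\mu(G)$ to having a pointwise-queryable $(1,\varepsilon n)$-approximate matching; (b)~produce such a matching by McGregor-style iterated augmentation; (c)~implement each augmentation step, and the resulting mate-oracle, in sublinear time on top of $\largematch$. For piece~(a): suppose we already have an oracle that on input $v$ returns $\mate_M(v)$ or $\bot$ for some fixed matching $M$ with $\mu(G) - \tfrac{\varepsilon}{2}n \le |M| \le \mu(G)$. Sample $s = \Theta(\varepsilon^{-2}\log n)$ uniformly random vertices, query the oracle on each, let $\hat p$ be the fraction that come back matched, and output $\tilde\mu = \tfrac{n}{2}\hat p$. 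A Chernoff bound gives $|\tilde\mu - |M|| \le \tfrac{\varepsilon}{2}n$ whp, so $\tilde\mu$ is a $(1,\varepsilon n)$-approximation of $\mu(G)$, at a cost of $s$ oracle calls, i.e.\ $n^{1+f(\varepsilon)}\cdot\polylog(n)$ time.

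For piece~(b) I would fix $k = \Theta(1/\varepsilon)$ and run $O_k(1)$ phases of McGregor's approximate-matching algorithm. Starting from $M_0 = \emptyset$, in phase $i$ one finds a large vertex-disjoint collection $\cP_i$ of augmenting paths of length $\le 2k+1$ with respect to $M_i$ and sets $M_{i+1} = M_i \oplus \bigcup \cP_i$. The standard Hopcroft--Karp/McGregor analysis then shows that after enough phases the matching $M$ has no augmenting path of length $\le 2k+1$ avoiding a fixed set of $\le \varepsilon n$ vertices, and any such $M$ satisfies $|M| \ge (1 - O(1/k))\,\mu(G) \ge \mu(G) - \varepsilon n$ (in the non-bipartite case the alternating-BFS layers below are replaced by the usual blossom-contraction bookkeeping). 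This part uses neither the cost function nor any metric structure.

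Piece~(c) is where I expect essentially all the difficulty to lie: each phase must run in $n^{2-\Omega_\varepsilon(1)}$ time and leave behind \emph{composable} oracles. I would build $\cP_i$ layer by layer, imitating an alternating BFS from the free vertices: maintaining an explored frontier $R$ (initially $F_0$), invoke $\largematch$ on the subgraph induced on $(R\cap V_0)\times(V_1\setminus R)$ inside $E\setminus M_i$; each edge of the returned matching, together with its $M_i$-mate, extends a growing family of vertex-disjoint alternating paths by one step, and after $\le k$ extensions those that have reached an unmatched vertex of $F_1$ are complete augmenting paths. Since $\largematch$ takes only a vertex subset (itself described by earlier oracles), runs in $O(n^{2-\varepsilon'})$, and returns an oracle of query time $O(n^{1+\varepsilon'})$, summing over $O(k)$ layers and $O_k(1)$ phases keeps the total running time $n^{2-\Omega_\varepsilon(1)}$, and the final mate-oracle is a depth-$O(k)$ composition of $\largematch$ oracles whose query times multiply out to $n^{1+f(\varepsilon)}$ with $f(\varepsilon)\to 0$ as $\varepsilon\to 0$. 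The delicate points I expect to absorb most of the work are: (i)~arranging that the matchings $\largematch$ returns in successive layers are genuinely vertex-disjoint, so that $\oplus$ is well defined and produces valid augmenting paths --- this is exactly why one must settle for an \emph{approximately} maximal set of node-disjoint augmenting paths rather than an exact one; (ii)~choosing $\varepsilon'$ small enough relative to $\varepsilon$ that the exponent $\Omega_\varepsilon(1)$ does not collapse to $0$ across the recursion and the $O(k)$ compositions of oracles; and (iii)~checking that when $\largematch$ returns $\bot$ at some layer this really certifies, via a small vertex cover of the frontier cut, that only $O(\varepsilon n)$ vertices can still lie on short augmenting paths, so the phase may be terminated and charged to those vertices. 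Everything else --- the sampling reduction of~(a) and the augmenting-path counting of~(b) --- is routine.
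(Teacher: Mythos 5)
This statement is not proved in the paper at all: it is Theorem~1.3 of \cite{bhattacharya2023dynamic}, imported verbatim as a black box (the paper likewise imports that work's Theorems~4.1 and~5.2 as \Cref{thm:large matching} and \Cref{thm:augmenting paths}). So there is no in-paper argument to compare against; what you have written is a blind reconstruction of the cited work's proof. Your outline is consistent with how that proof actually goes, and with the paper's own technical overview of it: piece~(a) is the standard sampling reduction, piece~(b) is McGregor's iterated short-augmenting-path scheme, and piece~(c) is exactly the ``implement each McGregor phase via $\largematch$ and compose the resulting oracles'' strategy that the paper attributes to \cite{bhattacharya2023dynamic}.

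As a proof, however, your sketch has two genuine gaps. First, Theorem~1.3 is stated for \emph{general} graphs, but your piece~(c) is written entirely in bipartite language ($V_0$, $V_1$, $F_0$, $F_1$, matchings across the cut $(R\cap V_0)\times(V_1\setminus R)$); the remark that one ``replaces the BFS layers by blossom-contraction bookkeeping'' does not explain how blossoms interact with the layer-by-layer $\largematch$ construction, and handling odd structures is a substantial part of McGregor's algorithm and of its sublinear implementation. Second, and more importantly, piece~(c) is a plan rather than a proof: you yourself identify the three points where ``essentially all the difficulty'' lies --- disjointness of the paths assembled from successive $\largematch$ outputs, controlling the exponent across the oracle composition, and certifying via a small vertex cover that a $\bot$ answer leaves only $O(\varepsilon n)$ vertices on short augmenting paths --- and you defer all three. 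These are precisely the places where the cited proof does its real work (and where the present paper, in \Cref{sec:implementing in sublinear time}, does analogous work for its weighted variant), so leaving them open means the statement is not yet established by your argument.
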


The algorithm in \Cref{thm:mcm in sublinear} does not exactly output a matching, but rather a \emph{matching oracle}. Namely, it outputs a data structure that stores a matching $M$ implicitly. We formalize the notion of matching oracle below. 

\begin{definition}[Matching Oracle]
Given a matching $M$, we define the matching oracle $\match_M(\cdot)$ as a data structure such that $\match_M(u) = v$ if $(u, v) \in M$ and $\match_M(u) = \bot$ otherwise. Throughout the paper we denote with $t_M$ the time complexity of $\match_M(\cdot)$. 
\end{definition}

Similarly to matching oracles, we make use of membership oracles $\mem_A(\cdot)$ and potential oracles $\eval_\varphi(\cdot)$ where $A \subseteq V$ and $\varphi$ is a potential function defined on $V$. As expected, $\mem_A(u)$ returns $\one_{u \in A}$ and $\eval_\varphi(u)$ returns $\varphi(u)$. We denote their running time with $t_A$ and $t_\varphi$ respectively.
Now, we recall two theorems from \cite{bhattacharya2023dynamic} that constitutes fundamental ingredients of our sublinear-time algorithm for minimum-weight matching. 

\Cref{thm:large matching} roughly says that, in sublinear time, we can find a matching oracle for a size-$\Omega(n)$ matching, whenever a size-$\Omega(n)$ matching exists. 

\begin{theorem}[Essentially Theorem 4.1, \cite{bhattacharya2023dynamic}]
\label{thm:large matching}
Let $G=(V,E)$ be a graph, $A\subseteq V$ be a vertex set. Suppose that we have access to adjacency matrix of $G$ and an $A$-membership oracle $\mem_{A}$ with $t_{A}$ query time. We are given as input a sufficiently small $\varepsilon>0$ and $\sizeIn>0$.

There exists an algorithm $\largematch(G, A, \varepsilon, \sizeIn)$ that preprocesses $G$ in $\Otil_{\sizeIn} ((t_{A}+n) \cdot n^{1-\varepsilon})$  time and either return $\bot$ or construct a matching oracle $\match_{M}(\cdot)$ for a matching $M\subset G[A]$ of size at least $\sizeOut n$ where $\sizeOut=\frac{1}{2000}\sizeIn^{5}$ that has $\Otil_{\sizeIn} ((t_{A}+n)n^{4\varepsilon})$ 
worst-case query time. If $\mu(G[A])\ge\sizeIn n$, then $\bot$ is not returned. The guarantee holds with high probability.
\end{theorem}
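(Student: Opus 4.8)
The plan is to obtain this statement essentially as a restatement of Theorem~4.1 of \cite{bhattacharya2023dynamic}, which proves the analogous guarantee for a plain graph accessed through its adjacency matrix; the only genuine work is (a) reducing the problem on the induced subgraph $G[A]$ to an instance of that theorem, (b) tracking how the membership-oracle cost $t_{A}$ enters the stated time bounds, and (c) re-checking the constant $\sizeOut=\frac{1}{2000}\sizeIn^{5}$ and the ``$\bot$ is not returned'' clause once sizes are measured against $n$ rather than $|A|$. The technical heart --- the sparsification-plus-oracle-composition argument that actually produces the matching oracle and drives both the $\sizeIn^{5}$ loss and the $n^{1-\varepsilon}$ versus $n^{4\varepsilon}$ trade-off --- is inherited from \cite{bhattacharya2023dynamic}, so I would not reprove it.

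Concretely, write $H:=G[A]$. I would first simulate the standard access model to $H$: an adjacency probe ``$(u,v)\in E(H)$?'' is answered by two calls to $\mem_{A}$ and one probe to the adjacency matrix of $G$, at cost $O(t_{A}+1)$; and a uniformly random vertex of $H$ is drawn by rejection sampling a uniform vertex of $V$ and testing membership, which succeeds in $O(n/|A|)$ attempts in expectation. Before doing any of this I would estimate $|A|$ to within a constant factor, with high probability, using $\Otil_{\sizeIn}(1)$ membership queries: if the estimate certifies $|A|<\sizeIn n$ then $\mu(G[A])\le |A|/2<\sizeIn n$ and returning $\bot$ is correct and cheap; otherwise $|A|=\Theta_{\sizeIn}(n)$, the rejection sampler is efficient, and I would invoke the construction of \cite{bhattacharya2023dynamic} on $H$ with the given running-time parameter $\varepsilon$ and with density threshold $\sizeIn':=\sizeIn n/|A|\in[\sizeIn,1]$. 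If $\mu(G[A])\ge\sizeIn n$ then $\mu(H)/|V(H)|\ge\sizeIn n/|A|=\sizeIn'$, so the ``no $\bot$'' guarantee triggers; and any matching that the construction returns has size at least $\frac{(\sizeIn')^{5}}{2000}\,|A|=\frac{\sizeIn^{5}n^{5}}{2000\,|A|^{4}}\ge\frac{\sizeIn^{5}}{2000}\,n=\sizeOut n$, using $|A|\le n$, where the constant-factor slack in the estimate of $|A|$ is absorbed, e.g.\ by using a $(1+o(1))$-accurate estimate together with a marginally smaller target constant.

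Finally I would verify the running times. Every graph-touching elementary step of the \cite{bhattacharya2023dynamic} routine is now a call of cost $O(t_{A}+1)$ and every random-vertex draw a call of cost $O(1/\sizeIn)$, so its preprocessing time $\Otil_{\sizeIn'}((1+|A|)|A|^{1-\varepsilon})$ becomes $\Otil_{\sizeIn}((t_{A}+n)n^{1-\varepsilon})$ and its query time $\Otil_{\sizeIn'}((1+|A|)|A|^{4\varepsilon})$ becomes $\Otil_{\sizeIn}((t_{A}+n)n^{4\varepsilon})$, since $\sizeIn'\ge\sizeIn$ and $|A|\le n$; the matching oracle returned is exactly the one the construction builds for $H$, wrapped in the $\mem_{A}$-simulation above, and the whp guarantee carries over. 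The step I expect to require the most care is this last accounting together with (c): one must check that the \emph{internal composition of oracles} inside the \cite{bhattacharya2023dynamic} construction --- which chains its sublinear maximal-matching and local-exploration primitives --- still has query time $n^{1+O(\varepsilon)}$ once every primitive probe carries the extra $t_{A}$ factor, and that the $\poly(1/\sizeIn)$ rounds of vertex sparsification lose only the claimed $\sizeIn^{5}$ factor. Both are bookkeeping inherited from \cite{bhattacharya2023dynamic}, and the restriction to $A$ adds nothing essential beyond the rescaling $\sizeIn\mapsto\sizeIn n/|A|$.
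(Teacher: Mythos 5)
This statement is not proved in the paper at all: it is imported verbatim (``Essentially Theorem 4.1'') from \cite{bhattacharya2023dynamic}, and the formulation you are asked to prove --- with the vertex set $A$, the membership oracle $\mem_A$, and the additive $t_A$ term in both time bounds --- is already the form in which that theorem is stated in the source. So the reduction that constitutes the bulk of your plan (simulating $G[A]$ on top of a membership-oracle-free primitive, rescaling $\sizeIn$ to $\sizeIn n/|A|$, estimating $|A|$) is solving a problem the paper never needs to solve; the ``essentially'' refers only to cosmetic restatement. That said, your plan is internally coherent on the combinatorial side: the $|A|$-estimation step, the rescaled density threshold, and the computation $\frac{(\sizeIn')^{5}}{2000}|A|\ge\frac{\sizeIn^{5}}{2000}n$ are all correct.

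The one place where your argument would genuinely fail if carried out as written is the time accounting, which you yourself flag as the delicate step but then dismiss as inherited bookkeeping. If every elementary adjacency probe of a $\Otil(|A|^{2-\varepsilon})$-time algorithm is wrapped in a membership test of cost $t_A$, the resulting preprocessing time is $\Otil(t_A\cdot n^{2-\varepsilon})$, not $\Otil((t_A+n)\,n^{1-\varepsilon})$. The additive form $(t_A+n)\cdot n^{1-\varepsilon}$ encodes a specific structural fact about the algorithm --- that it samples only $\Otil(n^{1-\varepsilon})$ vertices, performing one membership query and one $O(n)$-cost row scan per sample --- and this cannot be deduced from the black-box statement of a membership-oracle-free version; it requires opening up the construction. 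The distinction is not pedantic here: the paper later instantiates $t_A=\Otil(n^{1+\varepsilon})$ (membership oracles built from chains of matching oracles), so a multiplicative $t_A$ would destroy subquadraticity entirely. In short, either cite the theorem as the paper does, or, if you insist on deriving it, you must argue about where membership queries occur inside the algorithm of \cite{bhattacharya2023dynamic} rather than applying a uniform per-probe surcharge.
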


\Cref{thm:augmenting paths} roughly says that, in sublinear time, we can increase the size of our current matching (oracle) by $\Omega(n)$, whenever there are $\Omega(n)$ short augmenting paths. 

\begin{theorem} [Essentially Theorem 5.2, \cite{bhattacharya2023dynamic}]
\label{thm:augmenting paths} 
Fix two constants $k, \gamma > 0$.
For any sufficiently small $\varepsilon_{\inp} > 0$, there exists $\varepsilon_{\out} = \Theta_{k, \gamma}(\varepsilon_{\inp})$ such that the following holds. 
There exists an algorithm $\augment(G, M^{\inp}, k, \gamma, \varepsilon_{\inp})$  that
makes $O_{k, \gamma}(1)$ calls to $\largematch$ which take $\tilde{O}_{k, \gamma}\left(n^{2-\varepsilon_{\inp}}\right)$ time in total. Further, either it returns an oracle $\match_{M^{\out}}(\cdot)$ with query time $\tilde{O}_{k,\gamma}(n^{1+\varepsilon_{\out}})$, for some matching $M^{\out}$ in $G$ of size $|M^{\out}| \geq |M^{\inp}| + \Theta_{k, \gamma}(1) \cdot n$ (we say that it ``succeeds'' in this case), or it returns {\sc Failure}.  Finally, if the matching $M^{\inp}$ admits a collection of $\gamma \cdot n$ many node-disjoint length $(2k+1)$-augmenting paths in $G$, then the algorithm succeeds whp.
\end{theorem}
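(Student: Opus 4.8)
The plan is to obtain this as (essentially) a restatement of Theorem 5.2 of \cite{bhattacharya2023dynamic}, so the real content is (a) recalling the algorithm $\augment$ and why it has the stated guarantees, and (b) checking that the mild reformulation here — in particular that we will later run $\largematch$ on vertex subsets described by sublinear-time oracles rather than explicitly — changes nothing beyond constant factors in the exponents. I sketch (a) and point to where the work is. The algorithm implements McGregor's \cite{mcgregor2005finding} bounded-length augmenting-path search, using $\largematch$ (\Cref{thm:large matching}) as its only heavy primitive. First sample a random \emph{layering}: assign each edge of $M^{\inp}$ an independent label in $\{1,\dots,k\}$ (and each free vertex an independent "active'' bit), and call a length-$(2k+1)$ augmenting path \emph{canonical} if its $i$-th matched edge carries label $i$ and both endpoints are active. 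Since the hypothesised $\gamma n$ paths are node-disjoint their matched edges are disjoint, so the events "$P_j$ is canonical'' are mutually independent, each of probability $p(k)=\tfrac14 k^{-k}=\Omega_k(1)$; a Chernoff bound gives whp at least $\gamma_1 n := \tfrac12 p(k)\gamma\, n$ canonical node-disjoint paths, so it suffices to extract $\Omega_{k,\gamma}(n)$ of them.

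The search proceeds layer by layer. Set $S_0:=F_0$; for $j=0,\dots,k$: (i) form $L_{2j+1}:=\{V_1\text{-endpoints of label-}(j{+}1)\text{ matched edges}\}$ for $j<k$, and $L_{2k+1}:=F_1$ for $j=k$; (ii) compute a matching $N_{2j}\subseteq G[S_{2j}\cup L_{2j+1}]$ that is ``maximal up to $\sizeIn n$ slack'' by repeatedly calling $\largematch$ with threshold $\sizeIn$ on the still-unmatched part and stopping when it returns $\bot$ — at which point, by König's theorem, the residual bipartite graph has a vertex cover of size $<\sizeIn n$; (iii) set $S_{2j+2}:=\mate_{M^{\inp}}\!\big(V(N_{2j})\cap L_{2j+1}\big)$. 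A key bookkeeping point: by construction $S_{2j}$ consists of $V_0$-endpoints of label-$j$ edges and $L_{2j+1}$ of $V_1$-endpoints of label-$(j{+}1)$ edges, so $G[S_{2j}\cup L_{2j+1}]$ contains \emph{no} $M^{\inp}$-edge and hence $N_{2j}\subseteq G\setminus M^{\inp}$ for free; also each non-$\bot$ call removes $\ge\sizeOut n$ vertices from a set of size $\le n$, so every layer uses $O(1/\sizeOut)=O_{k,\gamma}(1)$ calls, $O_{k,\gamma}(1)$ in total, running in $\tilde O_{k,\gamma}(n^{2-\varepsilon_{\inp}})$ time by \Cref{thm:large matching}. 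Finally, for each $w\in V(N_{2k})\cap F_1$ trace $w\to\mate_{N_{2k}}(w)\to\mate_{M^{\inp}}(\cdot)\to\mate_{N_{2k-2}}(\cdot)\to\cdots$ back to some vertex of $F_0$; because $N_0,\dots,N_{2k}$ and $M^{\inp}$ are matchings and the layering is acyclic these are $|N_{2k}|$ pairwise node-disjoint length-$(2k+1)$ augmenting paths $\cP$, and $M^{\out}:=M^{\inp}\oplus\bigcup\cP$ satisfies $|M^{\out}|=|M^{\inp}|+|N_{2k}|$. To see $|N_{2k}|=\Omega_{k,\gamma}(n)$: at the start of layer $j$ one has $\mu\big(G[S_{2j}\cup L_{2j+1}]\big)\ge(\#\,j\text{-alive canonical paths})$, witnessed by the edges $(v_{2j}(P),v_{2j+1}(P))$, and passing from layer $j$ to $j{+}1$ loses at most $\sizeIn n$ canonical paths beyond those re-routed through the matched portion — this is McGregor's analysis made robust to $\largematch$ returning a near-maximal rather than a maximum matching. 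Picking $\sizeIn=\sizeIn(k,\gamma)$ small enough that $(k{+}1)\sizeIn<\gamma_1/2$ keeps $\mu(G[S_{2k}\cup F_1])=\Omega_{k,\gamma}(n)$, so the last $\largematch$ call returns $|N_{2k}|=\Omega_{k,\gamma}(n)$ and $\augment$ \emph{succeeds}; otherwise it returns \textsc{Failure}. All intermediate objects $\mem_{S_{2j}},\mem_{L_{2j+1}},\match_{N_{2j}}$ and finally $\match_{M^{\out}}$ are maintained as oracles built on the previous layer's oracles and the $\match_{N}$'s from $\largematch$; each composition inflates the query exponent by a constant factor, there are $O_k(1)$ of them, and together with the $\varepsilon\mapsto 4\varepsilon$ loss inside $\largematch$ this yields $\varepsilon_{\out}=\Theta_{k,\gamma}(\varepsilon_{\inp})$ and query time $\tilde O_{k,\gamma}(n^{1+\varepsilon_{\out}})$ for $\match_{M^{\out}}$ — a query being answered by tracing the $\le(2k{+}1)$-edge alternating segment through the queried vertex, i.e.\ $O(k)$ oracle calls.

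I expect two places to need the most care, and both are exactly where \cite{bhattacharya2023dynamic} does its work. The combinatorial one is the re-routing accounting: a near-maximal $N_{2j}$ may match the $V_0$-endpoint of a canonical path to the ``wrong'' $V_1$-vertex, so one cannot track individual canonical paths surviving intact; one must instead bound the net loss of useful reachable vertices layer-by-layer and verify this degrades by only $\sizeIn n$ under $\largematch$'s maximal-up-to-slack (rather than maximum) guarantee. The bookkeeping one is keeping the oracle query exponents from blowing up through the $O_k(1)$ nested layers — e.g.\ building $\mem_{S_{2j+2}}$ from $\mem_{S_{2j}}$, the (random) label table, and $\match_{N_{2j}}$ without the exponent growing super-linearly in the depth — which is what forces $\varepsilon_{\inp}$ to be sufficiently small and pins down $\varepsilon_{\out}=\Theta_{k,\gamma}(\varepsilon_{\inp})$ and the $\tilde O_{k,\gamma}(n^{2-\varepsilon_{\inp}})$ running time. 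In our usage the only genuinely new ingredient is that the vertex sets fed to $\largematch$ are given by sublinear-time oracles, which \Cref{thm:large matching} already accommodates through its dependence on the membership-oracle query time $t_A$; so the statement follows from \cite{bhattacharya2023dynamic} with the parameter renamings and this oracle-access remark.
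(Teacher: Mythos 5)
Your proposal is correct and takes the same route as the paper: the paper offers no internal proof of this statement, importing it essentially verbatim as Theorem 5.2 of \cite{bhattacharya2023dynamic} and only remarking that $\augment$ touches the graph exclusively through $\largematch$ (the property later exploited in \Cref{lem:augmenting paths with potential}), which is exactly the reformulation point you flag at the end. Your reconstruction of the layered McGregor-style search, the $k^{-k}/4$ canonical-path probability, the near-maximality slack accounting, and the oracle-composition bookkeeping is a faithful sketch of the cited proof, so nothing further is needed here.
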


\Cref{thm:augmenting paths} differs from Theorem 5.2 in \cite{bhattacharya2023dynamic} in that it specifies that the only way $\augment$ accesses the graph is through $\largematch$. We will use this property crucially to prove \Cref{lem:augmenting paths with potential}.
\section{A Template Algorithm}
\label{sec:template}

In this section we study min-weight matching with integral small costs $c(\cdot, \cdot) \in [1, C]$, where $C$ is constant. We will see how to lift this restriction in \Cref{sec:general costs and emd}. \Cref{alg:template algorithm} gives a template algorithm realising \Cref{thm:main theorem mwm} that assumes we can implement certain subroutines; in \Cref{sec:implementing in sublinear time} we will see how to implement these subroutines in sublinear time.

\paragraph{Comparison with Gabow-Tarjan.}
Intuitively, our template algorithm implements the Gabow-Tarjan algorithm \cite{gabow1989faster} for a fixed scale in an approximate fashion. Indeed, instead of finding a maximal-set of node-disjoint augmenting paths we find a $(k, \xi)$-\qmsndap and instead of growing a forest in the eligibility graph we grow a $(k, \delta)$-\qmf. See \Cref{fig:QMSNDAP and QMF} for a representation of step 1 and step 2. 

\begin{algorithm}
\textbf{Input:} A bipartite graph $G=(V_0 \cup V_1, E)$ and a cost function $c:E \rightarrow [1, C]$.

Set $T = C / \gamma^3$, $\xi = \frac{\gamma}{T k 2^{k}}$, $\delta = \frac{\gamma}{T}$ and $k = 6000 (2T+1)^{10} / \delta^5 $.

Initialize $M \leftarrow \emptyset$ and $\varphi(v) \leftarrow 0$ for each $v\in V$.

Let $F_0$ denote the set of $M$-unmatched vertices in $V_0$.

For each $e \in E$ update $c(e) \leftarrow c(e) / \gamma$ (this is implemented lazily).

Execute the following two steps for $T$ iterations:
\begin{itemize}
    \item \textit{Step 1.} Find a $(k, \xi)$-\qmsndap $\cP$ in the eligibility graph $G_\cE$. Augment $M$ w.r.t. paths in $\cP$. Set $\varphi(v) \leftarrow \varphi(v) - 1$ for each $v \in V_1 \cap \bigcup_{P\in \cP} P$.
    \item \textit{Step 2.} Find a $(k, \delta)$-\qmf $\cF$ rooted in $F_0$ in the eligibility graph $G_\cE$. Set $\varphi(u) \leftarrow \varphi(u) + 1$ for each $u \in V_0 \cap \cF$ and $\varphi(v) \leftarrow \varphi(v) - 1$ for each $v \in V_1 \cap \cF$.
\end{itemize}

Sample a set $S$ of $O_{\gamma, C}(\log n)$ edges in $M$ with replacement.

Discard the $3\gamma |S|$ edges with highest costs and let $\Sigma$ be the sum of costs of remaining edges. 

\textbf{Output:} $\hat c = \frac{n}{|S|} \Sigma$.

\caption{Template Algorithm. \label{alg:template algorithm}}
\end{algorithm}

\begin{figure}[!ht]
\centering
\includegraphics[page=2, scale=.35]{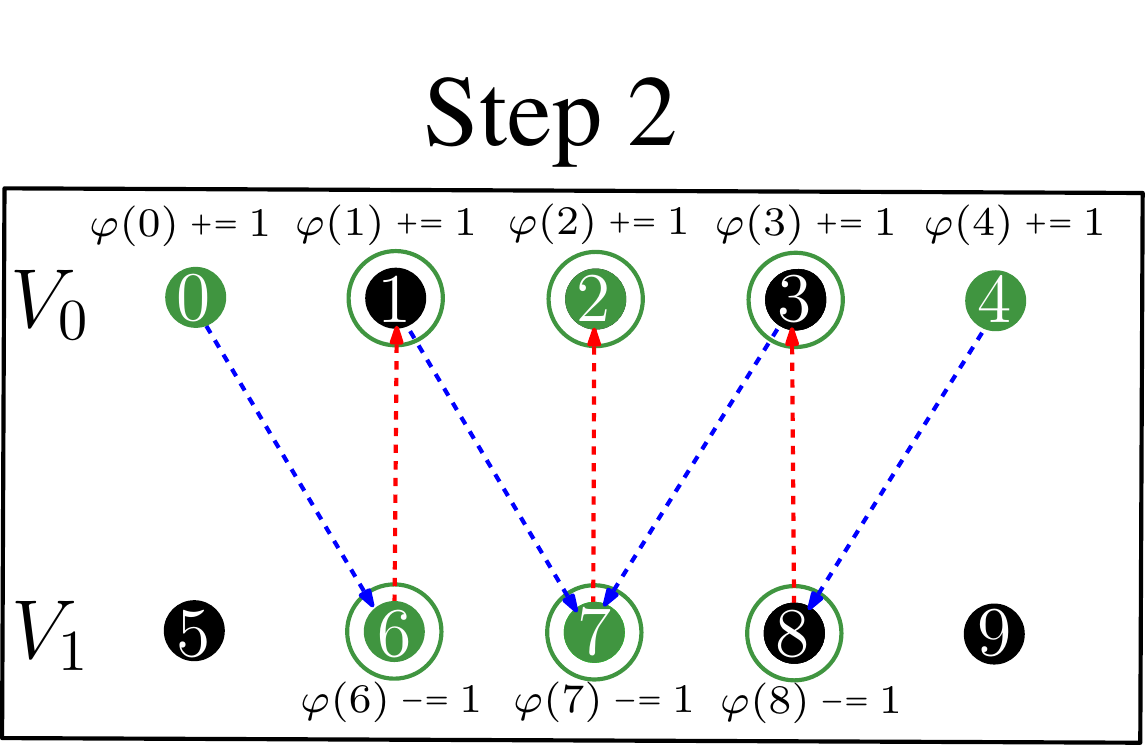}
\hspace{.7cm}
\includegraphics[page=1, scale=.35]{figures/EMD-figures.pdf}
\caption{
We color the edges of $M$ in red and the edges of $T\setminus M$ in blue.
On the left we have an example of step 1. Solid edges represent paths in the QMSNDAP $\cP$ that we augment $M$ along in step 1.
On the right we have an example of step 2. 
All vertices colored or circled in green belong to the QMF $\cF$. 
Circles help us visualize the implementation of step 2, described in \Cref{sec:implementing in sublinear time}. In \Cref{alg:implementation of step2} $\cF$ is built sequentially, where each iteration (lines 1-5) adds some edges to $\cF$. At first, only the non-circled green vertices belong to $\cF$. The first step adds the green-circled black edges, and the second step adds the green-circled green edges.   
}
\label{fig:QMSNDAP and QMF}
\end{figure}

\paragraph{Analysis.}
Here we analyse \Cref{alg:template algorithm} and show that it satisfies the following theorem.
\begin{theorem}
\label{thm:weaker main thm mwm}
Fix a constant $\gamma > 0$.
Suppose that we have adjacency-matrix access to the bipartite graph $G=(V_0 \cup V_1, E)$ and random access to the cost function $c:E \rightarrow [1, C]$, with $C = O(1)$. Then, with high probability, \Cref{alg:template algorithm} returns $\hat c$ such that
\[
c(M^{1-\gamma}) \leq \hat c \leq c(M^\opt)
\]
where $M^{1-\gamma}$ is a min-weight matching of size $(1-\gamma) n$  and $M^\opt$ is a min-weight matching of size $n$.
\end{theorem}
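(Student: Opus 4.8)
The plan is to analyze \Cref{alg:template algorithm} by establishing three invariants that are maintained across the $T$ iterations, then combining them with a concentration argument for the final sampling step. The three invariants are: (a) the potential $\varphi$ is $1$-feasible with respect to $M$ at all times (so steps 1 and 2 preserve $1$-feasibility); (b) the dual objective $\sum_{v \in V} \varphi(v)$ sandwiches $c(M)$ from above and, up to an additive $O(n)$ term, any matching from below — more precisely, the complementary-slackness relation $\varphi(u)+\varphi(v)=c(u,v)$ on $M$ gives $c(M) = \sum_{v \in V(M)} \varphi(v)$, while $1$-feasibility gives that for any matching $M'$, $c(M') \geq \sum_{v\in V(M')} \varphi(v) - |M'|$; and (c) at the start of iteration $t$ we have $\varphi|_{F_0} \equiv t-1$ while $\varphi|_{F_1} \equiv 0$ (or, in the relaxed template, $\varphi|_{F_1}$ is supported on a small set of outlier vertices and bounded — this is where the quasi-maximality of $\cP$ and the bounded depth $k$ of the QMF enter). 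I would first verify (a) by a direct case check on each of the two update rules against the definitions of eligibility and the QMF/QMSNDAP, which is routine given that only vertices in $\cF$ (resp.\ on augmenting paths in $\cP$) change potential and the relevant edges are either inside $\cF$ or covered by the small vertex cover from \Cref{def:qmf}\eqref{enum:forest def vertex cover}.

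Next I would bound the number of free vertices. The key computation is the one sketched in the overview: along the symmetric difference $M \oplus M^{\opt}$ (or $M \oplus M^{1-\gamma}$), each connected component is a path or cycle, and summing the potential telescopes; since $\varphi|_{F_0} \equiv t$ grows linearly in $t$ while $1$-feasibility bounds the per-edge contribution, one gets $|F| \cdot t \leq O(n) + O(\xi k 2^k n \cdot T) + O(\delta n T)$, where the correction terms account for the outliers introduced by quasi-maximality (paths we failed to augment) and by the small vertex cover in the QMF. With the parameter choices $T = C/\gamma^3$, $\xi = \gamma/(Tk2^k)$, $\delta = \gamma/T$, these correction terms total $O(\gamma n)$, so after $T$ iterations $|F| \leq O(\gamma n / T) \cdot T = O(\gamma n)$; rescaling constants, $|F| \leq \gamma n$. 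The bookkeeping here — tracking exactly which vertices become ``outliers'' in each step and showing their count stays $\leq O(\gamma n)$ after summing over all $T$ iterations — is the main obstacle, since it requires the relaxed-object analysis of \Cref{sec:template} (point (1) from the overview) to go through cleanly; I would lean on the parameter hierarchy being set up precisely so that $\xi k 2^k T, \delta T \ll \gamma$.

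Then I would convert the primal-dual pair into the cost bound. Let $F$ be the final free set, $|F| \leq \gamma n$, so $|M| \geq (1-\gamma)n$ and in particular $|M| \geq |M^{1-\gamma}|$. From complementary slackness, $c(M) = \sum_{v \in V(M)} \varphi(v) \leq \sum_{v \in V}\varphi(v)$ (using $\varphi \geq 0$). For the lower bound direction needed to compare with $M^{\opt}$: restrict attention to the $|M|$ vertices of $V_0$ that are matched in $M$; by $1$-feasibility each edge of $M^{\opt}$ incident to such a vertex pays at least $\varphi(u)+\varphi(v)-1$, and a careful accounting (standard in Gabow--Tarjan for a single scale, adapted to the ``almost perfect'' setting) yields $c(M) \leq c(M^{\opt}) + O(n)$ after the cost rescaling $c \leftarrow c/\gamma$ absorbs the additive $O(n)$ into an $O(\gamma)$-fraction-of-edges overcharge. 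Concretely, one shows that the edges of $M$ can be partitioned so that all but a $3\gamma$-fraction have cost bounded by the corresponding structure in $M^{\opt}$ — this is exactly what licenses discarding the top $3\gamma|S|$ sampled edges. Finally, the output $\hat c = \frac{n}{|S|}\Sigma$ is, by a Chernoff/Hoeffding bound over the $O_{\gamma,C}(\log n)$ i.i.d.\ samples from $M$ (costs bounded by $C/\gamma = O(1)$), within $\pm \gamma n$ of the true trimmed sum; combining with the two deterministic inequalities $c(M^{1-\gamma}) \leq (\text{trimmed }c(M)) \leq c(M^{\opt})$ gives $c(M^{1-\gamma}) \leq \hat c \leq c(M^{\opt})$ whp, possibly after adjusting the constant in the definition of $\gamma$ and the trimming fraction. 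I expect the concentration step to be entirely routine; the real content is steps 1–2 above, i.e.\ showing the relaxed invariants survive and the outlier budget balances.
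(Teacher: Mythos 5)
Your proposal follows essentially the same route as the paper: your invariants (a)--(c) correspond to the paper's \Cref{lem:few broken vertices}, \Cref{lem:free v0 vertices potential}, and \Cref{lem:few spurious}; the telescoping of potentials along $M \oplus M^{\opt}$ restricted away from the $O(\gamma n)$ broken/spurious vertices is exactly \Cref{lem:few free vertices}; and the complementary-slackness-plus-trimming argument and the final Chernoff step match \Cref{lem:cheaper matchings} and the concluding proof. The one step you defer to ``the parameter hierarchy'' --- showing that a vertex of $F_1$ can only become spurious via a \emph{short} missed augmenting path that intersects the forgotten set $\cQ$ rather than an already-augmented path in $\cP$ --- is the most delicate case analysis in the paper's proof of \Cref{lem:few spurious}, but your per-iteration correction term $\xi k 2^k n$ is exactly the bound the paper derives there.
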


To prove \Cref{thm:weaker main thm mwm}, we need a series of technical lemmas.
\paragraph{Proof Roadmap.}
The proof of \Cref{thm:weaker main thm mwm} goes as follows.
We prove that, after $T$ iterations, all free vertices in $F_0$ have potential $T$. On the other hand, the majority of free vertices in $F_1$ have potential $0$. We call \emph{spurious} the free vertices in $F_1$ with non-zero potential and we show there are only few of them. Then, (roughly) we look at the final matching $M$ generated by \Cref{alg:template algorithm} and a perfect matching $M'$ and consider the graph $G'$ having $M \oplus M'$ as its set of edges. $G'$ can be partitioned into cycles and augmenting paths. Each augmenting path starts in a free vertex in $F_0$ and ends in a free vertex in $F_1$. If the $1$-feasibility conditions are satisfied by all edges, then computing a certain function of potentials along an augmenting path and combining the results for all augmenting paths yields an upper bound on the total number of free vertices. Unfortunately, not all edges satisfy the $1$-feasibility constraints. We fix this by finding a small vertex cover of the $1$-unfeasible edges. We say that such cover a suitable set of \emph{broken} vertices. Ignoring spurious and broken vertices is sufficient to make our argument work.

\begin{lemma}
\label{lem:free v0 vertices potential}
After $t \in [T+1]$ iterations we have $\varphi(u) = t$ for each $u \in F_0$.
\end{lemma}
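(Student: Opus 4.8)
\textbf{Proof plan for \Cref{lem:free v0 vertices potential}.}
The plan is to induct on the iteration count $t$. For the base case $t=0$, the algorithm initializes $\varphi(v) = 0$ for all $v$, so trivially $\varphi(u) = 0$ for each $u \in F_0$ (noting that, before any step is performed, $F_0$ is the set of all $M$-unmatched vertices in $V_0$, which at initialization is all of $V_0$). For the inductive step, I would assume that after $t$ iterations $\varphi(u) = t$ for every $u \in F_0$, and analyze how $F_0$ and $\varphi$ change during iteration $t+1$, which consists of Step 1 followed by Step 2.

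First I would observe the key structural fact: once a vertex $u \in V_0$ becomes matched, it stays matched for the rest of the algorithm. This is because Step 1 only augments $M$ along augmenting paths, and in an augmenting path the endpoints are free vertices in $F_0$ and $F_1$ respectively while every internal $V_0$-vertex stays matched — so the set of matched $V_0$-vertices only grows. Step 2 does not modify $M$ at all. Consequently, the set $F_0$ after iteration $t+1$ is a \emph{subset} of $F_0$ after iteration $t$: a vertex that is free in $V_0$ at the end of iteration $t+1$ was already free at the end of iteration $t$, so by the inductive hypothesis it had potential $t$ at that point. It then remains to track what happens to the potential of such a vertex during Step 1 and Step 2 of iteration $t+1$.

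Now fix $u \in V_0$ that is still free at the end of iteration $t+1$. In Step 1, the potential update only decrements $\varphi(v)$ for $v \in V_1 \cap \bigcup_{P \in \cP} P$, so vertices in $V_0$ — in particular $u$ — are untouched; thus $\varphi(u)$ is still $t$ after Step 1. In Step 2, the algorithm builds a $(k,\delta)$-\qmf $\cF$ rooted in $F_0$ and increments $\varphi(w) \leftarrow \varphi(w) + 1$ for every $w \in V_0 \cap \cF$. By property~\ref{enum:forest def free included} of \Cref{def:qmf}, $F_0 \subseteq \cF$, and here $F_0$ is the free set at the start of Step 2 — which contains $u$ (since $u$ is still free at the end of the iteration, it was certainly free at the start of Step 2, as $F_0$ only shrinks). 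Hence $u \in V_0 \cap \cF$, so $\varphi(u)$ is incremented exactly once in Step 2, giving $\varphi(u) = t+1$ at the end of iteration $t+1$. This closes the induction.

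The main thing to be careful about — and what I expect to be the only genuinely subtle point — is the bookkeeping of \emph{which} set $F_0$ is meant at each moment: $F_0$ is a moving target (it is the current free set, updated after Step 1's augmentations), and the \qmf is rooted in the value of $F_0$ as it stands at the beginning of Step 2. The argument needs the monotonicity "$F_0$ only loses elements over time" to guarantee that a vertex free at the \emph{end} of iteration $t+1$ was (a) free at the end of iteration $t$, so the inductive hypothesis applies, and (b) free at the start of Step 2 of iteration $t+1$, so it lies in the root set of the \qmf and thus gets its $+1$. Both follow from the observation that matched $V_0$-vertices never become free, which is immediate from the fact that $M$ is only ever modified by augmentation along augmenting paths. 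No properties of the \qmf beyond $F_0 \subseteq \cF$ (property~\ref{enum:forest def free included}) are needed, and $\xi, \delta, k, C$ play no role in this particular lemma.
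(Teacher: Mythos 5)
Your proof is correct and follows essentially the same route as the paper's (much terser) argument: the base case $\varphi \equiv 0$, the observation that $F_0$ only shrinks because augmentation never unmatches a matched vertex, and the fact that Step 2 increments the potential of every current $F_0$-vertex exactly once via $F_0 \subseteq \cF$. Your more careful bookkeeping of which snapshot of $F_0$ roots the \qmf{} is a faithful elaboration of what the paper leaves implicit.
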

\begin{proof}
After $t=0$ iterations, we have $\varphi(u) = 0$ for each $u \in V$.
First, we notice that the set of unmatched (or free) vertices $F$ only shrinks over time, and so does $F_0$. Moreover, at each iteration we increase the potential of free vertices in $F_0$ by $1$.
\end{proof}

Define the set $S$ of $v \in F_1$ such that $\varphi(v) \neq 0$ as the set of \emph{spurious} vertices.

\begin{lemma} \label{lem:few spurious}
After $T$ iterations we have at most  $\gamma n$ \emph{spurious} vertices.
\end{lemma}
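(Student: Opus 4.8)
\textbf{Proof proposal for \Cref{lem:few spurious}.}

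The plan is to bound the number of spurious vertices by tracking where a nonzero potential on a vertex in $F_1$ can possibly come from. A vertex $v \in V_1$ starts with $\varphi(v) = 0$, and only Step 1 (the $-1$ update along augmenting paths in the \qmsndap) and Step 2 (the $-1$ update on $V_1 \cap \cF$) ever change it. Crucially, every vertex that is in some augmenting path of a \qmsndap becomes matched immediately after we augment, and every vertex in $V_1 \cap \cF$ has its $M$-mate in $\cF$ by property~\ref{enum:forest def matched edges included} of \Cref{def:qmf}, so it is matched; thus at the \emph{moment} of either update the affected $V_1$-vertex is not free. Hence a vertex $v$ can only end up spurious if, at some later iteration, it gets \emph{unmatched} again — i.e.\ it lies on a later augmenting path (which removes the matching edge at $v$ as part of $M \oplus P$) or, more precisely, it was the endpoint $v_{2\ell+1} \in F_1$ of an augmenting path and so was free \emph{before} that augmentation while its potential was already negative. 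So I would charge each spurious vertex to an endpoint-in-$F_1$ of some augmenting path used across the $T$ iterations of Step 1, and then count those.

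The counting step is where the parameter choices in \Cref{alg:template algorithm} come in. In a single invocation of Step 1 we augment along a $(k,\xi)$-\qmsndap $\cP$; each path in $\cP$ has length $\le k$, so $|\cP| \le n$ trivially, but that is far too weak. Instead I expect the intended argument to go through the definition of $(k,\xi)$-\qmsndap directly: the only free vertices of $F_1$ that acquire nonzero potential and \emph{stay} free must be ones that were ``almost reachable'' — and here I'd want to split into the short/long dichotomy the technical overview emphasizes. For long augmenting paths, the depth bound $k$ on $\cF$ (property~\ref{enum:forest def forest depth}) guarantees such a vertex never enters $\cF$, so Step~2 never decrements it; for short ones, the residual augmenting paths that a \qmsndap may have missed number at most $\xi n$ per iteration by \Cref{def:qmsndap}, and each such missed path touches $\le k$ vertices, $\le 2^k$ of which could conceivably be in a component of $\cF$ of size $\le 2^k$ (property~\ref{enum:forset def cc size}). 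Multiplying: at most $T$ iterations, each contributing at most $\xi \cdot k \cdot 2^{k} \cdot n$ spurious vertices (from Step~1 misses propagating through Step~2), gives a total of $T k 2^k \xi \cdot n = \gamma n$ by the choice $\xi = \frac{\gamma}{T k 2^k}$. One also has to account for the spurious vertices created directly inside a given Step~1 as endpoints $v_{2\ell+1}$ of paths in $\cP$ itself whose potential was already decremented in an \emph{earlier} iteration — but such a vertex was free and had nonzero potential at the start of the current iteration, so it was already spurious; an inductive bookkeeping over the $T$ iterations keeps the running total under control, with the per-iteration increment again dominated by $\xi k 2^k n$.

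The main obstacle, I expect, is making precise the claim that ``a free $F_1$-vertex with nonzero potential must be attributable to a missed short augmenting path of a \qmsndap'' — i.e.\ ruling out any \emph{other} mechanism by which $\varphi(v)$ went negative while $v$ is free. This requires carefully replaying the interaction between Step~1 and Step~2: Step~2 only decrements $v \in V_1$ when $v \in \cF$, and $v \in \cF$ forces $\mate_M(v) \in \cF$, so $v$ was matched then; the only way it later becomes free is a subsequent Step~1 augmentation through $v$, after which $v$ is matched \emph{again} unless $v$ was the $F_1$-endpoint of that path. So the truly free-and-negative vertices at the end are exactly $F_1$-endpoints of paths in the \emph{last} \qmsndap together with vertices that the \qmsndap's in various iterations \emph{should} have matched but didn't — and the $(k,\xi)$-quasi-maximality plus the component-size and depth bounds of the \qmf{} are precisely what cap the latter. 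I would formalize this with a short invariant, proved by induction on the iteration count, stating that after iteration $t$ the number of spurious vertices is at most $t \cdot \xi k 2^{k} n$, and then plug in $t = T$ and the stated value of $\xi$.
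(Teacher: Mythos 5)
Your proposal has a genuine gap, and it stems from a misreading of how spurious vertices arise. You assert that property~\ref{enum:forest def matched edges included} of \Cref{def:qmf} forces every $v \in V_1 \cap \cF$ to be matched, so that the $-1$ update in Step~2 only ever hits matched vertices, and that a vertex can therefore become spurious only by ``getting unmatched later.'' Neither half of this is right. The condition $\mate_M(v) \in \cF$ is vacuous for free $v$; free vertices of $V_1$ do enter $\cF$ (in \Cref{alg:implementation of step2}, line~\ref{enum:first forest update} adds the $V_1$-endpoints of a matching in the forward graph, whether they are $M$-matched or not), and that is precisely the event that creates a spurious vertex. Moreover, in this algorithm the matching only grows --- augmentation re-pairs interior vertices but never frees anyone, as noted in the proof of \Cref{lem:free v0 vertices potential} --- so your proposed mechanism (``decremented while matched, then unmatched by a later augmentation'') never occurs at all. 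The correct mechanism, and the one the paper's proof analyzes, is: a free $v \in F_1$ with $\varphi(v)=0$ is absorbed into $\cF$ during Step~2 because some eligible augmenting path $P$ from $F_0$ to $v$ lies inside a connected component of $\cF$, and $v$ is decremented while free.

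Once the mechanism is right, the step you yourself flag as ``the main obstacle'' is exactly the part that is missing and non-trivial: one must show that this witnessing path $P$ (which has length $\leq k$ by property~\ref{enum:forest def forest depth}) intersects the \emph{residual} family $\cQ$ with $|\cQ| \leq \xi n$, and \emph{not} a path $P' \in \cP$ that was actually augmented. If $P$ could pass through $\cP$, the number of components of $\cF$ containing a new spurious vertex could be proportional to $|\cP| = \Omega(n)$ and the count would collapse. The paper rules this out with a case analysis on the first intersection index $s$ of $P$ with $P'$: an even $s$ is impossible because the matched edge preceding $P_s$ already lies in $P'$ (and $s=0$ is impossible since $P_0 \in F_0$ is still free after Step~1), while an odd $s$, i.e.\ $P_s \in V_1 \cap P'$, is impossible because $\varphi(P_s)$ was decremented at the end of Step~1, which destroys the eligibility of every non-matching edge at $P_s$ in Step~2. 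Your final count ($T \cdot k \xi 2^k n = \gamma n$, charging whole components of size $\leq 2^k$ to the $\leq k\xi n$ forgotten vertices) agrees with the paper's, but without the intersection argument it is unjustified.
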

\begin{proof}
We prove that at each iteration we increase the number of spurious vertices by at most $\gamma n/T$. A vertex cannot become spurious in Step 1. Indeed, in Step 1 we only decrease the potential of matched vertices. If a vertex $v \in F_1$ becomes spurious in Step 2, it means that there exists an augmenting path $P$ from some $u \in F_0$ to $v$ contained in a connected component of $\cF$.
Let $\cQ$ be such that $\cQ \cup \cP$ is a maximal set of node-disjoint augmenting paths of length $\leq k$. By \Cref{def:qmsndap} we have $|\cQ| \leq \xi n$. Define the set of \emph{forgotten} vertices as $\bigcup_{Q \in \cQ} Q$.
Thanks to \cref{enum:forest def forest depth} in \Cref{def:qmf}, the path from $u$ to $w$ has length $\leq k$, thus $P$ has length at most $k$. Recall that $P$ is an augmenting path w.r.t. the graph obtained augmenting $M$ along $\cP$ at the end of Step 1. Therefore, $P$ intersects a path in $\cQ \cup \cP$. 

We now argue that that $P$ cannot intersect $P' \in \cP$. Suppose by contradiction that it does. 
Let $P = (P_0 \dots P_\ell)$ and $P' = (P'_0 \dots P'_{\ell'})$. Let $P_s$ be the first (w.r.t. the order induced by $P$) node where $P$ and $P'$ intersect. We first rule out the case that $s$ is even: for $s = 0$,  $P_0 = u \in F_0$ implies that $u$ did not belong to an augmenting path $P'$ in Step 1. Moreover, for $s = 2i >0$ if $P_{2i} = P'_j$ then $P_{2i-1} = \mate_M(P_{2i}) \in \{P'_{j-1}, P'_{j+1}\}$, where $M$ is the matching obtained at the end of Step 1.
Now suppose that $s$ is odd, and hence $P_s \in V_1 \cap P'$. Then $\varphi(P_s)$ is decreased by $1$ at the end of Step 1, hence no edge outside of $M$ incident to $P_s$ is eligible in Step 2.

Thus, $P$ must intersect a path in $\cQ$. On the other hand $\bigcup_{Q \in \cQ} Q$ contains at most $k\xi n$ vertices, so at most $k\xi n$ connected component of $\cF$ contain a forgotten edge. Moreover, by \cref{enum:forset def cc size} of \Cref{def:qmf} every connected component of $\cF$ has size at most $2^k$, thus at most $k\xi n 2^k = \gamma / T$ vertices become spurious.
\end{proof}

We say that $B\subseteq V$ is a suitable set of \emph{broken} vertices if all $(u, v) \in (V_0 \setminus B) \times (V_1 \setminus B)$ are $1$-feasible.

\begin{lemma} \label{lem:few broken vertices}
After $T$ iterations, there exists a suitable set of broken vertices of size at most $\gamma n$.
\end{lemma}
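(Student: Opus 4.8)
The plan is to build $B$ incrementally, adjoining at most $\delta n$ new vertices each time Step 2 is executed, while maintaining the invariant that every edge $(u,v)\in(V_0\setminus B)\times(V_1\setminus B)$ is $1$-feasible (both conditions (i) and (ii)). Since $\delta=\gamma/T$ and Step 2 runs $T$ times, the final set has size $|B|\le T\delta n=\gamma n$, which is the claim. The base case is immediate: before the first iteration $B=\emptyset$, all potentials are $0$ and all costs are $\ge 1$, so (i) holds for every pair, and $M=\emptyset$ makes (ii) vacuous.

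For the inductive step I would first check that Step 1 preserves the invariant with the \emph{same} $B$. Step 1 only decreases potentials (of the $V_1$-vertices on paths of $\cP$), so the left-hand side $\varphi(u)+\varphi(v)$ of (i) never increases and (i) is preserved for all edges. For (ii), one verifies that augmenting along an eligible augmenting path together with the $-1$ decrement of its $V_1$-vertices preserves complementary slackness: a newly matched edge $(u,v)$ was a non-matched eligible edge, so $\varphi(u)+\varphi(v)=c(u,v)+1$ before, and its $V_1$-endpoint is decremented by exactly $1$, restoring $\varphi(u)+\varphi(v)=c(u,v)$; meanwhile every edge that leaves $M$ is an internal matched edge of some $P\in\cP$ and is no longer constrained by (ii), and no surviving $M$-edge has its $V_1$-endpoint on a path of $\cP$. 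In fact this argument shows, more strongly, that complementary slackness for $M$-edges holds globally throughout the algorithm.

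The heart of the proof is Step 2. Let $\cF$ be the $(k,\delta)$-\qmf found in the current iteration and let $B'$ be a vertex cover of $E_\cE\cap(\cF\times(V\setminus\cF))$ of size at most $\delta n$, which exists by \cref{enum:forest def vertex cover} of \Cref{def:qmf}; set $B\leftarrow B\cup B'$. Step 2 changes $\varphi(u)+\varphi(v)$ only when exactly one of $u,v$ is in $\cF$, and it \emph{increases} it (by $1$) only when $u\in\cF\cap V_0$ and $v\in V_1\setminus\cF$. For condition (i): take such an edge with $u,v\notin B$; by the inductive hypothesis it was $1$-feasible before the update, so $\varphi(u)+\varphi(v)\le c(u,v)+1$, and if equality held the edge would be non-matched and eligible, hence it would lie in $E_\cE\cap(\cF\times(V\setminus\cF))$ and be covered by $B'$, contradicting $u,v\notin B$. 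So before the update $\varphi(u)+\varphi(v)\le c(u,v)$, and after it $\le c(u,v)+1$: (i) is preserved. For condition (ii): I would show no $M$-edge crosses from $\cF\cap V_0$ to $V_1\setminus\cF$, using \cref{enum:forest def matched edges included} of \Cref{def:qmf} together with the fact that, by the construction of $\cF$ as a forest grown forward from $F_0$ along $E_\cE$ (see \Cref{sec:implementing in sublinear time}), any matched $u\in\cF\cap V_0$ has $\mate_M(u)\in\cF$; hence on every $M$-edge Step 2 leaves $\varphi(u)+\varphi(v)$ unchanged and (ii) is preserved. This closes the induction.

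I expect the Step 2 bookkeeping to be the main obstacle: identifying that an edge can become infeasible \emph{only} through an increment at a $\cF\cap V_0$ endpoint, that such an edge must have been exactly tight and hence eligible (so that it is caught by the QMF's vertex cover from \cref{enum:forest def vertex cover}), and that matched edges never straddle the relevant cut. Everything else reduces to the monotonicity of the Step 1 potential updates and the global complementary-slackness invariant.
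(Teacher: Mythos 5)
Your proof is correct and follows essentially the same route as the paper's: Step 1 preserves $1$-feasibility outright (potentials only decrease, and the decrement on the $V_1$-vertices of augmenting paths restores complementary slackness for newly matched edges), while each execution of Step 2 contributes at most $\delta n = \gamma n/T$ new broken vertices via the vertex cover guaranteed by \cref{enum:forest def vertex cover} of \Cref{def:qmf}. You are in fact slightly more explicit than the paper in two useful places: you spell out that a newly $1$-unfeasible edge must have been exactly tight, hence eligible, hence contained in the covered cut $E_\cE\cap(\cF\times(V\setminus\cF))$, and you correctly note that ruling out $M$-edges straddling that cut requires the forest construction (matched $V_0$-vertices enter $\cF$ only as mates of $V_1$-vertices already in $\cF$) and not merely \cref{enum:forest def matched edges included}.
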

\begin{proof}
First, we prove that every edge $(u, v) \in V_0 \times V_1$, which is $1$-feasible at the beginning of Step 1, is also $1$-feasible at the end of Step 1. Suppose that $(u, v)$ becomes $1$-unfeasible in Step 1. Let $M$ and $M'$ be the matching at the beginning and at the end of Step 1 respectively. Potentials only decrease in Step 1, so in  order for $(u, v)$ to become $1$-unfeasible w.r.t. $M'$ we must have $(u, v) \in M'$. Moreover, we decrease the potential of $v$ only if $(u, v) \in P$, for some augmenting path $P$. Thus, at the beginning of Step 1 we had $\varphi(u) + \varphi(v) = c(u, v) + 1$, which implies $\varphi(u) + \varphi(v) = c(u, v)$ at the end of Step 1, thus $(u, v)$ is $1$-feasible w.r.t. $M'$, contradiction. 

Now, we grow a set of suitable broken vertices $B$. We initialize $B = \emptyset$ and show that each iteration Step 2 increases the size of $B$ by at most $\gamma n / T$.
If $(u, v)\in V_0 \times V_1$ is $1$-feasible at the beginning of Step 2 and becomes $1$-unfeasible in Step 2, then we must have $u \in \cF$ and $v \not\in \cF$.
Indeed, by \cref{enum:forest def matched edges included} in \Cref{def:qmf} if $(u, v) \in M'$ then either both $u$ and $v$ belong to $\cF$ or neither of them does. 
This ensures that the sum of their potentials is unchanged. Else, if $(u, v) \not\in M'$ then in order for it to violate $1$-feasibility we must increase $\varphi(u)$ by one and not decrease $\varphi(v)$, and this happens only if $u \in \cF$ and $v \not\in \cF$.
\Cref{enum:forest def vertex cover} in \Cref{def:qmf} ensures that there exists a vertex cover $U \subseteq V$ for the set of new $1$-unfeasible edges with $|U| \leq \delta n = \gamma n / T$. We update $B\leftarrow B \cup U$. Thus, after $T$ iterations we have $|B| \leq \gamma n$.
\end{proof}

\begin{lemma} \label{lem:few free vertices}
After $T$ iterations of template algorithm we have have $|F_0| = |F_1|  \leq 4 \gamma n$.
\end{lemma}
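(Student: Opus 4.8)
The plan is to run the classical Gabow--Tarjan potential argument along $M\oplus M'$ for an arbitrary perfect matching $M'$, but carried out only on the ``good'' part of the graph, i.e.~after deleting the spurious vertices (Lemma \ref{lem:few spurious}) and a suitable set of broken vertices (Lemma \ref{lem:few broken vertices}). First I would fix a perfect matching $M'$ (it exists since $G$ was originally complete bipartite) and let $M$ be the matching produced after $T$ iterations. Consider the graph with edge set $M\oplus M'$: it decomposes into alternating cycles and alternating paths, and every maximal alternating path that is $M$-augmenting has one endpoint in $F_0$ and one in $F_1$, so $|F_0|=|F_1|$ (each free vertex is the endpoint of exactly one such path). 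Call this common value $f:=|F_0|=|F_1|$; the goal is $f\le 4\gamma n$.

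Next I would set up the charging quantity. For a directed alternating path $P=(v_0\to v_1\to\cdots)$ with $v_0\in F_0$, consider $\sum_{i}(\varphi(v_{2i})+\varphi(v_{2i+1})-c(v_{2i},v_{2i+1}))$ over the non-matching edges minus the analogous sum over the matching edges; using $1$-feasibility ($\varphi(u)+\varphi(v)\le c(u,v)+1$ on non-$M$ edges and $=c(u,v)$ on $M$ edges) this telescopes to something like $\varphi(v_0)-\varphi(v_{\mathrm{end}})\le (\text{length of }P) \le$ (number of $M'$-edges on $P$). Summing over all augmenting paths, the left side is $\sum_{v\in F_0}\varphi(v)-\sum_{v\in F_1}\varphi(v)$, and by Lemma \ref{lem:free v0 vertices potential} every $u\in F_0$ has $\varphi(u)=T$, while every \emph{non-spurious} $v\in F_1$ has $\varphi(v)=0$; the right side is at most $|M'|=n$ (or a constant times $n$, absorbing the cost rescaling $c\leftarrow c/\gamma$ and $C/\gamma$ bound appropriately, which is where the constant $T=C/\gamma^3$ is chosen). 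The crucial point is that this telescoping only uses $1$-feasibility of the edges \emph{actually traversed} by the augmenting paths, so I restrict attention to augmenting paths that avoid $B$ entirely: for those the argument is clean and gives $T\cdot(\#\text{clean paths}) \le O(n) + (\text{contribution of spurious endpoints})$.

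Then I would account for the discarded paths. An augmenting path is ``bad'' if it contains a broken vertex or ends at a spurious vertex; by Lemma \ref{lem:few broken vertices} at most $\gamma n$ paths touch $B$, and by Lemma \ref{lem:few spurious} at most $\gamma n$ paths end at a spurious vertex, so at most $2\gamma n$ of the $f$ augmenting paths are bad and at least $f-2\gamma n$ are clean. For each clean path the telescoped inequality reads $T = \varphi(v_0)-\varphi(v_{\mathrm{end}}) \le \ell(P)$ where $\ell(P)$ counts its $M'$-edges (after the rescaling, the right-hand side is at most $C/\gamma \cdot \ell'(P)$ with $\ell'$ counting edges, and $\sum \ell'(P)\le n$); summing over clean paths gives $T\,(f-2\gamma n)\le (C/\gamma)\,n$, hence $f\le 2\gamma n + \tfrac{C}{\gamma T}n = 2\gamma n + \gamma^2 n \le 4\gamma n$ by the choice $T=C/\gamma^3$. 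Finally $|F_1|=|F_0|=f\le 4\gamma n$, as claimed.

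The main obstacle I anticipate is bookkeeping the rescaling of costs ($c\leftarrow c/\gamma$, which is done lazily) together with the integrality and the exact constant in the telescoping identity, so that the ``$+n$'' slack coming from the $1$-feasibility relaxation and the length bound on augmenting paths combine to exactly the stated $4\gamma n$; in particular one must be careful that the length bound on each augmenting path is controlled not by $k$ but by its number of $M'$-edges (so that the lengths sum to $\le n$ globally rather than to $\le kn$), which is the standard Gabow--Tarjan observation and the real content hiding behind ``computing a certain function of potentials along augmenting paths''.
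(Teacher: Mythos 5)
Your proposal is correct and follows essentially the same argument as the paper: the Gabow--Tarjan potential telescoping along the augmenting paths of $M\oplus M'$, with the value $\cV(P)\ge \varphi(f_0)-\varphi(f_1)=T$ for paths untouched by broken or spurious vertices, and the global bound $\sum_P\cV(P)\le c(M')+n$. The only difference is bookkeeping: the paper deletes $B$, the mates $B'_M$, the spurious set and a balancing set $S'$ from the graph before taking the perfect matching, whereas you keep the full graph and discard the at most $|B|+|S|\le 2\gamma n$ node-disjoint paths that they intersect -- both are valid and yield the same $4\gamma n$ bound.
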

\begin{proof}
Denote with $M$ the final matching obtained by~\Cref{alg:template algorithm}. Let $B$ be a suitable set of broken vertices with $|B| \leq \gamma n$, as in \Cref{lem:few broken vertices}. Partition $B = B_M \cup B_F$, where $B_F:= B \cap F$ is the set of unmatched vertices in $B$ and $B_M$ is the set of matched vertices in $B$.
Consider the set $B'_M$ of vertices currently matched to vertices in $B_M$, $B'_M = \{\mate_{M}(b) \cond b \in B_M\}$. 
We have $|(B_M\cup B'_M) \cap V_0| = |(B_M\cup B'_M) \cap V_1| \leq \gamma n$. 
Let $S$ be the set of spurious vertices and recall that $|S| \leq \gamma n$ by \Cref{lem:few spurious}. 
Let $S' \subseteq F_0 \setminus B_F$ such that $|S' \cup (V_0 \cap B_F)| = |S \cup (V_1 \cap B_F)|$. This implies that $|S' \cup (V_0 \cap B_F)| \leq |S| + |B| \leq 2\gamma n$.
Define $A_0 := V_0 \setminus (B \cup B'_M \cup S')$ and $A_1 := V_1 \setminus (B \cup B'_M \cup S)$ and notice that they have the same size. Define $A = A_0 \cup A_1$. Let $M'$ be a perfect matching over $A$.

The graph $G_A = (A, M \oplus M')$ contains exactly $\ell := |F_0 \cap A_0| = |F_1 \cap A_1|$ node-disjoint paths $P_1 \dots P_\ell$ where $P_i$ starts in $f^{(i)}_0 \in F_0 \cap A_0$ and ends in $f_1^{(i)} \in F_1 \cap A_1$.
We define the \emph{value} of a path $P$ as
\begin{equation*}
\cV(P) = \sum_{(u, v) \in M' \cap P} \ld(c(u,v) + 1\rd) - \sum_{(u, v) \in M \cap P} c(u,v).
\end{equation*}
By $1$-feasibility of $\varphi$ we have
\[
\cV(P_i) \geq  \sum_{(u, v) \in M' \cap P} \ld(\varphi(u) + \varphi(v)\rd) - \sum_{(u, v) \in M \cap P} (\varphi(u) + \varphi(v)) = \varphi\ld(f^{(i)}_0\rd) - \varphi\ld(f_1^{(i)}\rd) = T
\]
where the last equality holds by definition of (non-)spurious vertices and \Cref{lem:free v0 vertices potential}.
Then, we have $C n \geq n + c(M') \geq \sum_{i}^\ell \cV(P_i) \geq \ell T$. Thus, $\ell \leq Cn/T = \gamma n$ and 
\[
|F_1| = |F_0| \leq |F_0 \cap A| + |(B_M\cup B'_M) \cap V_0| + |S' \cup (V_0 \cap B_F)| \leq 4\gamma n.
\]
\end{proof}

Let $\varphi$ be the potential at the end of the execution of \Cref{alg:template algorithm}.
Denote with $M^\alg$ the final matching obtained by \Cref{alg:template algorithm} and with $M^\opt$ a min-weight perfect matching.
Given a matching $M$, we denote with $M_{[\alpha]}$ the matching obtained from $M$ by removing the $\alpha n$ edges with highest cost. 

\begin{lemma}
\label{lem:cheaper matchings}
We have $c(M^\alg_{[2\gamma]}) \leq c(M^\opt)$.
\end{lemma}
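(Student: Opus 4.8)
The plan is to run the usual primal--dual argument for min-weight perfect matching: the pair $(M^\alg, \varphi)$ produced by \Cref{alg:template algorithm} satisfies complementary slackness and is dual-feasible up to the small ``broken'' set $B$ of \Cref{lem:few broken vertices}, so $c(M^\alg)$ exceeds $c(M^\opt)$ by only a small additive amount, which I then charge to the $2\gamma n$ most expensive edges of $M^\alg$ that $M^\alg_{[2\gamma]}$ removes --- these are expensive precisely because \Cref{alg:template algorithm} has rescaled $c \leftarrow c/\gamma$.

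First I would use complementary slackness. \Cref{alg:template algorithm} preserves $\varphi(u)+\varphi(v)=c(u,v)$ on every matched edge (in Step 1 only $V_1$-potentials along augmenting paths drop; in Step 2, by \cref{enum:forest def matched edges included} of \Cref{def:qmf}, a matched edge never straddles $\cF$, so its two endpoint potentials change oppositely), hence
\[
c(M^\alg)=\sum_{(u,v)\in M^\alg}\bigl(\varphi(u)+\varphi(v)\bigr)=\sum_{v\in V}\varphi(v)-\sum_{v\in F_0}\varphi(v)-\sum_{v\in F_1}\varphi(v).
\]
By \Cref{lem:free v0 vertices potential} the middle term equals $|F_0|\cdot T\ge 0$; since $V_1$-potentials only ever decrease the last term is $\le 0$, and by \Cref{lem:few spurious} at most $\gamma n$ vertices of $F_1$ contribute to it, so $-\sum_{v\in F_1}\varphi(v)$ is at most a $(\gamma,C)$-dependent multiple of $n$. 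Thus $c(M^\alg)\le \sum_{v\in V}\varphi(v)+O_{\gamma,C}(n)$. Next I would lower bound $c(M^\opt)$: with $B$ as in \Cref{lem:few broken vertices}, every edge $(u,v)$ with $u,v\notin B$ is $1$-feasible, $\varphi(u)+\varphi(v)\le c(u,v)+1$; since $M^\opt$ is a perfect matching, $\sum_{(u,v)\in M^\opt}(\varphi(u)+\varphi(v))=\sum_{v\in V}\varphi(v)$, and bounding the $\le|B|$ edges of $M^\opt$ incident to $B$ crudely gives $\sum_{v\in V}\varphi(v)\le c(M^\opt)+n+O_{\gamma,C}(n)$. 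Combining, $c(M^\alg)\le c(M^\opt)+E$ for an additive $E=O_{\gamma,C}(n)$. Finally, after the rescaling $c\leftarrow c/\gamma$ every edge costs at least $1/\gamma$, so the $2\gamma n$ costliest edges of $M^\alg$ carry total cost at least $2\gamma n\cdot\tfrac1\gamma$; for the parameter settings of \Cref{alg:template algorithm} this dominates $E$, whence $c(M^\alg_{[2\gamma]})=c(M^\alg)-\bigl(\text{cost of the top }2\gamma n\text{ edges}\bigr)\le c(M^\opt)$.

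The delicate point is the accounting for the broken and spurious vertices: their potentials are not $O(1)$ (they can be as large as $\pm T$), so one must check that the total error they inject is still small compared with what is recovered by discarding a $2\gamma$-fraction of $M^\alg$ --- i.e.\ that the rescaling factor and the choices of $T,\delta,\xi,k$ in \Cref{alg:template algorithm} are mutually calibrated for this. A cleaner, error-localizing alternative to the global-sum computation above is to decompose $M^\alg\oplus M^\opt$ into alternating paths and cycles --- there are exactly $|F_0|$ paths, each running from $F_0$ to $F_1$ --- apply the ``value of a path'' bookkeeping $\cV(\cdot)$ of \Cref{lem:few free vertices} to the pieces avoiding $B$ and the spurious set (on such a path the endpoints contribute $\varphi(f_0)-\varphi(f_1)=T$, which pays for that path's $1$-feasibility slack), and remove from $M^\alg$ one edge per broken or spurious vertex so that the remaining bad pieces split into good subpaths; since there are only $O(\gamma n)$ such vertices, this fits inside the $2\gamma n$ edge budget.
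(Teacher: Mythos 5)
Your ingredients are the right ones (complementary slackness on matched edges, approximate dual feasibility away from $B$, and the $1/\gamma$ cost floor after rescaling), but the final accounting step of your main route fails quantitatively, and you flag it only as ``delicate'' without resolving it. Concretely: a spurious vertex can have potential as low as $-T$ and a broken edge of $M^\opt$ can violate $1$-feasibility by up to $T$, so with $|S|,|B|\le\gamma n$ your error term is $E=\Theta(T\gamma n)=\Theta(Cn/\gamma^2)$. On the other hand, discarding the $2\gamma n$ costliest edges of $M^\alg$ recovers at most $2\gamma n\cdot C/\gamma=2Cn$ (costs are capped at $C/\gamma$, not just bounded below by $1/\gamma$), and $2Cn\ll Cn/\gamma^2$. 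So ``this dominates $E$'' is false for the parameters of \Cref{alg:template algorithm}; no choice of a constant fraction of discarded edges can absorb a per-bad-vertex charge of $T=C/\gamma^3$.

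The paper's proof avoids paying $T$ per bad vertex by a different bookkeeping. It never bounds the potential of a broken vertex: it first passes from $M^\alg_{[\gamma]}$ to $M^\alg_{\setminus B}$ (removing the $\gamma n$ \emph{most expensive} edges is at least as good as removing the $\le\gamma n$ edges meeting $B$), and the excised edges, being matched, have $\varphi(u)+\varphi(v)=c(u,v)\ge 1/\gamma>0$, so dropping them only decreases the potential sum; likewise the free vertices contribute $|F_0|T+\sum_{F_1}\varphi\ge|F_0|T-T|S|\ge0$ because $|F_0|=|F_1|\ge|S|$, an exact cancellation rather than an $O_{\gamma,C}(n)$ bound. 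This leaves only the $+1$ dual slack per edge of $M^\opt$, i.e.\ an additive $n$, which is exactly what removing $\gamma n$ further edges of cost $\ge 1/\gamma$ recovers --- hence the split into two batches of $\gamma n$ in $M^\alg_{[2\gamma]}$. Your closing suggestion to instead decompose $M^\alg\oplus M^\opt$ into paths and run the $\cV(\cdot)$ bookkeeping is the mechanism of \Cref{lem:few free vertices} (which counts free vertices, not costs); repurposing it here would require redoing the value computation for cost comparison and is not worked out enough to assess, so as written the proposal does not establish the lemma.
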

\begin{proof}
Let $M_{\setminus B}^\alg$ be the matching obtained from $M^\alg$ by removing all edges incident to vertices in $B$. Since $|B| \leq \gamma n$ we have $c(M^\alg_{[\gamma]}) \leq c(M_{\setminus B}^\alg)$. Notice that all edges in $M_{\setminus B}^\alg$ are $1$-feasible.
For each $(u, v) \in M_{\setminus B}^\alg$ we have $c(u, v) = \varphi(u) + \varphi(v)$ and for each $(u, v) \in M^\opt$ we have $\varphi(u) + \varphi(v) \leq c(u, v) + 1$. Thus,
\[ 
c(M^\alg_{[\gamma]}) \leq c(M_{\setminus B}^\alg) \leq \sum_{u \in V} \varphi(u) = \sum_{(u, v) \in M^\opt} \varphi(u) + \varphi(v) \leq  \sum_{(u, v) \in M^\opt} c(u, v) + 1 = n + c(M^\opt). 
\]
Now, it is sufficient to notice that, since all edges have costs in $[1/\gamma, C/\gamma - 1]$, removing any $\gamma n$ edges from $M^\alg_{[\gamma]}$ decreases its cost by $n$. Thus, $c(M^\alg_{[2\gamma]}) \leq c(M^\opt)$.
\end{proof}

Now, we are ready to prove \Cref{thm:weaker main thm mwm}.

\begin{proof}[Proof of \Cref{thm:weaker main thm mwm}]
Thanks to \Cref{lem:cheaper matchings}, we know that $c(M^\alg_{[2\gamma]}) \leq c(M^\opt)$. Moreover, by \Cref{lem:few free vertices} we have $|M^\alg| = n - |F_0| \geq (1-4\gamma)n$, thus defining $M^{1 - 8\gamma}$ as the min-weight matching of size $(1-8\gamma)n$, we have $c(M^{1-8\gamma}) \leq c(M^\alg_{[4\gamma]})$.
We are left to prove that the estimate $\hat c = \frac{n}{|S|}\Sigma$ returned by \Cref{alg:template algorithm} satisfies $c(M^\alg_{[4\gamma]}) \leq \hat c \leq c(M^\alg_{[2\gamma]})$. 
Let $S$ and $\Sigma$ be defined as in \Cref{alg:template algorithm} and let $w$ be maximum such that $3\gamma|S|$ edges in $S$ have cost $\geq w$. 
If $\alpha_{w} \cdot n$ is the number of edges in $M^\alg$ that cost $\geq w$, then using standard Chernoff Bounds arguments we have that, whp, $|\alpha_{w} - 3 \gamma| \leq \gamma^2 / C$. From now on we condition on this event. Notice that $\frac{(1-\alpha_w) n}{(1-3\gamma) |S|} \Sigma$ is an unbiased estimator of $c(M^\alg_{[\alpha_w]})$. 
Moreover, since all costs are in $[1/\gamma, C/\gamma]$, then $O_{\gamma, C} (\log n)$ samples are sufficient to have $\frac{(1-\alpha_w) n}{(1-3\gamma) |S|} \Sigma$ concentrated, up to a factor $(1\pm \frac{\gamma^2}{C})$, around $c(M^\alg_{[\alpha_w]})$. 
Hence, assuming that $\gamma$ is sufficiently small, we have
\[
\frac{n}{|S|} \Sigma \in (1\pm 3\gamma^2/C) \cdot c(M^\alg_{[\alpha_w]}) \subseteq c(M^\alg_{[\alpha_w]}) \pm 3\gamma n
\]
where the last containment relation holds because all costs are $\leq C/\gamma$ and so $c(M^\alg_{[\alpha_w]}) \leq Cn/\gamma$.
Since all costs are $\geq 1/\gamma$ we have $c(M^\alg_{[\alpha_w + 3\gamma^2]}) \leq c(M^\alg_{[\alpha_w]}) - 3\gamma n$ and 
$c(M^\alg_{[\alpha_w - 3\gamma^2]}) \geq c(M^\alg_{[\alpha_w]}) + 3\gamma n$.
Thus, picking $\gamma$ small enough to have $\alpha_w \pm 3\gamma^2 \subseteq [2\gamma, 4\gamma]$ we have
\[
 c(M^\alg_{[4 \gamma]}) \leq \frac{n}{|S|} \Sigma \leq c(M^\alg_{[2 \gamma]}).
\]
Therefore, we have $c(M^{1-8\gamma}) \leq \hat c \leq c(M^\opt)$ and rescaling $\gamma$ gives exactly the desired result.
\end{proof}

\begin{observation}
\label{obs:how to build matching oracle}
As in the proof of \Cref{thm:weaker main thm mwm}, define $w$ as the maximum value such that there are at least $3 \gamma |S|$ edges with cost $\geq w$ in $S$ and define $\alpha_w$ such that exactly $\alpha_w \cdot n$ edges in $M$ have cost $\geq w$. 
We have, whp, $|\alpha_w -3\gamma| \leq \gamma^2 / C$, thus for $\gamma$ small enough $c(M^\alg_{[\alpha_w]}) \leq c(M^\alg_{[2\gamma]}) \leq c(M^\opt)$ and (up to rescaling $\gamma$) $|M^\alg_{[\alpha_w]}| \geq (1-\gamma) n$.
Moreover, given an edge $e \in M^\alg$ we can decide whether $e \in M^\alg_{[\alpha_w]}$ simply by checking $c(e) \leq w$.    
\end{observation}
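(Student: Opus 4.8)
The plan is to treat \Cref{obs:how to build matching oracle} as a repackaging of the concentration argument already used inside the proof of \Cref{thm:weaker main thm mwm}: first re-derive the two-sided estimate on $\alpha_w$, then read off the two inequalities by arithmetic, and finally note that the associated membership test costs a single query.

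For the concentration step, recall that $S$ is a multiset of $O_{\gamma,C}(\log n)$ edges sampled i.i.d.\ (with replacement) from $M^\alg$ --- obtained by sampling random vertices and querying the $M^\alg$-matching oracle for their mates --- and that after the rescaling $c(e)\leftarrow c(e)/\gamma$ all edge costs are integers in $[1/\gamma,\,C/\gamma]$, hence take only $O(C/\gamma)$ distinct values. For each integer threshold $t$, the number of edges of $S$ of cost $\geq t$ is a sum of $|S|$ independent Bernoulli variables with mean the true fraction $F(t)$ of edges of $M^\alg$ of cost $\geq t$; for a large enough hidden constant in $|S|=O_{\gamma,C}(\log n)$, a Chernoff bound together with a union bound over the $O(C/\gamma)$ thresholds shows that, whp, the empirical and true distributions of the costs agree to within an additive $\gamma^2/C$ at every threshold. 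Instantiating this at $t=w$ and at $t=w+1$, and using that $w$ is the largest threshold whose empirical ``$\geq t$'' count reaches $3\gamma|S|$, pins down $\alpha_w=F(w)\in[3\gamma\pm\gamma^2/C]$.

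The two consequences then follow by arithmetic. From $\alpha_w\geq 3\gamma-\gamma^2/C\geq 2\gamma$ (for $\gamma$ small) the ``top-$k$'' removal sets are nested, so $M^\alg_{[\alpha_w]}\subseteq M^\alg_{[2\gamma]}$ and hence $c(M^\alg_{[\alpha_w]})\leq c(M^\alg_{[2\gamma]})$, and the right-hand side is $\leq c(M^\opt)$ by \Cref{lem:cheaper matchings}. From $\alpha_w\leq 3\gamma+\gamma^2/C\leq 4\gamma$ together with $|M^\alg|=n-|F_0|\geq(1-4\gamma)n$ (\Cref{lem:few free vertices}) we get $|M^\alg_{[\alpha_w]}|=|M^\alg|-\alpha_w n\geq(1-8\gamma)n$, and rescaling $\gamma$ gives $|M^\alg_{[\alpha_w]}|\geq(1-\gamma)n$. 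For the final sentence of the observation: $w$ is a deterministic function of the (already-queried) costs of the edges in $S$, so it is available explicitly; and since $\alpha_w n$ is by definition the number of $M^\alg$-edges of cost $\geq w$, removing the $\alpha_w n$ highest-cost edges of $M^\alg$ removes exactly $\{e\in M^\alg:c(e)\geq w\}$, so one query to $c(\cdot)$ decides whether $e\in M^\alg_{[\alpha_w]}$ --- which, composed with the $M^\alg$-matching oracle of \Cref{thm:weaker main thm mwm}, yields a matching oracle for $M^\alg_{[\alpha_w]}$ with the same query time up to an additive $O(1)$.

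The one point needing care --- and the only real obstacle --- is a possible atom of the cost distribution sitting exactly at the threshold $w$, which would otherwise allow $\alpha_w$ to drift well above $3\gamma$ and break both the cardinality bound and the ``$c(e)\geq w$'' characterization. I would deal with it exactly as in the proof of \Cref{thm:weaker main thm mwm}: fix once and for all a consistent tie-breaking rule among equal-cost edges (equivalently, perturb the costs by infinitesimal distinct amounts), so that ``the $3\gamma|S|$ highest-cost edges of $S$'' and ``the $\alpha_w n$ highest-cost edges of $M^\alg$'' are unambiguous and the number of $S$-edges of cost $\geq w$ equals $3\gamma|S|$ up to the sampling error. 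Everything else is arithmetic reusing \Cref{lem:cheaper matchings} and \Cref{lem:few free vertices}.
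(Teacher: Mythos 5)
Your proof is correct and takes essentially the same route as the paper: \Cref{obs:how to build matching oracle} is just a repackaging of the Chernoff-bound step already carried out inside the proof of \Cref{thm:weaker main thm mwm}, followed by arithmetic with \Cref{lem:cheaper matchings} (for $c(M^\alg_{[\alpha_w]}) \leq c(M^\alg_{[2\gamma]}) \leq c(M^\opt)$) and \Cref{lem:few free vertices} (for the cardinality bound after rescaling $\gamma$), plus the trivial single-query membership test. The one substantive thing you add is the handling of an atom of the cost distribution at the threshold $w$: you are right that without a tie-breaking convention the upper bound $\alpha_w \leq 3\gamma + \gamma^2/C$ can fail badly (e.g., if all edges have equal cost then $\alpha_w = 1$), and your perturbation/consistent-tie-breaking fix is the correct repair --- though note the paper does not actually carry this out in the proof of \Cref{thm:weaker main thm mwm}; it only invokes ``standard Chernoff Bounds arguments,'' so you are supplying a detail the paper omits rather than reusing one. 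A minor cosmetic point in your favor: given the definition of $\alpha_w$, the membership test is $c(e) < w$ as you write it, whereas the observation's ``$c(e) \leq w$'' is off by the boundary case.
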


\section{Implementing the Template in Sublinear Time} \label{sec:implementing in sublinear time}

In this section we explain how to implement \emph{Step 1} and \emph{Step 2} from the template algorithm in sublinear time.

\subsection{From Potential Oracles to Membership Oracles}

Throughout this section, we would like to apply \Cref{thm:large matching} and \Cref{thm:augmenting paths} on 
the eligibility graph $G_\cE = (V, E_\cE)$ and forward graph $G_F = (V, E_F)$. However, we do not have random access to the adjacency matrix of these graphs. Indeed, to establish if $(u, v) \in V_0 \times V_1$ is eligible we need to check the condition $\varphi(u) + \varphi(v) = c(u, v) + 1$ (or $\varphi(u) + \varphi(v) = c(u, v)$). However, we will see that the potential $\varphi(\cdot)$ requires more than a single query to be evaluated. Formally, we assume that we have a potential oracle $\eval_\varphi(\cdot)$ that returns the value of $\varphi(\cdot)$ in time $t_\varphi$. 
Whenever checking whether $(u, v)$ is an edge of $G_F$ ($G_\cE)$ requires to evaluate a condition of the form $\varphi(u) + \varphi(v) = c(u, v) + 1$ (or $\varphi(u) + \varphi(v) = c(u, v)$) we say that we have \emph{potential oracle access} to the adjacency matrix of $G_F$ ($G_\cE$) with potential oracle time $t_\varphi$.
We can think of $t_\varphi$ as $\Otil(n^{1+\varepsilon})$ and we will later prove that this is (roughly) the case.

\paragraph{Potential functions with constant-size range.}
If our potential function $\varphi: V \rightarrow \cR$ has range size $|\cR| \leq R$ then we say that it is an $R$-potential. If the eligibility (forward) graph is induced by $R$-potentials for $R=O(1)$ we can rephrase \Cref{thm:large matching} and \Cref{thm:augmenting paths} to work with potential oracle access, without any asymptotic overhead.
The following theorem is an analog of \Cref{thm:large matching} for forward graphs.

\begin{lemma}
\label{lem:large matching with potential}Let $G_F=(V,E_F)$ be a forward graph w.r.t the $R$-potential $\varphi$, let $A\subseteq V$ be a vertex set. Suppose we have a potential oracle $\eval_\varphi$ with oracle time $t_\varphi$ and an membership oracle $\mem_{A}$ with $t_{A}$ query time. We are given as input constants $0 < \varepsilon \leq 0.2$ and $\sizeIn>0$.

There exists an algorithm $\largematchforward(\varphi, A, \sizeIn)$ that preprocesses $G_F$ in $\Otil_R((t_{A} + t_\varphi+n) \cdot n^{1-\varepsilon})$   time and either returns $\bot$ or constructs a matching oracle $\match_{M}(\cdot)$ for a matching $M\subset G_F[A]$ of size at least $\sizeOut n$ where $\sizeOut=\frac{1}{2000 \cdot R^{10}}\sizeIn^{5} = \Theta_{\sizeIn, R}(1)$ that has $\Otil_R((t_{A} + t_\varphi+n)n^{4\varepsilon})$ 
worst-case query time. If $\mu(G_F[A])\ge\sizeIn n$, then $\bot$ is not returned. The guarantee holds with high probability.
\end{lemma}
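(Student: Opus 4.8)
The plan is to reduce to Theorem~\ref{thm:large matching} by exhibiting a graph $G^\star$ on the vertex set $V$ (or a slightly expanded vertex set) whose adjacency matrix we can simulate cheaply using $\eval_\varphi$ and the cost oracle, and whose matching structure captures that of the forward graph $G_F$. The key observation is that $G_F$ restricted to $G_F[A]$ is the disjoint union, over pairs $(a,b) \in \mathcal R \times \mathcal R$ of potential values, of the bipartite graphs $G_{a,b}$ consisting of those edges $(u,v) \in V_0 \times V_1$ with $\varphi(u) = a$, $\varphi(v) = b$, and $c(u,v) = a + b - 1$. Since $\varphi$ is an $R$-potential, there are at most $R^2 = O_R(1)$ such ``layers.'' For each fixed pair $(a,b)$ we can decide membership of $(u,v)$ in $G_{a,b}[A]$ using two calls to $\eval_\varphi$, one call to $\mem_A$ on each endpoint, and one query to $c$; this is the sense in which potential-oracle access simulates adjacency-matrix access with only an additive $O(t_\varphi)$ overhead per probe.

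First I would formalize this simulation: define, for each ordered pair $(a,b)$, a membership oracle $\mem_{A_{a,b}}$ where $A_{a,b}$ carries the information ``which side of the bipartition and which potential layer,'' and observe that a single adjacency probe into $G_{a,b}$ costs $O(t_A + t_\varphi)$. Then I would invoke Theorem~\ref{thm:large matching} on each of the $O_R(1)$ graphs $G_{a,b}$ with the given $\sizeIn$ (rescaled by a factor depending only on $R$, since a matching of size $\sizeIn n$ in $G_F[A]$ must, by pigeonhole over the $\le R^2$ layers, induce a matching of size $\ge \sizeIn n / R^2$ in some single layer $G_{a,b}$). Running $\largematch(G_{a,b}, A_{a,b}, \varepsilon, \sizeIn/R^2)$ for all pairs $(a,b)$ and returning the first oracle that is not $\bot$ (or $\bot$ if all fail) gives the algorithm $\largematchforward$. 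The preprocessing time is $O_R(1) \cdot \Otil_{\sizeIn/R^2}((t_A + t_\varphi + n)\cdot n^{1-\varepsilon}) = \Otil_R((t_A + t_\varphi + n)\cdot n^{1-\varepsilon})$ as claimed, and similarly the query time of the returned oracle is $\Otil_R((t_A + t_\varphi + n) n^{4\varepsilon})$. The output-size guarantee $\sizeOut = \frac{1}{2000 R^{10}} \sizeIn^5$ follows by plugging $\sizeIn/R^2$ into the $\frac{1}{2000}(\cdot)^5$ bound of Theorem~\ref{thm:large matching} and noting $(\sizeIn/R^2)^5 = \sizeIn^5/R^{10}$; and the completeness claim (if $\mu(G_F[A]) \ge \sizeIn n$ then $\bot$ is not returned) follows from the pigeonhole step together with the completeness of $\largematch$ on the heaviest layer. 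High-probability correctness is inherited by a union bound over the $O_R(1)$ invocations.

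The only mildly delicate point is bookkeeping the potential-oracle overhead through the internals of Theorem~\ref{thm:large matching}: one must check that $\largematch$ accesses $G$ only through adjacency probes and $\mem_A$ queries (so that each such access can be charged an extra $O(t_\varphi)$), rather than, say, touching raw entries of the cost matrix in a way that bypasses the simulation. This is exactly the reason the paper earlier emphasized that $\augment$ and $\largematch$ interact with the graph purely through the prescribed oracle interface, so I expect this to be a short verification rather than a genuine obstacle. The residual subtlety is ensuring that the final returned matching $M$ really lies in $G_F[A]$ and not merely in $G_\cE[A]$ — but this is automatic since each $G_{a,b}$ contains only forward edges ($(u,v) \in V_0 \times V_1$ with the strict equality $\varphi(u)+\varphi(v) = c(u,v)+1$ corresponding to eligible non-matching edges), so no edge oriented out of a matched vertex is ever seen.
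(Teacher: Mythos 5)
Your proposal matches the paper's proof essentially verbatim: partition $E_F[A]$ into $R^2$ layers indexed by pairs of potential values, use pigeonhole to find a layer containing a matching of size $\sizeIn n/R^2$, fold the potential evaluation into the membership oracle so that within a fixed layer an adjacency probe reduces to a single cost query against the known constant $a+b-1$, and invoke $\largematch$ once per layer, halting at the first non-$\bot$ output. One bookkeeping caution: the $t_\varphi$ surcharge must be attributed solely to the membership oracle $\mem_{A_{a,b}}$ (as your construction in fact does), not ``per adjacency probe'' as one of your phrasings suggests, since $\largematch$ may make on the order of $n^{2-\varepsilon}$ adjacency probes and charging $t_\varphi$ to each would destroy the claimed $\Otil_R((t_A+t_\varphi+n)n^{1-\varepsilon})$ preprocessing time.
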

\begin{proof}
Without loss of generality, we assume that $\varphi$ takes values in $[R]$.
Suppose that $G_F[A]$ has a matching of size $\sizeIn n$. We partition the edges $E_F[A] = E_F \cap (A\times A)$ into $R^2$ sets $E_{0, 0} \dots E_{R-1, R-1}$ such that $(u, v) \in E_{i, j}$ iff $\varphi(u) = i$ and $\varphi(v) = j$. Then, there exist $i, j \in [R]$ such that $G_{i, j} = (V, E_{i, j})$ has a matching of size $\sizeIn n/ R^2$.
Moreover, once we restrict ourselves to $G_{i, j}$, each edge query $(u, v) \in E_{i, j}$ becomes much easier. Indeed, we just need to establish if $i + j = c(u, v) + 1$. In order to restrict ourselves to $G_{i, j}$ it suffices to set $A' = A \cap (\varphi^{-1}(\{i\}) \times \varphi^{-1}(\{j\}))$. Then the membership oracle $\mem_{A'}$ runs in time $O(t_A + t_\varphi)$.
Hence, using \Cref{thm:large matching} we can find a matching of size $\sizeOut n$, where $\sizeOut=\frac{1}{2000 \cdot R^{10}}\sizeIn^{5}$. Algorithmically, we run the algorithm from \Cref{thm:large matching} $R^2$ times (once for each pair $(i, j)$) and halt as soon as the algorithm does not return $\bot$.
\end{proof}

The following is an analog of \Cref{thm:augmenting paths} for eligibility graphs. 
\begin{lemma}
\label{lem:augmenting paths with potential} 
Let $\varepsilon_\inp > 0$ be a sufficiently small constant.
Let $\alpha_{k, \gamma}$ and $\beta_{k, \gamma}$ be constants that depend on $k$ and $\gamma$ and set $\varepsilon_{\out} := \alpha_{k, \gamma} \cdot \varepsilon_{\inp}$. We have an $R$-potential oracle $\eval_\varphi$ with running time $t_\varphi = \Otil(n^{1+\varepsilon_{\inp}})$, a matching oracle $\match_{M^\inp}$ with running time $t_{M^\inp} = \Otil(n^{1+\varepsilon_{\inp}})$ and an eligibility graph $G_\cE = (V, E_\cE)$ w.r.t. $\varphi$ and $M^\inp$. 

There exists an algorithm $\augmenteligible(\varphi, M^{\inp}, k, \gamma, \varepsilon_{\inp})$ that runs in $\tilde{O}_{k, \gamma, R}\left(n^{2-\varepsilon_{\inp}}\right)$ time. Further, either it returns an oracle $\match_{M^{\out}}(\cdot)$ with query time $\tilde{O}_{k,\gamma, R}(n^{1+\varepsilon_{\out}})$, for some matching $M^{\out}$ in $G_\cE$ of size $|M^{\out}| \geq |M^{\inp}| + \beta_{k, \gamma} \cdot n$ (we say that it ``succeeds'' in this case), or it returns $\bot$.  Finally, if the matching $M^{\inp}$ admits a collection of $\gamma \cdot n$ many node-disjoint augmenting paths with length $\leq k$ in $G_\cE$, then the algorithm succeeds whp.
\end{lemma}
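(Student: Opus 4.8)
The plan is to reduce \Cref{lem:augmenting paths with potential} to \Cref{thm:augmenting paths} in essentially the same way that \Cref{lem:large matching with potential} was reduced to \Cref{thm:large matching}: by exploiting the fact that the potential range is a constant $R$, so the eligibility graph decomposes into $O(R^2)$ pieces on each of which adjacency becomes a single-query condition. The difference from the forward-graph case is that the eligibility graph also contains the backward copies of matched edges, but these are completely determined by $M^{\inp}$, so they can be handled with one call to $\match_{M^{\inp}}$ per vertex rather than via the potential.

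First I would observe that to answer an adjacency query $(u,v) \in E_\cE$ it suffices to know $\varphi(u)$, $\varphi(v)$, $c(u,v)$, and whether $(u,v) \in M^{\inp}$; the first two cost $t_\varphi$, the last costs $t_{M^{\inp}}$, and $c(u,v)$ is a single query, so we get \emph{query access} to the adjacency matrix of $G_\cE$ with overhead $O(t_\varphi + t_{M^{\inp}}) = \Otil(n^{1+\varepsilon_{\inp}})$ per entry. The key point, as in \Cref{lem:large matching with potential}, is that we must not pay this overhead inside the $\largematch$ calls that $\augment$ makes, since those calls touch $n^{2-\varepsilon_{\inp}}$ entries; paying $n^{1+\varepsilon_{\inp}}$ each would blow the budget. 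The fix is to bucket: partition $V$ into the $R$ level sets $V^{(i)} = \varphi^{-1}(i)$, which can each be recognized with a membership oracle of cost $O(t_\varphi)$. For a fixed ordered pair $(i,j)$, an edge $(u,v)$ with $u \in V^{(i)} \cap V_0$, $v \in V^{(j)} \cap V_1$ lies in $E_\cE$ (forward orientation) iff $(u,v) \notin M^{\inp}$ and $i + j = c(u,v)+1$, which is now a constant-cost check once we also screen out matched pairs. The matched edges themselves contribute at most $n$ directed edges total and their endpoints are located by $\match_{M^{\inp}}$; I would feed them in through a separate constant-cost mechanism (or simply note that each vertex has at most one incident matched edge, recoverable in $t_{M^{\inp}}$ time, and $\augment$ only needs to explore such edges when walking along an alternating path, not when scanning the dense part of the graph).

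Then I would invoke \Cref{thm:augmenting paths}: run $\augment$ on the (now cheaply-queryable) graph $G_\cE$ with parameters $k$, a rescaled $\gamma' = \Theta_R(\gamma)$, and $\varepsilon_{\inp}$. Since \Cref{thm:augmenting paths} guarantees that $\augment$'s only access to the graph is through $\largematch$, and each such $\largematch$ call can be run on the appropriate bucket $V^{(i)} \times V^{(j)}$ with a membership oracle of cost $O(t_\varphi) = \Otil(n^{1+\varepsilon_{\inp}})$ folded into the "$t_A$" slot of \Cref{thm:large matching} — which only multiplies the running time by $(t_A + n) \asymp n^{1+\varepsilon_{\inp}}$ against an $n^{1-\varepsilon}$ factor — the total is $\Otil_{k,\gamma,R}(n^{2-\varepsilon_{\inp}})$, absorbing the $R^2 = O(1)$ overhead of trying all buckets. (Here one uses that $\largematch$'s preprocessing cost $(t_A + n)\cdot n^{1-\varepsilon}$ and query cost $(t_A + n)\cdot n^{4\varepsilon}$ both stay $\Otil(n^{1+O(\varepsilon_{\inp})})$ when $\varepsilon = \Theta(\varepsilon_{\inp})$ and $t_A = \Otil(n^{1+\varepsilon_{\inp}})$.) The output matching oracle $\match_{M^{\out}}$ from \Cref{thm:augmenting paths} has query time $\Otil_{k,\gamma}(n^{1+\varepsilon_{\out}})$ where the oracle internally calls $\largematch$ oracles; composing with the $\Otil(n^{1+\varepsilon_{\inp}})$ membership-oracle overhead keeps this at $\Otil_{k,\gamma,R}(n^{1+\varepsilon_{\out}})$ after adjusting constants in $\alpha_{k,\gamma}$. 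The size guarantee $|M^{\out}| \geq |M^{\inp}| + \beta_{k,\gamma} n$ and the success-whenever-$\gamma n$-augmenting-paths-exist guarantee transfer verbatim from \Cref{thm:augmenting paths}, after accounting for the $\gamma \mapsto \gamma/R^2$ rescaling caused by splitting the augmenting-path collection across buckets.

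The main obstacle I anticipate is the bookkeeping around the backward (matched) edges of $G_\cE$, which do \emph{not} fit the clean bucketing picture used for $G_F$ in \Cref{lem:large matching with potential}: an alternating path alternates between forward eligible edges (handled by the bucketed dense structure) and backward matched edges (handled by $\match_{M^{\inp}}$), and one must check that $\augment$'s internal calls to $\largematch$ are only ever made on the \emph{forward} part — which is exactly what the strengthening of \Cref{thm:augmenting paths} noted in the excerpt ("the only way $\augment$ accesses the graph is through $\largematch$") is designed to guarantee, since McGregor-style layered augmenting-path search contracts matched edges and runs matchings only on the unmatched strata. Verifying that this contraction is compatible with the per-bucket membership oracle — i.e. that contracting a matched edge whose endpoints live in levels $i$ and $i'$ does not break the constant-cost adjacency test on the surviving endpoints — is the delicate point, and I would handle it by observing that $1$-feasibility forces the two endpoints of any matched edge to have potentials summing to a fixed cost value, so the level of a contracted super-vertex is still recoverable in $O(t_\varphi + t_{M^{\inp}})$ time.
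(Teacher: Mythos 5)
Your handling of the oracle-access issue is essentially the paper's: the paper also relies on the fact that $\augment$ touches the graph only through $\largematch$, and it delegates the potential-level bucketing to \Cref{lem:large matching with potential} (i.e.\ it simply substitutes $\largematchforward$ for $\largematch$ inside $\augment$), which is the same decomposition into $R^2$ level-set pairs that you re-derive inline. Your worry about the backward matched edges is reasonable but resolves exactly as you suspect: those edges are served by $\match_{M^\inp}$ and never enter the dense scans.

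There is, however, a genuine gap at the end: you assert that the success guarantee ``transfers verbatim'' from \Cref{thm:augmenting paths}, but it does not. \Cref{thm:augmenting paths} promises success only when there are $\gamma n$ node-disjoint augmenting paths of length \emph{exactly} $2k+1$, whereas \Cref{lem:augmenting paths with potential} must succeed whenever there are $\gamma n$ such paths of length \emph{at most} $k$ — and these $\gamma n$ paths may be spread across many different lengths, with no single exact length accounting for $\gamma n$ of them. The paper closes this by a pigeonhole reduction: invoke \Cref{thm:augmenting paths} with $\gamma' = \gamma/k$ once for every odd length $2k'+1 \leq k$, noting that if $\gamma n$ disjoint paths of length $\leq k$ exist then some exact length carries at least $\gamma n / k$ of them; since $k$ and $\gamma$ are constants this costs only a constant-factor blow-up. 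Your proposal needs this (or an equivalent) step; the $\gamma \mapsto \gamma/R^2$ rescaling you mention addresses a different (and in fact non-existent) issue, since the bucketing lives inside the $\largematch$ calls and its loss is already absorbed by the $\sizeOut = \Theta_{\sizeIn,R}(1)$ guarantee of \Cref{lem:large matching with potential}, not by splitting the augmenting-path collection.
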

\begin{proof}
We derive \Cref{lem:augmenting paths with potential} combining \Cref{thm:augmenting paths} and \Cref{lem:large matching with potential}.
First, we notice that \Cref{thm:augmenting paths} says that the algorithm succeeds (whp) whenever there are $\gamma' n$ node-disjoint augmenting paths (NDAP) with length \emph{exactly} $2k'+1$, while \Cref{lem:augmenting paths with potential} has the weaker requirement that there are at least $\gamma n$ NDAP of length $\leq k$. A simple reduction is obtained invoking \Cref{thm:augmenting paths} with $\gamma' = \gamma / k$ for all $k'$ such that $2k' + 1 \leq k$ (notice that all augmenting paths have odd length). In this way, if there exists a collection of $\gamma n$ NDAP of length $\leq k$ then there exists a $k' \leq (k-1)/2$ such that we have $\gamma' n$ NDAP of length \emph{exactly} $2k' + 1$. All guarantees are preserved since we consider both $\gamma$ and $k$ constants.
Now, we are left to address the fact that we do not have random access to the adjacency matrix of $G_\cE$, but rather potential oracle access.

We notice that, according to \Cref{thm:augmenting paths}, the implementation of $\augment$ from \cite{bhattacharya2023dynamic} never makes any query to the adjacency matrix besides those performed inside $\largematch$. 
Moreover, \Cref{lem:large matching with potential} implies that $\largematchforward$ is not asymptotically slower than $\largematch$ as long as $R = O(1)$.
\end{proof}

Finally, we observe that in \Cref{alg:template algorithm} each potential is increased (or decreased) at most $T = O_{C, \gamma}(1)$ times. Hence, $\varphi$ is a $R$-potential for $R = 2T+1 = O_{C, \gamma}(1)$. Thus, we can consider $R$ a constant when applying \Cref{lem:large matching with potential} or \Cref{lem:augmenting paths with potential}.

\subsection{Implementing Step 1}
In this subsection we implement Step 1 from the template algorithm in sublinear time. Here we assume that we have at our disposal a potential oracle $\eval_{\varphi^\inp}$ running in time $t_{\varphi^\inp} = \Otil(n^{1+\varepsilon_{\inp}})$ and a matching oracle $\match_{M^\inp}$ with running time $t_{M^\inp} = \Otil(n^{1+\varepsilon_{\inp}})$.
We will output a potential oracle $\eval_{\varphi^\out}$ running in time $t_{\varphi^\out} = \Otil(n^{1+\varepsilon_{\out}})$ and a matching oracle $\match_{M^\out}$ with running time $t_{M^\out} = \Otil(n^{1+\varepsilon_{\out}})$.
We show that there exists a $(k, \xi)$-\qmsndap $\cA$ such that: the matching $M^\out$ is obtained from $M^{\inp}$ by augmenting it along all paths in $\cA$; $\varphi^{\out}$ is obtained from $\varphi^{\inp}$ by subtracting $1$ to $\varphi^{\inp}(v)$ for each $v \in V_1 \cap \bigcup_{P \in \cA} P$.

\begin{algorithm}
Set $k$ and $\gamma$ as in \Cref{alg:template algorithm}.

Initialize $M \leftarrow M^\inp$ and $\varepsilon \leftarrow \varepsilon_{\inp}$.

Repeat until $\augmenteligible$ returns $\bot$:
\begin{enumerate}
    \item Let $\augmenteligible(G_\cE, M, k, \gamma, \varepsilon)$ return $\match_{M'}$ with running time $t_{M'} = \Otil(n^{1+\varepsilon'})$.
    \item Update $M \leftarrow M'$ and $\varepsilon \leftarrow \varepsilon'$.
\end{enumerate}

Set $M^\out \leftarrow M$ and $\varepsilon_{\out} \leftarrow \varepsilon$.

\hrulefill

Implement $\eval_{\varphi^\out}(u)$ as follows:
\begin{itemize}
    \item If $u \in V_0$, return $\eval_{\varphi^\inp}$.
    \item Else, $u \in V_1$, set $ v \leftarrow \match_{M^\out}(u)$.
    \item If $v \texttt{ == } \bot$, return $\eval_{\varphi^\inp}(u)$.
    \item Else, we have $(u, v) \in M^\out$:
    \begin{itemize}
        \item If $c(u, v) + 1 \texttt{ == } \eval_{\varphi^\inp} (u) + \eval_{\varphi^\inp} (v)$, return $\eval_{\varphi^\inp}(u) - 1$.
        \item Else, return $\eval_{\varphi^\inp}(u)$. 
    \end{itemize}
\end{itemize}

\caption{Implementation of Step 1. \label{alg:implementation of step1}}
\end{algorithm}

\paragraph{Analysis.}
First, we observe that the algorithm above correctly implements the template, with high probability (all our statements henceforth hold whp). 
Initialize $\cA \leftarrow \emptyset$.
For each run of $\augmenteligible(G, M, k, \gamma, \varepsilon)$ we decompose $M \oplus M'$ into a set of augmenting paths $\cP$ and a set of alternating cycles $\cC$ and we set $\cA \leftarrow \cA \cup \cP$.
When $\augmenteligible(G, M, k, \gamma, \varepsilon)$ returns $\bot$ it means (by \Cref{lem:augmenting paths with potential}) that there are at most $\gamma n$ node-disjoint augmenting paths of length $\leq k$ that do not intersect $\bigcup \cA$. Hence, $\cA$ is a $(k, \xi)$-\qmsndap.
Clearly, $\match_{M^\out}$ implements the matching obtained from $M^\inp$ by augmenting along the paths in $\cA$.

To see that the implementation of $\eval_{\varphi^\out}(u)$ is correct it is sufficient to notice that in the template algorithm we decrement $\varphi(u)$ iff:
$(i)$ $u \in V_1$, and $(ii)$ there exists an augmenting path $P \in \cA$ intersecting $u$.
Since every node belongs to at most one path in $\cA$ then $u$ is matched in $M^{\out}$ and $(u, v) \in M^\out$ is an $M^\inp$-eligible edge. Thus, $(ii)$ is equivalent to: $(iii)$ $v = \match_{M^\out}$ satisfies $\varphi^\inp(v) + \varphi^\inp(u) = c(u, v) + 1$.  
Finally, we bound $\varepsilon_\out$ as a function of $\varepsilon_\inp$.

\begin{lemma}
Step 1 can be implemented in $\Otil(n^{2-\varepsilon})$ time for some constant $\varepsilon > 0$. 
Moreover, the oracle $\match_{M^\out}$ has running time $t_{M^\out}$ and the oracle $\eval_{\varphi^\out}$ has running time $t_{\varphi^\out}$ such that  $t_{M^\out}, t_{\varphi^\out} = \Otil(n^{1 + \varepsilon_\out})$ and $\varepsilon_\out = O_{\gamma, k}(\varepsilon_\inp)$.
\end{lemma}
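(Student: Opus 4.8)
The plan is to track two quantities across the iterations of the repeat loop in \Cref{alg:implementation of step1}: the number of iterations, and the exponent $\varepsilon$ that is threaded through the successive calls to $\augmenteligible$. Write $\varepsilon_0 = \varepsilon_\inp$, $M_0 = M^\inp$, and for $j \ge 1$ let $M_j,\varepsilon_j$ be the matching and exponent produced by the $j$-th successful call $\augmenteligible(\varphi^\inp, M_{j-1}, k, \gamma, \varepsilon_{j-1})$; recall that $\varphi^\inp$ is unchanged throughout Step 1, so the eligibility graph fed to each call is determined by $\varphi^\inp$ together with the current matching. By \Cref{lem:augmenting paths with potential} this call returns $\match_{M_j}$ with query time $\Otil_{k,\gamma,R}(n^{1+\varepsilon_j})$ where $\varepsilon_j = \alpha_{k,\gamma}\varepsilon_{j-1}$, and it increases the matching size: $|M_j| - |M_{j-1}| \ge \beta_{k,\gamma} n$. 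Since $|M_j| \le n$ always, the loop makes at most $N := \lceil 1/\beta_{k,\gamma}\rceil + 1 = O_{k,\gamma}(1)$ calls in total (the last returning $\bot$). First I would record that, because $\augmenteligible$ builds its output oracle by composing $O_{k,\gamma}(1)$ oracles returned by $\largematchforward$/$\largematch$, each of which only inflates query time (cf.\ the $n^{4\varepsilon}$ factor in \Cref{thm:large matching}), we have $\alpha_{k,\gamma} \ge 1$; hence the sequence $(\varepsilon_j)_j$ is non-decreasing and bounded above by $\alpha_{k,\gamma}^N \varepsilon_\inp$. I would then fix $\varepsilon_\inp$ small enough as a function of $k,\gamma$ so that even this upper bound lies below the ``sufficiently small'' threshold of \Cref{lem:augmenting paths with potential}, making every call legitimate; and one checks that its input hypotheses are met at call $j$, since $\match_{M_{j-1}}$ runs in $\Otil(n^{1+\varepsilon_{j-1}})$ by construction and the reused potential oracle $\eval_{\varphi^\inp}$ runs in $\Otil(n^{1+\varepsilon_\inp}) \subseteq \Otil(n^{1+\varepsilon_{j-1}})$ precisely because $\varepsilon_{j-1} \ge \varepsilon_\inp$.

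With this structural picture in place the timing bounds are immediate. Each of the $\le N$ calls to $\augmenteligible$ runs in $\Otil_{k,\gamma,R}(n^{2-\varepsilon_{j-1}}) \subseteq \Otil_{k,\gamma,R}(n^{2-\varepsilon_\inp})$ by \Cref{lem:augmenting paths with potential} and $\varepsilon_{j-1}\ge\varepsilon_\inp$, and assembling the closure defining $\eval_{\varphi^\out}$ is $O(1)$ extra work, so Step 1 runs in $N\cdot\Otil_{k,\gamma,R}(n^{2-\varepsilon_\inp}) = \Otil(n^{2-\varepsilon})$ for the constant $\varepsilon := \varepsilon_\inp$ (using that $k,\gamma$ are constants and $R = 2T+1 = O_{C,\gamma}(1)$). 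For the output oracles, $\match_{M^\out} = \match_{M_N}$ has query time $\Otil_{k,\gamma,R}(n^{1+\varepsilon_N})$, so set $\varepsilon_\out := \varepsilon_N \le \alpha_{k,\gamma}^N \varepsilon_\inp = O_{k,\gamma}(\varepsilon_\inp)$. Finally each invocation of $\eval_{\varphi^\out}$ makes one call to $\eval_{\varphi^\inp}$ (time $\Otil(n^{1+\varepsilon_\inp})$), at most one call to $\match_{M^\out}$ (time $\Otil(n^{1+\varepsilon_\out})$), and $O(1)$ queries to $c$, hence runs in $\Otil(n^{1+\varepsilon_\out})$ since $\varepsilon_\out \ge \varepsilon_\inp$. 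This gives $t_{M^\out}, t_{\varphi^\out} = \Otil(n^{1+\varepsilon_\out})$ with $\varepsilon_\out = O_{\gamma,k}(\varepsilon_\inp)$, as claimed.

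The geometric growth of the exponent and the constant iteration count are routine. The main obstacle — and the only place care is really needed — is the bookkeeping that the \emph{same} potential oracle $\eval_{\varphi^\inp}$ is fed to every call while the exponent parameter changes between calls: this is harmless only because the exponent is monotonically non-decreasing, so an oracle fast enough for the first call stays fast enough for later ones, and that monotonicity in turn rests on the fact that $\largematch$-type subroutines never speed up an oracle (i.e.\ $\alpha_{k,\gamma}\ge 1$). A secondary point is that the ``sufficiently small $\varepsilon_\inp$'' requirement of \Cref{lem:augmenting paths with potential} must hold not just for $\varepsilon_\inp$ but for every $\varepsilon_j$ reached during the loop; since all of them are within an $O_{k,\gamma}(1)$ multiplicative factor of $\varepsilon_\inp$, this is arranged by shrinking $\varepsilon_\inp$ at the outset, but the dependence of that threshold on $k,\gamma$ must be made explicit so it can be propagated into the global choice of $\varepsilon$ in \Cref{thm:main theorem mwm}. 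Correctness of the implementation (that the union of the paths extracted from the successive augmentations is a genuine $(k,\xi)$-\qmsndap and that $\varphi^\out$ is the correctly decremented potential) is established in the discussion preceding the lemma and is not re-derived here.
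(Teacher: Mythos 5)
Your proposal is correct and follows essentially the same route as the paper's proof: the iteration count is bounded by $O(1/\beta_{k,\gamma})$ because each successful call to $\augmenteligible$ grows the matching by $\beta_{k,\gamma}n$, the exponent grows geometrically to at most $\alpha_{k,\gamma}^{O(1/\beta_{k,\gamma})}\varepsilon_\inp = O_{k,\gamma}(\varepsilon_\inp)$, and $t_{\varphi^\out} = O(t_{M^\out} + t_{\varphi^\inp})$. Your additional bookkeeping (monotonicity of the $\varepsilon_j$, reuse of the fixed $\eval_{\varphi^\inp}$ across calls, and shrinking $\varepsilon_\inp$ so every call satisfies the ``sufficiently small'' hypothesis) makes explicit some details the paper leaves implicit, but does not change the argument.
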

\begin{proof}
Let $\varepsilon > 0$ and $\beta_{k, \gamma}$ as in \Cref{lem:augmenting paths with potential}.
\Cref{alg:implementation of step1} runs $\augmenteligible$ at most $1/\beta_{k, \gamma} + 1 = O_{k, \gamma}(1)$ times because the set $\cA$ increases by $\beta_{k, \gamma} \cdot n$ after each successful run of $\augmenteligible$.
Thus, there can be at most $1 / \beta_{k, \gamma}$ successful runs.
It is apparent that, by \Cref{lem:augmenting paths with potential}, Step 1 can be implemented in $\Otil_{k, \gamma}(n^{2-\varepsilon})$ time.

Now we prove the bound on oracles time.
First, we observe that $t_{\varphi^\out} = O(t_{M^\out} + t_{\varphi^\inp}) = \Otil(n^{1+\varepsilon_\out})$.
 Moreover, at every iteration we have $\varepsilon' \leq \alpha_{k, \gamma} \cdot \varepsilon$, hence $\varepsilon_{\out} \leq \alpha_{k, \gamma}^{1/\beta_{k, \gamma} + 1} \varepsilon_\inp = O_{k, \gamma}(\varepsilon_\inp)$.
\end{proof}

\subsection{Implementing Step 2}
In this subsection we implement Step 2 from \Cref{alg:template algorithm} in sublinear time. Once again, we assume that we have at our disposal a potential oracle $\eval_{\varphi^\inp}$ running in time $t_{\varphi^\inp} = \Otil(n^{1+\varepsilon_{\inp}})$ and a matching oracle $\match_{M^\inp}$ with running time $t_{M^\inp} = \Otil(n^{1+\varepsilon_{\inp}})$.
We will output a potential oracle $\eval_{\varphi^\out}$ running in time $t_{\varphi^\out} = \Otil(n^{1+\varepsilon_{\out}})$.
We show that there exists a $(k, \delta)$-\qmf $\cF$ with respect to $M^\inp$ such that 
$\varphi^\out(u) = \varphi^\inp(u) + 1$ for each $u \in \cF \cap V_0$ and $\varphi^\out(v) = \varphi^\inp(v) - 1$ for each $v \in \cF \cap V_1$.

The execution of \Cref{alg:implementation of step2} is represented in \Cref{fig:QMSNDAP and QMF}, where vertices colored in the same way are added to $\cF$ during the same iteration. 

\begin{algorithm}
Set $\delta$ as in \Cref{alg:template algorithm}.

Initialize $t \leftarrow 0$,  $\cF_0 \leftarrow F_0$, where $F_0$ is the set of $M^\inp$-unmatched vertices in $V_0$. 

Implement $\mem_{\cF_0}(u)$ as: $\match_{M^\inp}(u) \texttt{ == } \bot$. 

Repeat until $\largematchforward$ returns $\bot$:
\begin{enumerate}
    \item $A_t \leftarrow (\cF_t \cap V_0) \cup (V_1 \setminus \cF_t)$.
    \item Let $\largematchforward(\varphi, A_t, \delta)$ return $\match_{M_{t}}$.
    \item $\cF'_t \leftarrow \cF_t \cup \{\match_{M_{t}(u)} \cond u \in \cF_t\}$ 
    
    Implement $\mem_{\cF'_t}(u)$ as: $ \mem_{\cF_t}(u)$ or $\match_{M_{t}}(u) \in \cF_t$. \label{enum:first forest update}
    \item $\cF''_t \leftarrow \cF_t' \cup \{\match_{M^\inp}(u) \cond u \in \cF'_t\}$ 
    
    Implement $\mem_{F''_t}(u)$ as: $\mem_{\cF'_t}(u)$ or $\match_{M^\inp}(u) \in \cF'_t$. \label{enum:second forest update}
    \item $\cF_{t+1} \leftarrow \cF''_t$; $t \leftarrow t+1$.
\end{enumerate}
Implement $\eval_{\varphi^\out}(u)$ as  $\eval_{\varphi^\inp}(u) + \mem_{\cF_t}(u) \cdot (-1)^{\one_{u \in V_1}}$
\caption{Implementation of Step 2. \label{alg:implementation of step2}}
\end{algorithm}

\paragraph{Analysis.}
First, we prove that \Cref{alg:implementation of step2} implements the template (all guarantees hold whp). Namely, that $\cF$ is a $(k, \delta)$-\qmf, where $k$ and $\delta$ are defined as in \Cref{alg:template algorithm}. With a slight abuse of notation, in \Cref{alg:implementation of step2} we used $\cF$ to denote the set of nodes in the forest. Here, we understand that for each $u \in \cF'_{t} \setminus \cF_t$ we have an edge $(u, \match_{M_{t+1}}(u))$ and for each $v \in \cF''_{t} \setminus \cF'_t$ we have an edge $(v, \match_{M^\inp}(v))$. 
Let $\tau$ be the total number of times $\largematchforward$ runs successfully in \Cref{alg:implementation of step2}. We will see that $\tau \leq k / 2$. Notice that $\cF_\tau$ is the last forest produced by \Cref{alg:implementation of step2} and for each $u \in \cF_\tau \setminus F_0$ we add an edge incident to $u$, thus $\cF_\tau$ is a forest with $|F_0|$ connected components, one for each $u \in F_0$. 
Now we show that $\cF_\tau$ is a $(k, \delta)$-\qmf w.r.t. $M^\inp$. We refer to the notation of \Cref{def:qmf}. \Cref{enum:forest def free included} is clearly satisfied. \Cref{enum:forest def matched edges included} is satisfied because of line $\ref{enum:second forest update}$ in \Cref{alg:implementation of step2}.

Now we show that \Cref{enum:forest def forest depth} is satisfied. Define $k =6000 (2T+1)^{10} / \delta^5 $ as in \Cref{alg:template algorithm} and recall that $\varphi$ is a $(2T+1)$-potential.
Thanks to \Cref{lem:large matching with potential}, at each step we increment $|\cF|$ by at least $\frac{1}{2000 (2T+1)^{10}}\cdot \delta^5 n$. Thus, no more than $\lceil 2000 (2T+1)^{10} / \delta^5 \rceil \leq k /2$ iterations are performed and we cannot have more than $k$ hops between $u \in \cF$ and $v \in F_0$ if $u$ belongs to the connected component of $v$.

Now we prove that \Cref{enum:forset def cc size} is satisfied. At each iteration, the size of each connected component of $\cF_t$ at most triples. Indeed, let $C$ be a connected component of $\cF$. In step \ref{enum:first forest update} we add to $C$ at most $|C|$ vertices (because we add a vertex for each edge in a matching incident to $C$) and in step \ref{enum:second forest update} we add to $C$ at most one more vertex for each new vertex added in step \ref{enum:first forest update}.

Now we prove that \Cref{enum:forest def vertex cover} is satisfied. \Cref{alg:implementation of step2} halts when $\largematchforward(M^\inp, (\cF \cap V_0) \cup (V_1 \setminus \cF), \delta)$ returns $\bot$. This may only happen when there is no matching between of $\cF \cap V_0$ and $V_1 \setminus \cF$ of size $\delta n$. This implies that there exists a vertex cover of size $\leq \delta n$. Moreover, this is a vertex cover for the whole $E_\cE \cap (\cF \times V \setminus \cF)$ because all edges in $E_\cE \cap V_1 \times V_0$ are in $M^\inp$ and by \Cref{enum:forest def matched edges included} have both endpoints either in $\cF$ or in $V\setminus \cF$.

It is easy to check that $\varphi^\out(u) = \varphi^\inp(u) + 1$ for each $u \in \cF \cap V_0$ and $\varphi^\out(v) = \varphi^\inp(v) - 1$ for each $v \in \cF \cap V_1$.

\begin{lemma}
Step 2 can be implemented in $\Otil(n^{2-\varepsilon})$ time for some constant $\varepsilon > 0$. 
Moreover, the oracle $\eval_{\varphi_\out}$ has running time $t_{\varphi^\out} = \Otil(n ^{1+ \varepsilon_\out})$ and $\varepsilon_\out = O_{k, \delta}(\varepsilon_\inp)$.
\end{lemma}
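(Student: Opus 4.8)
The plan is to mirror the structure of the Step~1 lemma: both the running time and the output‑oracle time are governed by the query time $t_{\cF_t}$ of the forest‑membership oracle maintained in \Cref{alg:implementation of step2}, so the whole proof reduces to controlling how $t_{\cF_t}$ grows. Recall from the analysis preceding this lemma that $\largematchforward$ is invoked at most $\tau+1$ times, where $\tau\le k/2=O_{k,\delta}(1)$ is the number of \emph{successful} invocations (each enlarges $\cF$ by $\sizeOut n=\Theta_{\delta,R}(n)$ vertices and $|\cF|\le n$), plus one final call returning $\bot$. Apart from the cost of these calls, each iteration only \emph{defines} the composed oracles $\mem_{\cF'_t},\mem_{\cF''_t},\eval_{\varphi^\out}$, which is $O(1)$ bookkeeping. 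Hence it suffices to (i) bound the time of each $\largematchforward$ call and (ii) bound $t_{\cF_\tau}$, since $t_{\varphi^\out}=O(t_{\varphi^\inp}+t_{\cF_\tau})$.

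Write $t_{\cF_t}=\Otil(n^{1+e_t})$. Since we know the bipartition for free, $\mem_{A_t}$ for $A_t=(\cF_t\cap V_0)\cup(V_1\setminus\cF_t)$ satisfies $t_{A_t}=O(t_{\cF_t})$, so applying \Cref{lem:large matching with potential} to the forward graph with the $(2T{+}1)$‑potential $\varphi^\inp$, the set $A_t$, and an accuracy parameter $\varepsilon^{(t)}$ to be chosen, the returned oracle $\match_{M_t}$ has query time $t_{M_t}=\Otil_R\big((t_{\cF_t}+t_{\varphi^\inp}+n)\,n^{4\varepsilon^{(t)}}\big)=\Otil\big(n^{1+e_t+4\varepsilon^{(t)}}\big)$, using that $e_0=\varepsilon_\inp$ and hence $e_t\ge\varepsilon_\inp$ by induction. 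The composition rules in \Cref{alg:implementation of step2} give $t_{\cF'_t}=O(t_{\cF_t}+t_{M_t})$ and $t_{\cF_{t+1}}=t_{\cF''_t}=O(t_{\cF'_t}+t_{M^\inp})$, so (since $t_{M^\inp}=\Otil(n^{1+\varepsilon_\inp})$ is dominated) $e_{t+1}\le e_t+4\varepsilon^{(t)}$.

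Now fix the schedule $\varepsilon^{(t)}:=2e_t=2\cdot 9^{t}\varepsilon_\inp$, which is legitimate since $\varepsilon^{(t)}\le 2\cdot 9^{k/2}\varepsilon_\inp\le 0.2$ once the constant $\varepsilon_\inp$ is small enough. Then $e_{t+1}=9e_t$, hence $e_t=9^t\varepsilon_\inp$ and, as $t\le\tau\le k/2$, $e_t\le 9^{k/2}\varepsilon_\inp=O_{k,\delta}(\varepsilon_\inp)$; in particular $t_{\varphi^\out}=\Otil(n^{1+\varepsilon_\out})$ with $\varepsilon_\out:=e_\tau=O_{k,\delta}(\varepsilon_\inp)$, which is the claimed oracle bound. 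For the running time, the preprocessing of the iteration‑$t$ call costs $\Otil_R\big((t_{\cF_t}+t_{\varphi^\inp}+n)\,n^{1-\varepsilon^{(t)}}\big)=\Otil(n^{2+e_t-2e_t})=\Otil(n^{2-e_t})\le\Otil(n^{2-\varepsilon_\inp})$, and there are $\tau+1=O_{k,\delta}(1)$ such calls, so \Cref{alg:implementation of step2} runs in $\Otil(n^{2-\varepsilon})$ with $\varepsilon:=\varepsilon_\inp>0$ a constant.

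The one genuine subtlety — and the step I expect to be the main obstacle — is exactly this choice of accuracy parameters. Because $\largematchforward$ inflates the input‑oracle exponent by $4\varepsilon^{(t)}$ while its preprocessing only saves $\varepsilon^{(t)}$ in the exponent, no single fixed accuracy works across all iterations: the forest‑membership exponent would eventually overtake the preprocessing savings and the call would cease to be subquadratic. Letting $\varepsilon^{(t)}$ grow geometrically with the current exponent $e_t$ repairs this, but only because the number of iterations $\tau$ is an \emph{absolute} constant ($\tau\le k/2$, guaranteed by \Cref{enum:forest def forest depth} of \Cref{def:qmf} together with the $\Theta_{\delta,R}(n)$ per‑step growth of $\cF$), so the $9^{\tau}$ blowup stays $O_{k,\delta}(1)$; this is the single place where bounded forest depth is actually used. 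Everything else — correctness of the composed oracles, that $R=2T+1$ is constant so \Cref{lem:large matching with potential} applies without asymptotic overhead, and the arithmetic above — is routine.
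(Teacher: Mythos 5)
Your proposal is correct and follows essentially the same route as the paper: the same recursion $t_{\cF_{t+1}} = \Otil\bigl((t_{\cF_t}+t_{\varphi^\inp}+n)n^{4\varepsilon^{(t)}}\bigr) + t_{M^\inp} + t_{\cF_t}$, the same adaptive accuracy schedule $\varepsilon^{(t)} = 2e_t$ (the paper's $\hat\varepsilon_s = 2\varepsilon_s$), the same resulting $9^\tau$ exponent blowup bounded via $\tau \le k/2$, and the same per-call preprocessing bound $\Otil(n^{2-e_t})$. Your added remark on why a fixed accuracy parameter would fail is a faithful explanation of the choice the paper makes implicitly.
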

\begin{proof}
For $s = 0 \dots \tau$ denote with $\varepsilon_s > 0$ a constant such that $t_{A_s} = \Otil(n^{1+\varepsilon_s})$, where $t_{A_s}$ is the running time of $\mem_{A_s}$.
Notice that $\varepsilon_0 = \varepsilon_\inp$.
At step $s$ we choose $\hat \varepsilon_s := 2 \varepsilon_s$ as the $\varepsilon$ parameter in \Cref{lem:large matching with potential}. This implies that $\largematchforward$ runs in $\Otil(n^{1+\varepsilon_s} \cdot n^{1- \hat \varepsilon_s}) = \Otil(n^{2 - \varepsilon_s})$ time.
We have already proved that $ \tau \leq k$, thus \Cref{alg:implementation of step1} takes $\Otil(n^{2-\varepsilon})$ time in total, where $\varepsilon := \min_{s \in [0, \tau]} \varepsilon_s = \varepsilon_\inp$.

Denote with $t_\cF$ the query time of $\mem_\cF$.
For each $s$, we have $t_{\cF_{s+1}} = t_{M_{s+1}} + t_{M^\inp} + t_{\cF_{s}}$.
Thanks to \Cref{lem:large matching with potential} we have $t_{M_{s+1}} = \Otil((t_{A_s} + t_{\varphi^\inp} + n) n^{4\hat \varepsilon_s})$. Moreover, $t_{A_s} = t_{\cF_s}$. Thus, $t_{\cF_{s+1}} = (t_{\cF_s} + t_{\varphi^\inp} + n) n^{8 \varepsilon_s} + t_{M^\inp} + t_{\cF_{s}}$.
Since, $t_{\cF_0} = t_{\varphi^\inp} = t_{M^\inp} = \Otil(n^{1+\varepsilon_\inp})$ we have $t_{\varphi^\out} = t_{\varphi^\inp} + t_{\cF_\tau} = \Otil(n^{1+ 9^\tau \cdot \varepsilon_\inp}) = \Otil_{k}(n^{1+ O_{k, \delta}(\varepsilon_\inp)})$.
\end{proof}

\subsection{Implementing the Template Algorithm}

We can put together the results proved in the previous subsections and show that \Cref{alg:template algorithm} can be implemented in sublinear time.

\begin{theorem}
\label{thm:template algo in sublinear time}
There exists a constant  $\varepsilon > 0$ such that \Cref{alg:template algorithm} can be implemented in time $O(n^{2-\varepsilon})$.
Moreover, using the notation in \Cref{obs:how to build matching oracle}, we can return a matching oracle $\match_{M^\alg_{[\alpha_w]}}$ running in time $O(n^{1-\varepsilon})$ such that $M^\alg_{[\alpha_w]}$ satisfies $|M^\alg_{[\alpha_w]}| \geq (1-\gamma) n$ and $c(M^\alg_{[\alpha_w]}) \leq c(M^\opt)$ and $\match_{M^\alg_{[\alpha_w]}}$.
\end{theorem}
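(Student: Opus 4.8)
The plan is to run \Cref{alg:template algorithm} by substituting its two abstract steps with their sublinear realizations \Cref{alg:implementation of step1} and \Cref{alg:implementation of step2}, chaining the oracle data structures produced in iteration $i$ into the inputs of iteration $i+1$, and tracking the exponent in the oracle running times so that it never grows past a constant. Initially $M=\emptyset$ and $\varphi\equiv 0$, so $\match_M$ and $\eval_\varphi$ are trivial and run in $\Otil(1)$ time, hence in $\Otil(n^{1+\varepsilon_0})$ for any constant $\varepsilon_0>0$ we like; I would fix $\varepsilon_0$ small at the end. Two preliminary remarks make the step-implementations applicable at every iteration: the lazy rescaling $c(e)\leftarrow c(e)/\gamma$ adds only $O(1)$ per cost query and keeps all costs in $[1/\gamma,\,C/\gamma]$, and since each potential is modified at most $T=C/\gamma^3$ times, $\varphi$ is always an $R$-potential with $R=2T+1=O_{C,\gamma}(1)$, so \Cref{lem:large matching with potential} and \Cref{lem:augmenting paths with potential} apply with $R$ treated as a constant.

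Next I would carry out the $T$ iterations. At the start of iteration $i$ we hold $\eval_{\varphi_i}$ and $\match_{M_i}$ with running time $\Otil(n^{1+\varepsilon_i})$. Running \Cref{alg:implementation of step1} takes $\Otil_{k,\gamma}(n^{2-\varepsilon_i})$ time and returns $\eval_{\varphi'_i},\match_{M_{i+1}}$ realizing Step 1, with running time $\Otil(n^{1+\varepsilon'_i})$ and $\varepsilon'_i=O_{k,\gamma}(\varepsilon_i)$; feeding these into \Cref{alg:implementation of step2} takes $\Otil(n^{2-\varepsilon'_i})$ time and returns $\eval_{\varphi_{i+1}}$ realizing Step 2 (the matching oracle is unchanged, since Step 2 does not touch $M$), with running time $\Otil(n^{1+\varepsilon_{i+1}})$ and $\varepsilon_{i+1}=O_{k,\delta}(\varepsilon'_i)$. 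In both analyses the new exponent is a fixed constant multiple of the old one, the constant depending only on $k,\gamma,\delta$ (each a function of $\gamma$ and $C$ alone), and the exponents form a non-decreasing chain. Hence after $T$ iterations $\varepsilon_T\le\lambda^{2T}\varepsilon_0=O_{C,\gamma}(\varepsilon_0)$ for a constant $\lambda=\lambda(k,\gamma,\delta)$. Now I fix $\varepsilon_0$ small enough that $\varepsilon_T$ falls below all the ``sufficiently small'' thresholds invoked above (and below, say, $1/2$). Summing the per-iteration costs, the loop runs in $\sum_{i<T}\Otil(n^{2-\varepsilon_i})\le T\cdot\Otil(n^{2-\varepsilon_0})=\Otil(n^{2-\varepsilon_0})$; absorbing the $\polylog$ factor this is $O(n^{2-\varepsilon})$ with $\varepsilon:=\varepsilon_0/2$, which gives the first claim (computing the output $\hat c$ from the sampled edges costs an extra $\Otil(n^{1+\varepsilon_T})$ time, its correctness being \Cref{thm:weaker main thm mwm}).

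For the matching oracle, after the loop we hold $\match_{M^\alg}$ with query time $\Otil(n^{1+\varepsilon_T})$ and, by \Cref{lem:few free vertices}, $|M^\alg|\ge(1-4\gamma)n$. Following \Cref{obs:how to build matching oracle}, I would sample the multiset $S$ of $O_{C,\gamma}(\log n)$ edges of $M^\alg$ by querying $\match_{M^\alg}$ on uniformly random vertices of $V_0$ (each is matched with probability $\ge 1-4\gamma$, so $O_{C,\gamma}(\log n)$ trials suffice whp), let $w$ be the largest value such that at least $3\gamma|S|$ sampled edges cost $\ge w$, and define $\match_{M^\alg_{[\alpha_w]}}(u)$ to query $v\leftarrow\match_{M^\alg}(u)$ and return $v$ if $v\ne\bot$ and $c(u,v)\le w$, and $\bot$ otherwise. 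This oracle has the same query time $\Otil(n^{1+\varepsilon_T})=\Otil(n^{1+O_\gamma(\varepsilon)})$ as $\match_{M^\alg}$ up to one cost query, and it stores exactly the $|M^\alg|-\alpha_w n$ lightest edges of $M^\alg$. By \Cref{obs:how to build matching oracle}, whp $|\alpha_w-3\gamma|\le\gamma^2/C$, so $|M^\alg_{[\alpha_w]}|\ge(1-8\gamma)n$ and, combining \Cref{obs:how to build matching oracle} with \Cref{lem:cheaper matchings}, $c(M^\alg_{[\alpha_w]})\le c(M^\alg_{[2\gamma]})\le c(M^\opt)$; rescaling $\gamma$ yields the stated $(1-\gamma)n$ and $c(M^\opt)$ bounds.

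The main obstacle is the exponent book-keeping in the second paragraph: showing that the inflation of the oracle-time exponent through two nested levels of ``run $\largematch$ a constant number of times'' (once inside each step-implementation) and then through $T$ outer iterations compounds to only a constant factor. This rests on three facts that all hold by the parameter choices in \Cref{alg:template algorithm}: each of $\augmenteligible$ and $\largematchforward$ is invoked $O_{k,\gamma}(1)$ times within a single step (because $|\cA|$, resp.\ $|\cF|$, grows by $\Omega(n)$ per successful call); $T$ is itself a constant; and the potential range $R$ stays $O(1)$, so \Cref{lem:large matching with potential} and \Cref{lem:augmenting paths with potential} incur no asymptotic overhead. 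Once these are in hand, the remaining ingredients — sampling through the matching oracle, the cost thresholding, and the lazy rescaling — are routine.
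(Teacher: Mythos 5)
Your proposal is correct and follows essentially the same route as the paper's proof: chain the oracle outputs of \Cref{alg:implementation of step1} and \Cref{alg:implementation of step2} across the $T=O_{C,\gamma}(1)$ iterations, observe that each step inflates the oracle-time exponent by only a constant factor so that choosing $\varepsilon_0$ small enough (possible since the initial oracles are trivial) keeps everything sublinear, and then build $\match_{M^\alg_{[\alpha_w]}}$ by sampling $S$ through $\match_{M^\alg}$ and thresholding on the cost $w$, invoking \Cref{obs:how to build matching oracle} for the size and cost guarantees. Your version is, if anything, slightly more explicit about the exponent bookkeeping and the sampling procedure than the paper's.
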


\begin{proof}
\Cref{alg:template algorithm} runs $T = O_{C, \gamma}(1)$ iterations, and a single iterations consists of Step 1 and Step 2. 
At iteration $s$ denote with $\varepsilon^{(s)}_\inp$ the value of $\varepsilon_\inp$ for Step 1 input (or, equivalently, the value of $\varepsilon_\out$ for Step 2 output at iteration $s-1$) and with $\varepsilon^{(s)}_\out$ the value of $\varepsilon_\out$ for Step 1 output (or, equivalently, the value of $\varepsilon_\inp$ for Step 2 input at iteration $s$).
Every time we run either Step 1 or Step 2, the value of $\varepsilon_\out$ is at most some constant factor larger than  $\varepsilon_\inp$.
This translates into $\varepsilon^{(s)}_\inp = O_{C, k, \gamma}(\varepsilon^{(s-1)}_\out)$ and $\varepsilon^{(s)}_\out = O_{C, k, \gamma}(\varepsilon^{(s)}_\inp)$.
Thus, after $T$ iterations $\varepsilon^{(T)}_\out$ is arbitrarily small, provided that $\varepsilon^{(0)}_\inp$ is small enough.
To conclude, we notice that the initial matching is empty and the initial potential is identically $0$, so the first membership oracle and potential oracle run in linear linear, thus we can set $\varepsilon^{(0)}_\inp$ arbitrarily small.
Finally, let $M^\alg$ be the last matching computed by \Cref{alg:template algorithm}. We have at our disposal a matching oracle $\match_{M^\alg}$ running in time $\Otil(n^{1+\varepsilon_\inp^{(T+1)}})$, so we can easily sample the a $S$ of $O(\log n)$ edges from $M^\alg$ in time $\Otil(n^{1+\varepsilon_\out^{(T)}})$. This conclude the implementation of \Cref{alg:template algorithm}.

Moreover, we compute $w$ as the largest value such that at least $3\gamma|S|$ edges in $S$ have cost $\geq w$ and define $\alpha_w$ such that exactly $\alpha_w \cdot n$ edges in $M^\alg$ have cost $\geq w$. Then, we implement a matching oracle $\match_{M^\alg_{[\alpha_w]}}$ for $M^\alg_{[\alpha_w]}$ running in time $\Otil(n^{1+\varepsilon_\out^{(T)}})$ as follows: given $u \in V_0 \cup V_1$ we set $v \leftarrow \match_{M^\alg}(u)$; if $c(u, v) < w$ then we return $v$, else we return $\bot$. Thanks to \Cref{obs:how to build matching oracle}, we have $|M^\alg_{[\alpha_w]}| \geq (1-\gamma) n$ and $c(M^\alg_{[\alpha_w]}) \leq c(M^\opt)$.
\end{proof}
\section{Proof of our Main Theorems}
\label{sec:general costs and emd}

In this section we piece things together and prove \Cref{thm:main theorem mwm}. Then, we use \Cref{thm:main theorem mwm} to prove \Cref{thm:main theorem intro EMD}, \Cref{thm:main theorem intro EMD generalized} and \Cref{thm:main theorem intro graph}.

\subsection{Proof of \Cref{thm:main theorem mwm}}
In this subsection we strengthen \Cref{thm:weaker main thm mwm}, extend its scope to arbitrary costs and combine it with \Cref{thm:template algo in sublinear time} to obtain \Cref{thm:main theorem mwm}. We restate the latter for convenience. 

\MainThmMWM*

\paragraph{Roadmap of the proof.}
\Cref{thm:weaker main thm mwm} works only for weights in $[1, C]$. In order to reduce to that case, we need to find a \emph{characteristic cost} $\wbar$ of min-weight matchings with size in $[\alpha n, \beta n]$. Then, we round every cost to a multiple of $\frac{1}{2}\gamma^2 \wbar$, where $\gamma$ is a small constant. We show that, thanks to certain properties of the characteristic cost $\wbar$, the approximation error induced by rounding the costs is negligible. 
Finally, we pad each size of the bipartition with dummy vertices to reduce the problem of finding a matching of approximate size $\beta n$ to that of finding an \emph{approximate perfect matching}, which is addressed in \Cref{thm:weaker main thm mwm}.

\paragraph{Notation.}
Similarly to \Cref{thm:main theorem mwm}, we denote with $M^\xi$ the min-weight matching of size $\xi n$ in $G$. Likewise, we will define a graph $\Gbar$ and denote with $\Mbar^\xi$ the min-weight matching of size $\xi n$ in $\Gbar$.
As in \Cref{sec:implementing in sublinear time}, given a matching $M$, we denote with $M_{[\delta]}$ the matching obtained from $M$ by removing the $\delta n$ most expensive edges.
We denote with $\mu(M)$ the cost of the most expensive edge in $M$. 
Given $w \geq 0$, we denote with $G_{\leq w}$ the graph of edges which cost $\leq w$.
Throughout this subsection, fix a constant $0 < \gamma < (\beta - \alpha) / 4$.

\paragraph{Reduction from arbitrary weights to $[1, C]$.}
The next technical lemma shows that, if we can can solve an easier version of the problem in \Cref{thm:main theorem mwm} where we allow an additive error $\gamma^2 \wbar n$ on an instance where $\wbar$ is an upper bound for the cost function; then we can also solve the problem in \Cref{thm:main theorem mwm}. This reduction is achieved by finding a suitable characteristic cost $\wbar$ in sublinear time and running the aforementioned algorithm on $G_{\leq \wbar}$.

\begin{lemma}
\label{lem:find w bar}
Suppose that there exists an algorithm that takes as input a bipartite graph $\Gbar=(\Vbar_0 \cup \Vbar_1, \Ebar)$ endowed with a cost function $\cbar: \Ebar \rightarrow [0, \wbar]$, outputs an estimate $\hat c$ and a matching oracle $\match_{\hat M}$ such that (whp) $\hat c$ satisfies
\begin{align*}
\label{eq:additive approximation}
\cbar(\Mbar^\alpha) \leq \hat c\leq \cbar(\Mbar^{\alpha + \gamma}) + \gamma^2 \wbar n 
\end{align*}
while $\hat M$ satisfies $|\hat M| \geq \alpha n$ and $c(\hat M) \leq \cbar(\Mbar^{\alpha + \gamma}) + \gamma^2 \wbar n$.
Suppose also that such algorithm runs in time $O(\bar n^{2-\varepsilon})$ and $\match_{\hat M}$ runs in time $O(\bar n^{1+\varepsilon})$ for some $\varepsilon > 0$, where $\bar n = |\Vbar_0|=|\Vbar_1|$.

Then, there exists an algorithm that takes as input a bipartite graph $G=(V_0 \cup V_1, E)$ endowed with a cost function $c: E \rightarrow \bbR^+$, outputs an estimate $\hat c$ and a matching oracle $\match_{\hat M}$ such that (whp) $\hat c$ satisfies
\[
C(M^\alpha) \leq \hat c \leq c(M^\beta) 
\]
while $\hat M$ satisfies $|\hat M| \geq \alpha n$ and $c(\hat M) \leq c(M^\beta)$.
Moreover, such algorithm runs in time $O(n^{2-\varepsilon})$ and $\match_{\hat M}$ runs in time $O(n^{1+\varepsilon})$ for some $\varepsilon > 0$, where $n = |V_0|=|V_1|$.
\end{lemma}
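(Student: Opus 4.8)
The plan is to reduce to the hypothesised bounded-cost algorithm by \emph{restricting} $G$ to the sub-graph $G_{\le\wbar}$ of edges of cost at most a suitable threshold $\wbar$, and then running that algorithm on $(G_{\le\wbar},\,c|_{G_{\le\wbar}})$, whose cost range is $[0,\wbar]$. Restricting the graph, rather than truncating costs at $\wbar$, is what makes the two easy directions automatic: every matching of $G_{\le\wbar}$ is a matching of $G$, so the returned estimate satisfies $\hat c\ge\cbar(\Mbar^\alpha)\ge c(M^\alpha)$ and the returned $\hat M$ is a genuine matching of $G$ with $|\hat M|\ge\alpha n$; moreover $G_{\le\wbar}$ has the same vertex set as $G$, so $\bar n=n$ and the time- and oracle-bounds transfer verbatim, each adjacency query to $G_{\le\wbar}$ costing one query to $c$. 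Thus the entire content is to (i) characterise a good threshold $\wbar$ and show one exists, and (ii) find one in sublinear time.

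For (i), call $\wbar$ \emph{good} if $G_{\le\wbar}$ has a matching of size $\ge(\alpha+\gamma)n$ and $\cbar(\Mbar^{\alpha+\gamma})+\gamma^2\wbar n\le c(M^\beta)$. Let $e_1,e_2,\dots$ be the edges of $M^\beta$ in decreasing cost order. I claim $\wbar$ is good whenever $c(e_{(\beta-\alpha-\gamma)n})\le\wbar\le\tfrac{\beta-\alpha-\gamma}{\gamma^2}\,c(e_{(\beta-\alpha-\gamma)n})$ (and $\wbar=0$ is good when $c(e_{(\beta-\alpha-\gamma)n})=0$). Indeed, the lower bound forces at least $(\alpha+\gamma)n$ edges of $M^\beta$ to survive in $G_{\le\wbar}$, so their $(\alpha+\gamma)n$ cheapest members form a matching in $G_{\le\wbar}$ of cost $c(M^\beta)$ minus the cost of $e_1,\dots,e_{(\beta-\alpha-\gamma)n}$, and the latter is $\ge(\beta-\alpha-\gamma)n\cdot c(e_{(\beta-\alpha-\gamma)n})$; the upper bound then makes $\gamma^2\wbar n$ no larger than this subtracted quantity. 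Since $\gamma<(\beta-\alpha)/4$ gives $\gamma^2<\beta-\alpha-\gamma$, this value-interval is non-empty with multiplicative width $\Omega_\gamma(1)$. For a good $\wbar$ the hypothesised algorithm then returns $c(M^\alpha)\le\cbar(\Mbar^\alpha)\le\hat c\le\cbar(\Mbar^{\alpha+\gamma})+\gamma^2\wbar n\le c(M^\beta)$, with the analogous chain for $\hat M$; a final rescaling of $\gamma$ (running with a target just below $\alpha$ and comparing against $M^\beta$ through $M^{\beta-O(\gamma)}$) absorbs the $+\gamma$ shifts.

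For (ii), which is the main obstacle, $\wbar$ is defined through $M^\beta$ and so is not directly available; but because the good value-interval has multiplicative width $\Omega_\gamma(1)$, a geometric grid of ratio $\Theta_\gamma(1)$ suffices provided it is anchored near the scale $c(e_{(\beta-\alpha-\gamma)n})$. One first runs a short doubling search on $w$, using the maximum-matching-size estimator of \Cref{thm:mcm in sublinear} on $G_{\le w}$ to detect the smallest $w$ for which a size-$(\alpha+\gamma)n$ matching exists (one checks that $c(e_{(\beta-\alpha-\gamma)n})$ is at least this value, so the good interval lies above it), and reads a high quantile of the edge costs of the matching returned there via $O_\gamma(\log n)$ samples from its matching oracle to calibrate the scale. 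Then one runs the bounded-cost algorithm at each of the $O_\gamma(\log n)$ resulting grid thresholds, discards any threshold at which the estimate of the maximum matching size is too small to certify the algorithm's preconditions, and outputs the smallest surviving estimate together with the cheapest returned matching: every surviving estimate is $\ge c(M^\alpha)$ by the easy lower bound, and at least one grid threshold is good, so its estimate (and its matching) is $\le c(M^\beta)$, sandwiching the output. The points I expect to need genuine care are that the geometric grid contains a good threshold in the fully general $\bbR^+$ regime (the $c(\cdot)\in[0,1]$ case relevant to \Cref{thm:main theorem intro EMD} being immediate, since then the grid simply spans $[\poly(1/n),1]$), and that the \emph{matching} selected at the end — not just the numeric estimate — still satisfies $c(\hat M)\le c(M^\beta)$; both reduce to checking that the quantities involved are monotone and robust enough that the $\pm\poly(\gamma)n$ sampling and estimation errors, together with the final rescaling of $\gamma$, do not violate either inequality.
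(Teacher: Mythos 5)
Your reduction — restrict to the subgraph $G_{\le\wbar}$ rather than truncate costs, run the hypothesised algorithm there, and use that every matching of $G_{\le\wbar}$ is a matching of $G$ — is exactly the paper's, and your existence argument for a good threshold (the value interval $\bigl[c(e_{(\beta-\alpha-\gamma)n}),\,\tfrac{\beta-\alpha-\gamma}{\gamma^2}\,c(e_{(\beta-\alpha-\gamma)n})\bigr]$ anchored to the order statistics of $M^\beta$) is correct and close in spirit to the paper's two conditions $\gamma\wbar\le\mu(M^\beta_{[\gamma]})$ and $\wbar\ge\mu(M^{\alpha+3\gamma}_{[2\gamma]})$. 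The gap is in step (ii), the part you yourself flag as the main obstacle: your search procedure does not work over general costs in $\bbR^+$. First, a doubling/halving search for the smallest feasible $w$ has no anchor: if the relevant scale is, say, $2^{-n}$ times the largest sampled cost, the search makes $\Omega(n)$ calls to the matching-size estimator and the algorithm is no longer subquadratic. Second, your proposed calibration — reading a high quantile of the edge costs of the matching returned at the smallest feasible threshold $w^*$ — cannot work: that matching lives in $G_{\le w^*}$, so all of its edge costs are at most $w^*$, whereas $c(e_{(\beta-\alpha-\gamma)n})$ (hence your good interval) can sit arbitrarily many orders of magnitude above $w^*$. So as written, neither the number of grid points nor the claim that the grid hits a good threshold is established.

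The gap is fillable in two ways. The paper's route is to sample $\Theta(n\log n/\gamma)$ edges of $G$ uniformly, sort their costs $w_1\le\dots\le w_s$, and binary-search over these empirical quantiles for the largest $w_i$ at which $\approxmatch(G_{\le\gamma w_i},\gamma)$ still reports a matching smaller than $(\beta-2\gamma)n$; this makes only $O(\log n)$ estimator calls regardless of the numeric range, and the fact that (whp) at most $\gamma n$ edges of $G$ have cost strictly between consecutive sampled values is exactly what lets one pass from $w_{i+1}\ge\mu(M^{\alpha+3\gamma}_{[\gamma]})$ to $w_i\ge\mu(M^{\alpha+3\gamma}_{[2\gamma]})$. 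Alternatively, your grid can be rescued by a sharper goodness claim: for every $w$ with $w^*\le w\le c(e_{(\beta-\alpha-\gamma)n})$ one has $\cbar(\Mbar^{\alpha+\gamma})\le(\alpha+\gamma)n\,w^*$ and $c(M^\beta)\ge(\beta-\alpha-\gamma)n\,w$, so every such $w$ exceeding $\tfrac{\alpha+\gamma}{\beta-\alpha-\gamma-\gamma^2}\,w^*$ is already good; combined with your interval above $c(e_{(\beta-\alpha-\gamma)n})$, this shows a good threshold always exists within an $O_{\alpha,\beta,\gamma}(1)$ multiplicative factor of $w^*$, so a constant-size geometric grid anchored at $w^*$ suffices — but $w^*$ itself must still be located by searching over sampled edge costs rather than over $\bbR^+$. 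Either fix should be written out; without one, the lemma's $O(n^{2-\varepsilon})$ running time is not established.
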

\begin{proof}
First, we show how to compute, in time $O(n^{2-\varepsilon})$, a value $\wbar$ such that:
\begin{enumerate}[(i)]
    \item $\gamma \cdot \wbar \leq \mu(M^\beta_{[\gamma]})$
    \item $\wbar \geq \mu(M^{\alpha + 3 \gamma}_{[2\gamma]})$.
\end{enumerate}

We sample $s = \Theta(n \log n / \gamma)$ edges from $G$ uniformly at random. Let $w_1 \leq w_2 \leq \dots \leq w_s$ be their costs. 
Recall that \Cref{thm:mcm in sublinear} allows us to compute the size of a maximal-cardinality matching (MCM) of the graph $G$, up to a $\gamma n$-additive approximation, in time $O(n^{2-\Omega_\gamma(1)})$. Denote that algorithm with $\approxmatch(G, \gamma)$.
Using binary search, we find the largest cost $w_i$ such that $\approxmatch(G_{\leq \gamma w_i}, \gamma)$ returns an estimated MCM size $< (\beta -  2 \gamma) n$.
Then, we set $\wbar := w_i$. 

We prove that property $(i)$ holds. Suppose that $\mu(M^\beta_{[\gamma]}) < \gamma w_i$. Then, in $G_{\leq \gamma w_i}$ there exists a matching of size $(\beta - \gamma) n$, therefore $\approxmatch(G_{\leq \gamma w_i}, \gamma)$ finds a matching of size $\geq (\beta - 2\gamma)n$ whp, contradiction. 

We prove that property $(ii)$ holds.
First, we prove that $w_{i+1} \geq \mu(M^{\alpha + 3\gamma}_{[\gamma]})$. Indeed, suppose the reverse (strict) inequality holds. Then, for each matching $M$ of size $(\beta - 2\gamma) n$ we have
\[
c(M) \geq c(M^{\beta - 2 \gamma}) \geq c(M^{\alpha + 3\gamma}) > \gamma \cdot w_{i+1} n
\]
which implies that there exists $e \in M$ such that $c(e) > \gamma \cdot w_{i+1}$. However, this cannot hold for each matching $M$ of size $(\beta -2\gamma)n$ because $\approxmatch(G_{\leq \gamma w_{i+1}}, \gamma)$ returned (whp) a matching (oracle) of size $ \geq (\beta - 2\gamma) n$. Contradiction. 
Therefore, we have $w_{i+1} \geq \mu(M^{\alpha + 3\gamma}_{[\gamma]})$. However, since we sampled $\Theta(n \log n / \gamma)$ edges, then for each $i$ there are, whp, at most $\gamma \cdot n$ edges which cost $w$ satisfies $w_i < w < w_{i+1}$. Hence, removing $\gamma n$ more edges from $M^{\alpha + 3\gamma}_{[\gamma]}$ we obtain $w_i \geq \mu(M^{\alpha + 3\gamma}_{[2\gamma]})$. 

We define $\Gbar := G_{\leq \wbar}$ and $\cbar = c|_{\Ebar}$, run the algorithm in the premise of the lemma on $\Gbar$ and $\cbar$ and let $\hat c$, $\match_{\hat M}$ be its outputs.
It is apparent that this reduction takes $O(n^{2-\varepsilon})$ for some $\varepsilon > 0$.

Since $\Gbar$ is a subgraph of $G$, we have $c(M^\alpha) \leq c(\Mbar^\alpha)$.
Moreover, conditions $(i)$ and $(ii)$, together with $5\gamma \leq \beta - \alpha $ imply
\[
c(\Mbar^{\alpha + \gamma}) \leq c(M^{\alpha + 3\gamma}_{[2\gamma]}) \leq c(M^{\alpha + 3\gamma}) \leq c(M^{\beta - \gamma}) \leq
c(M^\beta_{[\gamma]}) \leq c(M^\beta) - \gamma n \cdot \mu(M^\beta_{[\gamma]}) \leq c(M^\beta) - \gamma^2 \wbar n.
\]
Thus, $c(\Mbar^{\alpha}) \leq \hat c \leq c(\Mbar^{\alpha + \gamma}) + \gamma^2 \wbar n $ implies $c(M^\alpha) \leq \hat c \leq c(M^\beta)$ and $c(\hat M) \leq  c(\Mbar^{\alpha + \gamma}) + \gamma^2 \wbar n $ implies $\hat c \leq c(M^\beta)$.
\end{proof}

The following lemma shows how to reduce from real-values costs in $[0, w]$ (where possibly $w = \omega(1)$) to the more tame case where costs are integers in $[1, C]$. This reduction is achieved via rounding.   

\begin{lemma}
\label{lem:cost rounding}
Suppose that there exists an algorithm that takes as input a bipartite graph $\Gbar=(\Vbar_0 \cup \Vbar_1, \Ebar)$ endowed a cost function $\cbar: E \rightarrow [1, C]$, with $C=O(1)$, returns an estimate $\hat c$ and a matching oracle $\match_{\hat M}$ such that (whp) $\hat c$ satisfies
\[
\cbar(\Mbar^\alpha) \leq \hat c \leq \cbar(\Mbar^\beta)
\]
while $\hat M$ satisfies $|\hat M| \geq \alpha n$ and $\bar c(\hat M) \leq \cbar(\Mbar^\beta)$.
Suppose also that such algorithm runs in time $O(\bar n^{2-\varepsilon})$ and $\match_{\hat M}$ runs in time $O(\bar n^{1+\varepsilon})$ for some $\varepsilon > 0$, where $\bar n = |\Vbar_0|=|\Vbar_1|$.

Then, there exists an algorithm that takes as input a bipartite graph $G=(V_0 \cup V_1, E)$ endowed a cost function $c: E \rightarrow [0, w]$ (possibly $w = \omega(1)$), returns an estimate $\tilde c$ and a matching oracle $\match_{\tilde M}$ such that (whp) $\tilde c$ satisfies
\[
c(M^\alpha) \leq \tilde c \leq c(M^\beta) + \gamma^2 w n
\]
while $\tilde M$ satisfies $|\tilde M| \geq \alpha n$ and $c(\tilde M) \leq c(M^\beta) + \gamma^2 w n$.
Moreover, such algorithm runs in time $O(n^{2-\varepsilon})$ and $\match_{\hat M}$ runs in time $O(n^{1+\varepsilon})$ for some $\varepsilon > 0$, where $n = |V_0|=|V_1|$.
\end{lemma}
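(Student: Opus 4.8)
The plan is to round each edge cost up to an integer multiple of $\Delta := \tfrac12 \gamma^2 w$, rescale by $1/\Delta$, and shift by $1$, so that the resulting costs are integers in a constant range; then I invoke the assumed algorithm and undo the rescaling. Concretely, I would set $\cbar(e) := \lceil c(e)/\Delta\rceil + 1$ for every $e \in E$, computed lazily with $O(1)$ overhead per query to $c$. Then $\cbar(e) \in \bbZ$ and $\cbar(e) \ge 1$, and since $c(e)/\Delta \le w/\Delta = 2/\gamma^2$ we have $\cbar(e) \le C := \lceil 2/\gamma^2\rceil + 1 = O(1)$ because $\gamma$ is a fixed constant. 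I take $\Gbar := G$ (so $\bar n = n$) equipped with the cost function $\cbar$, run the algorithm from the premise on $(\Gbar, \cbar)$ to obtain $\hat c$ and $\match_{\hat M}$, and finally output $\tilde c := \Delta\,(\hat c - \alpha n)$ and $\match_{\tilde M} := \match_{\hat M}$ (the same matching, since $\Gbar$ and $G$ share the same edge set).

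The correctness analysis hinges on the elementary per-matching estimate obtained by summing $c(e)/\Delta + 1 \le \cbar(e) < c(e)/\Delta + 2$ over the edges of a matching $M$:
\[
\tfrac{1}{\Delta}\, c(M) + |M| \;\le\; \cbar(M) \;<\; \tfrac{1}{\Delta}\, c(M) + 2|M|.
\]
Applying the left inequality to each size-$\alpha n$ matching $M$ gives $\cbar(M) \ge \tfrac1\Delta c(M) + \alpha n \ge \tfrac1\Delta c(M^\alpha) + \alpha n$, hence $\cbar(\Mbar^\alpha) \ge \tfrac1\Delta c(M^\alpha) + \alpha n$, and $\hat c \ge \cbar(\Mbar^\alpha)$ then yields $\tilde c \ge c(M^\alpha)$. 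For the upper bounds I use that $M^\beta$ is itself a size-$\beta n$ matching in $\Gbar$, so $\cbar(\Mbar^\beta) \le \cbar(M^\beta) < \tfrac1\Delta c(M^\beta) + 2\beta n$; combining with $\hat c \le \cbar(\Mbar^\beta)$, $\alpha \ge 0$ and $\beta \le 1$ yields $\tilde c < c(M^\beta) + \Delta(2\beta - \alpha)n \le c(M^\beta) + 2\Delta n = c(M^\beta) + \gamma^2 w n$. The matching guarantee is the same computation applied to $\hat M$: the left inequality gives $\tfrac1\Delta c(\tilde M) + |\hat M| \le \cbar(\hat M) \le \cbar(\Mbar^\beta) < \tfrac1\Delta c(M^\beta) + 2\beta n$, and $|\hat M| \ge \alpha n$ rearranges this to $c(\tilde M) < c(M^\beta) + \gamma^2 w n$, while $|\tilde M| = |\hat M| \ge \alpha n$. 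The running time is immediate: the reduction only relabels costs with $O(1)$ work per access and makes a single call to the premise algorithm on an $n$-vertex instance with $C = O(1)$, so it runs in $O(n^{2-\varepsilon})$ and $\match_{\tilde M} = \match_{\hat M}$ runs in $O(n^{1+\varepsilon})$ with the same $\varepsilon$ as in the premise.

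I do not expect a genuine obstacle here; the one point needing care is the calibration of $\Delta$. Rounding perturbs each rescaled edge cost by $\Theta(1)$, so over a matching of $\Theta(n)$ edges the perturbation is $\Theta(n)$ in rescaled units and $\Theta(\Delta n)$ after multiplying back by $\Delta$; choosing $\Delta = \tfrac12 \gamma^2 w$ makes this at most $\gamma^2 w n$, exactly the additive slack the lemma permits. One should also verify that the $+1$ shift (needed so that a zero-cost edge maps to a cost at least $1$) does not inflate $C$ past $O(1)$, which it does not since $w/\Delta$ is a constant, and note that the right-hand guarantees are only meaningful when $G$ admits a matching of size $\beta n$ — which holds in every instance to which this lemma is applied.
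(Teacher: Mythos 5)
Your proposal is correct and follows essentially the same route as the paper: the identical rounding $\cbar(e) = \lceil 2c(e)/(\gamma^2 w)\rceil + 1$, the same constant $C = O(1/\gamma^2)$, a single call to the premise algorithm on $\Gbar = G$, and the same accounting of the $O(\Delta)$ per-edge rounding error over at most $n$ edges. The only difference is cosmetic: the paper simply outputs $\tilde c = \tfrac12\gamma^2 w\cdot\hat c$, whereas you subtract the extra $\alpha n\Delta$ coming from the $+1$ shift; both versions satisfy the stated bounds.
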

\begin{proof}
We define $\cbar(e) = \ld\lceil \frac{2 c(e)}{\gamma^2 w}\rd\rceil + 1$. Then, the maximum value of $\cbar$ on $\Gbar$ is $C:= 2/\gamma^2 + 2 = O(1)$.
We set $\Gbar = G$ and run the algorithm in the premise of the lemma on $\Gbar$ and $\cbar$. Let $\hat c$, $\match_{\hat M}$ be its outputs. We define $\tilde c := \frac{1}{2} \gamma^2 w\cdot \hat c$ and $\tilde M := \hat M$.
The definition of $\cbar$ implies that, for each edge $e$ in $\Gbar$, $c(e) \leq \frac{1}{2} \gamma^2 w \cdot \cbar(e) \leq c(e) + \gamma^2 w$.
Hence,
\[
\frac{1}{2} \gamma^2 w \cdot \cbar(\Mbar^\alpha) \geq c(\Mbar^\alpha) \geq c(M^\alpha)
\]
and
\[
\frac{1}{2} \gamma^2 w \cdot  \cbar(\Mbar^\beta) \leq \frac{1}{2} \gamma^2 w \cdot  \cbar(M^\beta) \leq 
c(M^\beta) + \gamma^2 w n.
\]
Therefore, $\cbar(\Mbar^\alpha) \leq \hat c \leq \cbar(\Mbar^\beta)$ implies $c(M^\alpha) \leq \tilde c \leq c(M^\beta) + \gamma^2 w n$
and $\bar c(\hat M) \leq \cbar(\Mbar^\beta)$ implies $c(\hat M) \leq \frac{1}{2} \gamma^2 w \cdot \bar c(\hat M) \leq \frac{1}{2} \gamma^2 w \cdot \bar c(\Mbar^\beta) \leq c(M^\beta) + \gamma^2 w n$.
\end{proof}

\paragraph{Reduction from size-$\beta n$ matching to perfect matching.}
The following lemma shows that, if we can approximate the min-weight of a perfect matching (allowing $\delta n$ outliers), then we can approximate the min-weight of a size-$\beta n$ matching  (allowing $(\beta - \alpha)n$ outliers). This reduction is achieved by padding the original graph with dummy vertices.

\begin{lemma}
\label{lem:dummy vertices}
Suppose that, for each $\delta > 0$, there exists an algorithm that takes as input a bipartite graph $\Gbar=(\Vbar_0 \cup \Vbar_1, \Ebar)$ endowed a cost function $\cbar: E \rightarrow [1, C]$, with $C=O(1)$, returns an estimate $\hat c$ and a matching oracle $\match_{\hat M}$ such that (whp) $\hat c$ satisfies
\[
\cbar(\Mbar^{1- \delta}) \leq \hat c \leq \cbar(\Mbar^1)
\]
while $\hat M$ satisfies $|\hat M| \geq (1-\delta) n$ and $c(\hat M)  \leq \cbar(\Mbar^1)$.
Suppose also that such algorithm runs in time $O(\bar n^{2-\varepsilon})$ and $\match_{\hat M}$ runs in time $O(\bar n^{1+\varepsilon})$ for some $\varepsilon > 0$, where $\bar n = |\Vbar_0|=|\Vbar_1|$.

Then, for each $0 \leq \alpha < \beta \leq 1$ there exists an algorithm that takes as input a bipartite graph $G=(V_0 \cup V_1, E)$ and $c:E\rightarrow [1, C]$, returns an estimate $\tilde c$ and a matching oracle $\match_{\tilde M}$ such that (whp) $\tilde c$ satisfies
\[
c(M^\alpha) \leq \tilde c \leq c(M^\beta)
\]
while $\tilde M$ satisfies $|\tilde M| \geq \alpha n$ and $c(\tilde M) \leq c(M^\beta)$.
Moreover, such algorithm runs in time $O(n^{2-\varepsilon})$ and $\match_{\tilde M}$ runs in time $O(n^{1+\varepsilon})$ for some $\varepsilon > 0$, where $n = |V_0|=|V_1|$.
\end{lemma}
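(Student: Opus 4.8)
The plan is to pad $G$ with dummy vertices so that a (near) perfect matching of the padded graph is forced to look like a size-$\beta n$ matching of $G$ plus cheap filler. Put $d := (1-\beta)n$ and $\bar n := n + d = (2-\beta)n$, and build $\Gbar=(\Vbar_0\cup\Vbar_1,\Ebar)$ by adding a set $D_0$ of $d$ fresh vertices to $V_0$ and a set $D_1$ of $d$ fresh vertices to $V_1$; $\Ebar$ consists of all edges of $E$ (with their original costs), together with all edges of $D_0\times V_1$ and all edges of $V_0\times D_1$, each of cost $1$, and no edge inside $D_0\times D_1$. Since $1\le C$ this yields $\cbar:\Ebar\to[1,C]$, and an adjacency/cost query on $\Gbar$ costs $O(1)$ on top of one query to $G$. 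I then run the algorithm promised in the hypothesis on $(\Gbar,\cbar)$ with the single outlier parameter $\delta:=\frac{\beta-\alpha}{2(2-\beta)}>0$, and set $\alpha':=\beta-\delta(2-\beta)=\frac{\alpha+\beta}{2}$, so that $\alpha<\alpha'<\beta$. As $\bar n=\Theta(n)$, the time bounds $O(\bar n^{2-\varepsilon})$ and $O(\bar n^{1+\varepsilon})$ become $O(n^{2-\varepsilon})$ and $O(n^{1+\varepsilon})$.

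First I would pin down $\cbar(\Mbar^1)$. Because there are no $D_0$--$D_1$ edges, a perfect matching of $\Gbar$ must match every vertex of $D_0$ into $V_1$ and every vertex of $D_1$ into $V_0$, so exactly $n-d=\beta n$ vertices of $V_0$ are matched inside $V_0\times V_1$; thus every perfect matching of $\Gbar$ is an arbitrary size-$\beta n$ matching of $G$ together with $2d$ unit-cost dummy edges, and $\cbar(\Mbar^1)=c(M^\beta)+2(1-\beta)n$ (existence of $\Mbar^1$ follows from that of $M^\beta$: extend $M^\beta$ by matching its $d$ uncovered vertices on each side to the dummies). I would output $\tilde c:=\hat c-2(1-\beta)n$; the hypothesis $\hat c\le\cbar(\Mbar^1)$ then gives $\tilde c\le c(M^\beta)$ immediately, and likewise $c(\hat M)\le\cbar(\Mbar^1)$ will be used for the oracle below.

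The crux is the matching lower bound $\cbar(\Mbar^{1-\delta})\ge c(M^\alpha)+2(1-\beta)n$. Take any matching $\bar M$ of $\Gbar$ with $m:=|\bar M|=(1-\delta)\bar n$, and split it into $a$ edges inside $V_0\times V_1$ --- a matching $R$ of $G$ of size $a$ --- plus $b$ edges meeting $D_1$ and $c'$ edges meeting $D_0$. Since $b\le|D_1|=d$ and $c'\le|D_0|=d$ we get $a=m-(b+c')\ge m-2d=\alpha' n\ge\alpha n$. Combining $c(R)\ge c(M^a)$ with the elementary monotonicity $c(M^{k+1})\ge c(M^k)+1$ for every $k<\mu(G)$ (delete the costliest edge of $M^{k+1}$, which costs at least $1$) yields $\cbar(\bar M)=c(R)+(b+c')\ge c(M^a)+(b+c')\ge c(M^\alpha)+(a-\alpha n)+(b+c')=c(M^\alpha)+m-\alpha n$, and the choice of $\delta$ makes $m-\alpha n=(1-\delta)\bar n-\alpha n\ge 2d$; minimizing over $\bar M$ gives the claim, hence $\tilde c=\hat c-2(1-\beta)n\ge c(M^\alpha)$. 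This is the step I expect to be the main obstacle: the number of dummies and the value of $\delta$ must be tuned so that simultaneously the unit-cost dummy edges are cheap enough to make the optimal perfect matching of $\Gbar$ read off $c(M^\beta)$ exactly, yet the bounded dummy supply $2d$ (crucially using that $\Gbar$ is bipartite) together with the monotonicity fact forces every $(1-\delta)\bar n$-matching to cost at least $c(M^\alpha)+2d$; these push $\delta$ from both sides, which is why $\delta$ is taken strictly below $\frac{\beta-\alpha}{2-\beta}$.

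Finally I would build $\match_{\tilde M}$. Let $\hat M$ be the matching behind $\match_{\hat M}$, $R$ its $V_0\times V_1$ part, and $a:=|R|\ge\alpha' n$; since all dummy edges cost $1$, $c(R)=c(\hat M)-(|\hat M|-a)\le(c(M^\beta)+2d)-(1-\delta)\bar n+a$ using $c(\hat M)\le\cbar(\Mbar^1)$ and $|\hat M|\ge(1-\delta)\bar n$. By sampling $\tilde O(1)$ vertices and calling $\match_{\hat M}$, I estimate for each candidate threshold $w$ the number of edges of $R$ of cost $\le w$ up to additive $\frac{\alpha'-\alpha}{2}n$, and a binary search produces (whp) a threshold $w^\star$ with $\alpha n\le|\tilde M|\le\alpha' n$ for $\tilde M:=\{(u,v)\in R:c(u,v)\le w^\star\}$ (the same Chernoff-plus-binary-search argument as in the proof of \Cref{thm:weaker main thm mwm}, with ties at $w^\star$ broken by a fixed vertex ordering). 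The oracle $\match_{\tilde M}(u)$ calls $v\leftarrow\match_{\hat M}(u)$ and returns $v$ if $v$ is a real vertex with $c(u,v)\le w^\star$, else $\bot$, costing $O(\bar n^{1+\varepsilon})=O(n^{1+\varepsilon})$. Then $|\tilde M|\ge\alpha n$ by construction, and since deleting the $a-|\tilde M|\ge 0$ costliest edges of $R$ drops its cost by at least $a-|\tilde M|$, $c(\tilde M)\le c(R)-(a-|\tilde M|)\le c(M^\beta)+2d-(1-\delta)\bar n+|\tilde M|\le c(M^\beta)+2d-(1-\delta)\bar n+\alpha' n=c(M^\beta)$, completing the plan.
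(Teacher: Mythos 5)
Your proposal is correct and follows the same core strategy as the paper's proof: pad each side with unit-cost dummy vertices and run the near-perfect-matching algorithm on the padded graph. The execution differs in one substantive way. The paper adds $(1-\beta+\xi)n$ dummies (slightly \emph{more} than $(1-\beta)n$), so the real part of a perfect matching of $\Gbar$ has size only $(\beta-\xi)n$; the resulting slack of $\xi n$ (each real edge costing at least $1$) absorbs all losses, and in particular the real part $\hat M\cap(V_0\times V_1)$ can be returned as $\tilde M$ directly, with $|\tilde M|\ge(\beta-\tfrac32\xi)n$ and $c(\tilde M)\le c(M^\beta)$ falling out immediately. You instead pad with exactly $(1-\beta)n$ dummies, which makes $\cbar(\Mbar^1)=c(M^\beta)+2(1-\beta)n$ exact and gives a clean lower bound via the monotonicity fact $c(M^{k+1})\ge c(M^k)+1$, but it leaves no slack on the output matching: as you correctly observe, $c(R)$ can exceed $c(M^\beta)$ by up to $(a-\alpha')n$, so you must trim $R$ down to size at most $\alpha'n$ with a sampled cost threshold (the same quantile-sampling device the paper uses in \Cref{obs:how to build matching oracle}). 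Both routes are valid; the paper's over-padding buys a simpler oracle at the cost of a slightly messier accounting of $\xi$ versus $\delta$, while yours keeps the padded-graph analysis tight but pays with the extra thresholding step.
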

\begin{proof}
Fix a constant $0 < \xi \leq (\alpha - \beta) / 2$. We construct $\Gbar$, starting from $\Gbar = G$ and $\cbar = c$, as follows.
We add a set of $(1-\beta + \xi) n$ dummy vertices on each side of the bipartition: $\Vbar_0 = V_0 \cup D_0$ and $\Vbar_1 = V_1 \cup D_1$. Add an edge $(d_0, v_1)$ to $E$ for each $(d_0, v_1) \in D_0 \times V_1$ and set $\cbar(d_0, v_1) = 1$. Do the same for each $(v_0, d_1) \in V_0 \times D_1$. Notice that we construct both the adjacency matrix and the cost function implicitly, because an explicit construction would take $\Omega(n^2)$ time.
We have $\bar n := |\Vbar_0| = |\Vbar_1| = (2-\beta + \xi)n$.

Set $\delta = \frac{\xi n}{2\bar n} $. Run the algorithm in hypothesis on $\Gbar$ and let $\hat c$ and $\match_{\hat M}$ be its outputs.
We set $\tilde c = \hat c - (2- 2\beta + \xi) n$. We set $\tilde M := \hat M \cap (V_0 \times V_1)$ and implement $\match_{\tilde M}(u)$ as follows. Let $v \leftarrow \match_{\hat M}(u)$. If $v = \bot$, return $\bot$. If either $u$ or $v$ is dummy, return $\bot$; else return $v$. 
It is easy to see that $\Mbar^1 \cap (V_0 \times V_1)$ is a min-weight matching of size $(\beta - \xi) n$ in $G$, hence 
\[
\cbar(\Mbar^1) = c(M^{\beta - \xi}) + |D_0| + |D_1| = c(M^{\beta - \xi}) + 2 (1-\beta + \xi)n \leq c(M^\beta) + (2-2\beta + \xi) n.
\]
On the other hand, in $\Mbar^{1-\delta}$ at most $2 \delta \bar n = \xi n$ dummy vertices are left unmathced, so at least $(2-2\beta + \xi)n$ dummy vertices are matched in $\Mbar^{1-\delta}$. Moreover, $\Mbar^{1-\delta} \cap (V_0 \times V_1)$ is a matching in $G$ of size $ \geq n - (1-\beta +\xi) n - \delta \bar n = (\beta - 2\xi) n$. Hence, 
\[
\cbar(\Mbar^{1-\delta}) \geq c(M^{\beta -2\xi}) + (2-2\beta+\xi) n \geq c(M^\alpha) + (2-2\beta +\xi) n.
\]
Thus, $\cbar(\Mbar^{1-\delta}) \leq \hat c \leq \cbar(\Mbar^1)$ implies $c(M^\alpha) \leq \tilde c \leq c(M^\beta)$.

Now we prove the bounds on $\tilde M$. We have that since $|\hat M| \geq (1-\delta) \bar n$, then at most $2\delta \bar n = \xi n$ dummy vertices are left unmatched in $\hat M$ and so
\begin{align*}
c(\tilde M) &\leq \cbar(\hat M) - (2(1-\beta + \xi)n - 2\delta \bar n) \\
&\leq \cbar(\Mbar^1) -(2 -2\beta +\xi) n \\
&= c(M^{\beta - \xi}) + 2(1-\beta -\xi)n -(2 -2\beta +\xi) n \\ 
&= c(M^{\beta - \xi}) + \xi n \leq c(M^\beta).  
\end{align*}
Moreover, $|\tilde M| \geq (1-\delta)\bar n -2(1-\beta + \xi) n = (\beta - \frac{3}{2}\xi)n \geq \alpha n$.
\end{proof}

Finally, we can prove \Cref{thm:main theorem mwm}.

\begin{proof}[Proof of \Cref{thm:main theorem mwm}]
We notice that combining \Cref{thm:weaker main thm mwm} and \Cref{thm:template algo in sublinear time} we have a sublinear implementation of \Cref{alg:template algorithm} that takes a graph bipartite graph $G=(V_0 \cup V_1, E)$ and a cost function $c:E \rightarrow [1, C]$ as input, outputs an estimate $\hat c$ and a matching oracle $\match_{\hat M}$. The estimate $\hat c$ satisfies $c(M^{1-\delta}) \leq \hat c \leq c(M^1)$, $\hat M$ satisfies $|\hat M| \geq (1-\delta) n$ and $c(\hat M) \leq c(M^\opt)$. Moreover, such algorithm runs in time $O(n^{2-\varepsilon})$ and $\match_{\hat M}$ runs in time $O(n^{1+\varepsilon})$ for some $\varepsilon > 0$.

Then, combining \Cref{lem:dummy vertices}, \Cref{lem:cost rounding} and \Cref{lem:find w bar} we obtain an algorithm that takes as input a bipartite graph $G=(V_0 \cup V_1, E)$ endowed with a cost function $c:E\rightarrow \bbR^+$, outputs an estimate $\hat c$ and a matching oracle $\match_{\hat M}$ such that (whp) $c(M^\alpha) \leq \hat{c} \leq c(M^\beta)$, $|\hat M| \geq \alpha n$, and $c(\hat M) \leq c(M^\beta)$. Moreover such algorithm runs in time $O(n^{2-\varepsilon})$ and $\match_{\hat M}$ runs in time $O(n^{1+\varepsilon})$ for some $\varepsilon > 0$.
\end{proof}

\subsection{Proof of \Cref{thm:main theorem intro EMD} and \Cref{thm:main theorem intro EMD generalized}}

Since \Cref{thm:main theorem intro EMD generalized} is more general than \Cref{thm:main theorem intro EMD} we simply prove the former.

\MainThmIntroEMDGen*

Fix a constant $\gamma > 0$.
From each probability distribution $\mu, \nu$ we sample (with replacement) a multi-set of $m = \Theta(n \log(n))$ points. We use $V_{\mu}, V_{\nu}$ to denote the respective multi-sets, and $\hat{\mu}, \hat{\nu}$ to denote the empirical distributions of sampling a random point from $V_{\mu}, V_{\nu}$.
Let $\cT_\mu, \cT_\nu$ be the transport plans realizing $\emd(\mu, \mu')$ and $\emd(\nu, \nu')$ respectively. Namely, $\cT_\mu$ is a coupling between $\mu$ and $\mu'$ such that $\emd(\mu, \mu') = E_{(x, y)\sim \cT_\mu}[d(x, y)]$ and likewise for $\cT_\nu$.
For each sample $x$ in $\hat \mu$ we sample $x' \sim \cT(x, \cdot)$ and let $V'_\mu$ be the multi-set of samples $x'$ for $x \in V_\mu$. Define $V'_\nu$ similarly. Let $\hat \mu'$ and $\hat \nu'$ be the empirical distributions of sampling a random point from $V'_\mu$ and $V'_\nu$. 

\begin{lemma}
\label{lem:empirical are close}
$\emd(\hat \mu, \hat \mu') \leq \xi + \gamma$ with high probability.
\end{lemma}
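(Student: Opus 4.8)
The coupling $\cT_\mu$ between $\mu$ and $\mu'$ already gives, in expectation, transport cost $\emd(\mu,\mu')\le\xi$. The multi-set $V'_\mu$ was built precisely by pushing each sampled point $x\in V_\mu$ through $\cT_\mu(x,\cdot)$, so the map $x\mapsto x'$ is a valid coupling between $\hat\mu$ and $\hat\mu'$ (both are the uniform distribution over the respective multi-sets, indexed by the same sample set). Hence $\emd(\hat\mu,\hat\mu')\le\frac1m\sum_{i=1}^m d(x_i,x_i')$, where $(x_i,x_i')\sim\cT_\mu$ i.i.d.\ (over the randomness of the $m$ samples from $\mu$ together with the conditional draws $x_i'\sim\cT_\mu(x_i,\cdot)$). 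So the statement reduces to a concentration bound: the empirical average of $m=\Theta(n\log n)$ i.i.d.\ random variables $d(x_i,x_i')\in[0,1]$ with mean $\emd(\mu,\mu')\le\xi$ is, with high probability, at most $\xi+\gamma$.

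First I would make the coupling claim explicit: since $\hat\mu$ assigns mass $1/m$ to each index $i$ (with multiplicity) and $\hat\mu'$ assigns mass $1/m$ to $x_i'$, the deterministic matching $i\mapsto i$ transports $\hat\mu$ to $\hat\mu'$, so $\emd(\hat\mu,\hat\mu')\le\frac1m\sum_i d(x_i,x_i')$. Second, I would invoke a Chernoff/Hoeffding bound: each term lies in $[0,1]$, they are independent, and their common expectation is $E_{(x,x')\sim\cT_\mu}[d(x,x')]=\emd(\mu,\mu')\le\xi$. For $m=\Theta(n\log n)$ (with a large enough constant depending on $\gamma$), Hoeffding gives $\Pr[\frac1m\sum_i d(x_i,x_i')>\xi+\gamma]\le\exp(-2m\gamma^2)\le n^{-\omega(1)}$, i.e.\ high probability. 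Combining the two, $\emd(\hat\mu,\hat\mu')\le\xi+\gamma$ whp.

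There is essentially no hard part here; the only point needing a little care is the independence structure. The pairs $(x_i,x_i')$ are i.i.d.: $x_i$ is drawn i.i.d.\ from $\mu$, and then $x_i'$ is drawn from the fixed conditional kernel $\cT_\mu(x_i,\cdot)$ independently across $i$, so $(x_i,x_i')$ is an i.i.d.\ sequence with law $\cT_\mu$. One should also note that $\xi$ and $\gamma$ are constants, so $m=\Theta(n\log n)$ samples suffice; if one prefers to avoid assuming $\xi$ constant, the same bound works verbatim since the deviation $\gamma$ is what controls the exponent. The identical argument applied to $\nu$ gives $\emd(\hat\nu,\hat\nu')\le\xi+\gamma$ whp, and a union bound over the two events preserves high probability — this is presumably recorded in the surrounding text rather than inside this lemma.
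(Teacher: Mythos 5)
Your proposal is correct and follows essentially the same route as the paper: both bound $\emd(\hat\mu,\hat\mu')$ by the cost of the index-wise coupling $\frac1m\sum_i d(x_i,x_i')$, identify its mean as $\emd(\mu,\mu')\le\xi$, and conclude by concentration of the i.i.d.\ bounded terms. Your version is in fact slightly more careful than the paper's (which only notes the variance bound), since making the coupling explicit and invoking Hoeffding is what actually delivers the ``with high probability'' guarantee.
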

\begin{proof}
$\E[\emd(\hat \mu, \hat \mu')] \leq \E\ld[\frac{1}{m} \cdot \sum_{x \in V_\mu} d(x, x')\rd] = \E_{(x, x') \sim \cT_\mu}[d(x, x')] = \emd(\mu, \mu') \leq \xi$. Moreover, $\Var_{(x, x') \sim \cT_\mu}[d(x, x')] \leq 1$, thus $m = \Theta(n\log n)$ ensures $\emd(\hat \mu, \hat \mu') \leq \xi + \gamma$ whp.
\end{proof}

\begin{lemma}
\label{lem:empirical close to discretized}
$\emd(\hat \mu', \mu') \leq \tv(\hat \mu', \mu') \leq \gamma$ with high probability.
\end{lemma}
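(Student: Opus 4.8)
The plan is to first unpack the construction and observe that $V'_\mu$ is, by definition, a multi-set of $m = \Theta(n\log n)$ i.i.d.\ samples from $\mu'$: each $x \in V_\mu$ is an independent draw from $\mu$, and conditioned on $x$ we draw $x' \sim \cT_\mu(x, \cdot)$, so the pair $(x, x')$ is distributed according to the coupling $\cT_\mu$ and hence $x'$ has marginal distribution exactly $\mu'$; independence across the $m$ samples is immediate. Thus $\hat\mu'$ is the empirical distribution of $m$ i.i.d.\ draws from $\mu'$, a distribution of support size at most $n$. The first inequality, $\emd(\hat\mu', \mu') \leq \tv(\hat\mu', \mu')$, is then the standard fact that EMD is dominated by TV distance on a metric space of diameter $1$: place mass $\min(\hat\mu'(z), \mu'(z))$ on each diagonal pair $(z,z)$, incurring zero cost since $d(z,z)=0$, and couple the two leftover pieces, each of total mass $\tv(\hat\mu', \mu')$, arbitrarily; since every pair costs at most $1$, the resulting coupling has cost at most $\tv(\hat\mu', \mu')$.

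For the second inequality I would bound $\E[\tv(\hat\mu', \mu')]$ and then invoke a concentration argument. Writing $\tv(\hat\mu', \mu') = \tfrac12 \sum_z |\hat\mu'(z) - \mu'(z)|$ with the sum over the (at most $n$) support points of $\mu'$, and using that $m\,\hat\mu'(z)$ is $\mathrm{Binomial}(m, \mu'(z))$, we get $\E\,|\hat\mu'(z) - \mu'(z)| \leq \sqrt{\Var\,\hat\mu'(z)} \leq \sqrt{\mu'(z)/m}$. Summing over the $\leq n$ support points and applying Cauchy--Schwarz gives $\E[\tv(\hat\mu', \mu')] \leq \tfrac12\sqrt{n/m} = O(1/\sqrt{\log n})$, which is below $\gamma/2$ once $n$ is large enough. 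Finally, as a function of the $m$ independent samples, $\tv(\hat\mu', \mu')$ has bounded differences $1/m$ (resampling one point shifts two coordinates of $\hat\mu'$ by $1/m$ each, hence the TV sum by at most $2/m$), so McDiarmid's inequality yields $\Pr[\tv(\hat\mu', \mu') \geq \E[\tv(\hat\mu', \mu')] + \gamma/2] \leq \exp(-\Omega(\gamma^2 m)) = n^{-\omega(1)}$. Combining the expectation bound with this tail bound gives $\tv(\hat\mu', \mu') \leq \gamma$ with high probability.

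There is no genuinely hard step here; the lemma is a routine empirical-estimation argument. The one place that deserves a moment of care is the variance-plus-Cauchy--Schwarz calculation, since that is precisely where the bounded-support hypothesis ($\mu'$ has support size $n$) of the corollary is consumed, and one should double-check that the choice $m = \Theta(n\log n)$ is simultaneously large enough for the expected TV distance to drop below $\gamma/2$ and for the McDiarmid tail to be $1/\poly(n)$-small.
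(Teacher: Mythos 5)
Your proof is correct, and for the quantitative part it takes a genuinely different route from the paper's. Both proofs begin with the same two observations — that $V'_\mu$ is a multi-set of $m$ i.i.d.\ samples from $\mu'$ (since drawing $x \sim \mu$ and then $x' \sim \cT_\mu(x,\cdot)$ produces $x'$ with marginal $\mu'$), and that $\emd \leq \tv$ on a diameter-$1$ space via the diagonal coupling. For the bound $\tv(\hat\mu',\mu') \leq \gamma$, however, the paper partitions the support into ``heavy'' atoms (mass at least $\gamma/4n$), on which it applies a per-atom Chernoff bound to get $(1\pm\gamma/4)$-multiplicative concentration, and ``light'' atoms, whose total contribution to the samples is at most a $\gamma/2$-fraction by a counting argument; summing the two contributions gives TV at most $\gamma$. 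You instead bound the expectation globally, $\E[\tv(\hat\mu',\mu')] \leq \tfrac12\sum_z \sqrt{\mu'(z)/m} \leq \tfrac12\sqrt{n/m} = O(1/\sqrt{\log n})$ via Cauchy--Schwarz, and then concentrate with McDiarmid using the $1/m$ bounded-differences property. Both arguments consume the support-size-$n$ hypothesis at the analogous place (your Cauchy--Schwarz step over at most $n$ atoms; the paper's threshold $\gamma/4n$ defining heavy atoms), and both are routine; yours is the textbook empirical-TV bound and arguably cleaner, giving the slightly sharper statement $\tv(\hat\mu',\mu') = O(1/\sqrt{\log n}) + \gamma/2$ rather than just $\gamma$, while the paper's heavy/light decomposition avoids invoking McDiarmid. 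Your caveat that $n$ must be large enough for $O(1/\sqrt{\log n}) \leq \gamma/2$ is harmless since $\gamma$ is a fixed constant and the claim is asymptotic.
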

\begin{proof}
First, we observe that $V_\mu'$ is distributed as a multi-set of $m$ samples from $\mu'$.
For any point $x'$ with at least $(\gamma/4n)$-mass in $\mu'$, we expect $\Omega(\log n)$ samples of $x'$ in $V'_\mu$, so by Chernoff bound we have that with high probability the number of samples of $x'$ concentrates to within $(1 \pm \gamma/4)$-factor of its expectation. Furthermore, with high probability at most $(\gamma/2)$-fraction of the samples correspond to points with less than $(\gamma/4n) $-mass in the original distribution. Thus overall, the empirical distribution $\hat{\mu}'
$ is within $\gamma$ TV distance of $\mu'$.
Finally, $\emd(\hat \mu', \mu') \leq \tv(\hat \mu', \mu')$ beacuse $d(\cdot, \cdot) \in [0, 1]$.
\end{proof}

\begin{lemma}
\label{lem:putting things together}
$\emd(\mu, \hat \mu) \leq \xi + 2\gamma$ with high probability.
\end{lemma}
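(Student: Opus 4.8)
The plan is to derive the bound by a single application of the triangle inequality for $\emd$, chaining the three estimates that are already in hand: $\emd(\mu,\mu')\le\xi$ (the hypothesis of \Cref{thm:main theorem intro EMD generalized}), $\emd(\hat\mu',\mu')\le\gamma$ (\Cref{lem:empirical close to discretized}), and $\emd(\hat\mu,\hat\mu')\le\xi+\gamma$ (\Cref{lem:empirical are close}). Concretely I would write
\[
\emd(\mu,\hat\mu)\;\le\;\emd(\mu,\mu')+\emd(\mu',\hat\mu')+\emd(\hat\mu',\hat\mu),
\]
take the two high-probability events of the cited lemmas simultaneously via a trivial union bound, and substitute. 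Since $\gamma$ is an arbitrary positive constant, the resulting estimate is what the lemma records; the one genuinely substantive point is justifying the $\emd$ triangle inequality itself.

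Because the paper is careful not to assume that the ground distance $d$ satisfies the triangle inequality, I would not merely cite ``$\emd$ is a metric'' but exhibit the transport plan explicitly, by gluing the three couplings along their shared marginals. Start from $X\sim\mu$; draw $X'\sim\cT_\mu(X,\cdot)$, so that $X'\sim\mu'$; draw $Y'$ from a maximal coupling of $\mu'$ and $\hat\mu'$ conditioned on $X'$, so that $Y'\sim\hat\mu'$ and $\Pr[X'\neq Y']\le\tv(\mu',\hat\mu')$; finally draw $\tilde X$ from the conditional law of the first coordinate of the empirical coupling $\frac{1}{m}\sum_i\delta_{(x_i,x_i')}$ given that its second coordinate equals $Y'$, so that $\tilde X\sim\hat\mu$. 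Marginalizing out $X'$ and $Y'$, the pair $(X,\tilde X)$ is an honest coupling of $\mu$ and $\hat\mu$, and
\[
\emd(\mu,\hat\mu)\;\le\;\E[d(X,\tilde X)]\;\le\;\E[d(X,X')]+\E[d(X',Y')]+\E[d(Y',\tilde X)].
\]
Here $\E[d(X,X')]=\emd(\mu,\mu')\le\xi$; the middle term is at most $\tv(\mu',\hat\mu')\le\gamma$ because $d\le 1$; and $\E[d(Y',\tilde X)]$ is exactly the cost $\frac1m\sum_i d(x_i,x_i')$ of the empirical coupling, which is $\le\xi+\gamma$ w.h.p.\ by the proof of \Cref{lem:empirical are close}. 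Summing the three terms finishes the argument. I note in passing that the step $d(X,\tilde X)\le d(X,X')+d(X',Y')+d(Y',\tilde X)$ is the single place where $d$ being a metric is used, so this EMD corollary — unlike the matching-style results of the paper — does rely on the triangle inequality.

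I expect the only real friction to be bookkeeping: verifying that the glued object is a genuine coupling, which hinges on $\hat\mu'$ being precisely the second marginal of the empirical coupling $\frac1m\sum_i\delta_{(x_i,x_i')}$ and on $\mu'$ being the second marginal of $\cT_\mu$, and on propagating the ``with high probability'' qualifiers correctly through the union bound. Finally, the identical argument with $(\nu,\nu',\hat\nu,\hat\nu')$ in place of $(\mu,\mu',\hat\mu,\hat\mu')$ yields the twin bound for $\emd(\nu,\hat\nu)$, and these two estimates are exactly what the remainder of the proof of \Cref{thm:main theorem intro EMD generalized} consumes when it transfers the value computed for $\emd(\hat\mu,\hat\nu)$ back to $\emd(\mu,\nu)$.
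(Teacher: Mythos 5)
Your proposal is correct and follows essentially the same route as the paper: chain $\emd(\mu,\hat\mu)\le\emd(\mu,\mu')+\emd(\mu',\hat\mu')+\emd(\hat\mu',\hat\mu)$ and plug in \Cref{lem:empirical are close} and \Cref{lem:empirical close to discretized}; your explicit gluing construction merely supplies the standard justification of the $\emd$ triangle inequality that the paper leaves implicit (and which is legitimate here, since \Cref{thm:main theorem intro EMD generalized} does assume $d$ is a metric). One small point: your three terms sum to $2\xi+2\gamma$, not the $\xi+2\gamma$ stated in the lemma — this matches the paper's own proof, whose displayed chain also ends at $2\xi+2\gamma$ and which feeds the $4\xi$ in the corollary, so the lemma's stated constant appears to be a typo rather than a flaw in your argument.
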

\begin{proof}
Combining \Cref{lem:empirical are close}, \Cref{lem:empirical close to discretized} we obtain the following
\[
\emd(\mu, \hat \mu) \leq \emd(\mu, \mu') + \emd(\mu', \hat \mu') + \emd(\hat \mu', \hat \mu) \leq  2\xi + 2\gamma.
\]
\end{proof}

\begin{proof}[Proof of \Cref{thm:main theorem intro EMD generalized}]

We consider the bipartite graph with a vertex for each point in $V_{\mu}, V_{\nu}$ and edge costs induced by $d(\cdot,\cdot)$. We apply the algorithm guaranteed by \Cref{thm:main theorem mwm} to find an estimate of the min-weight matching over between $(1-\gamma) m$ and $m$ vertices. 
We return the cost estimate $\widehat{\emd}$ on the bipartite graph (normalized by dividing by $m$). 
\Cref{thm:main theorem mwm} guarantees that $\widehat{\emd} \in [\emd(\hat \mu, \hat \nu) \pm \gamma]$. Then using triangle inequality on $\emd$, as well as \Cref{lem:putting things together} on both $\mu$ and $\nu$ we obtain

\[
\ld|\widehat{\emd} - \emd(\mu, \nu)\rd| \leq \ld|\widehat{\emd} - \emd(\hat \mu, \hat \nu)\rd| + \emd(\mu, \hat \mu) + \emd(\nu, \hat \nu) \leq
4\xi + 5\gamma.
\]
Scaling $\gamma$ down of a factor $5$ we retireve \Cref{thm:main theorem intro EMD generalized}.
\end{proof}

\subsection{Proof of \Cref{thm:main theorem intro graph}}

\MainThmIntroGraph*

\begin{proof}
Let $M$ be the maximum matching in $G$ with total cost at most $B$, and let $|M| = \xi n$.
We perform a binary search for $\xi$ using the algorithm from \Cref{thm:main theorem mwm} as a subroutine. This loses only a factor $\log(n)$ in query complexity, which gets absorbed in $O(n^{2-\varepsilon})$ by choosing a suitable constant $\varepsilon$.
\end{proof}

\paragraph{Acknowledgments.}
We thank Erik Waingarten for inspiring discussions.
We thank Tal Herman and Greg Valiant for pointing out the sample complexity lower bound implied by \cite{valiant2010clt}. 

\printbibliography
\end{document}